\newtheorem{defn}{Definition}
\newtheorem{obs}{Observation}
\newtheorem{thm}{Theorem}
\newtheorem{lem}{Lemma}
\newtheorem{rmk}{Remark}
\newtheorem{cor}{Corollary}
\begin{document}

\title{Quality Sensitive Price Competition in Spectrum Oligopoly-Part 1}

\author{Arnob~Ghosh and
        Saswati~Sarkar
\thanks{The authors are with the Department
of Electrical and Systems Engineering, University Of Pennsylvania, Philadelphia,
PA, USA. Their E-mail ids are arnob@seas.upenn.edu and swati@seas.upenn.edu.} 
\thanks{Parts of this paper have been presented in ISIT\rq{}2013, Istanbul\cite{isit}.}}




\maketitle
\begin{abstract}
We investigate a spectrum oligopoly market where each primary seeks to sell its idle channel to a secondary. Transmission rate of a channel evolves randomly. Each primary needs to select a price depending on the transmission rate of its channel. Each secondary selects a channel depending on the price and the transmission rate of the channel. We formulate the above problem as a non-cooperative game. We show that there exists a unique Nash Equilibrium (NE) and explicitly compute it. Under the NE strategy profile a primary prices its channel to render the channel which provides high transmission rate more preferable; this negates the perception that prices ought to be selected to render channels equally preferable to the secondary regardless of their transmission rates.We show the loss of revenue in the asymptotic limit due to the non co-operation of primaries. In the repeated version of the game, we characterize a subgame perfect NE where a primary can attain a payoff arbitrarily close to the payoff it would obtain when primaries co-operate. 
\end{abstract}
\vspace{-0.3cm}
\section{Introduction}
To accommodate the ever increasing traffic in wireless spectrum, FCC has legalized the unlicensed access of a part of the licensed spectrum band- which is known as {\em secondary access}. However, secondary access will only proliferate when it is rendered profitable to license holders (primaries)\footnote{Primaries are unlikely to encourage \lq\lq{}free ride\rq\rq{} where secondaries (unlicensed users) can access idle channel without paying for them. If primaries are not compensated, they may therefore create artificial activity by transmitting some noise to deny secondaries.}. We devise a framework to enable primaries to decide price they would charge unlicensed users (secondaries). The price will depend on the transmission rate which evolves randomly owing to usage of subscribers of primaries and fading. A secondary receives a payoff from a channel depending on the transmission rate offered by the channel and the price quoted by the primary. Secondaries buy those channels which give them the highest payoff, which leads to a {\em competition} among primaries. 

Price selection in oligopolies has been extensively investigated in economics as a non co-operative Bertrand Game \cite{mwg} and its modifications \cite{Osborne, Kreps}. Price competition among wireless service providers have also been explored to a great extent \cite{Ileri, Mailespectrumsharing, Mailepricecompslotted, Xing, Niyatospeccrn, Niyatomultipleseller,Zhou,kavurmacioglu,yitan,duan,zhang,jia,yang,sengupta,kim}. We  divide this genre of works in two parts: i) Papers which model price competition as Auction (\cite{sengupta,Xu}), and ii) Papers which model the price competition as a non co-operative game (\cite{Ileri, Mailespectrumsharing, Mailepricecompslotted, Xing, Niyatospeccrn, Niyatomultipleseller, kavurmacioglu,jia,yang,yitan,zhang,lin,duan,kim}). We now distinguish our work with respect to these papers. As compared to the genre of work  in the first category our model is readily scalable and a central auctioneer is not required. Some papers in the second category \cite{jia,yang,yitan,zhang,duan,kim} considered the quality of primaries as a factor while selecting the price. But all of these papers mentioned above, ignore two important properties which distinguish spectrum
oligopoly from standard oligopolies:  First, a primary selects a price knowing only the transmission rate of its own channel; it is unaware of transmission rates offered by channels of its competitors. Thus, if a primary quotes a high price, it will earn a large profit if it sells its channel, but may not be able to sell at all; on the other hand a low price will enhance the probability of a sale but may also fetch lower profits in the event of a sale. Second, the same spectrum band can be utilized simultaneously at geographically dispersed locations without interference; but the same band can not be utilized simultaneously at interfering locations; this special feature, known as {\em spatial reuse} 
adds another dimension in the strategic interaction
as now a primary has to cull a set of non-interfering locations which is denoted as an {\em independent set}; at which to offer its channel apart from selecting a price at every node of that set. We have accommodated both of the above the uncertainty of competition in this paper. Building on the results we obtain in this paper, we have accommodated both of the above characteristics in the sequel.

 Some recent works (\cite{Gaurav1,Janssen,Kimmel,gauravjsac}) that consider uncertainty of competition and the impact of spatial reuse (\cite{Zhou} and \cite{gauravjsac}) assume that the commodity on sale can be in one of two states: available or otherwise.  This assumption does not capture different transmission rates offered by available channels. A primary may now need to employ different pricing strategies and different independent set selection strategies for different transmission rates, while in the former case a single pricing and independent set selection strategy will suffice as a price needs not be quoted for an unavailable commodity. Our investigation seeks to contribute in this space.

We have studied the price competition in spectrum oligopoly in a sequel of two papers. Overall a primary has to select an independent set and a pricing strategy in each location of the independent set. In this paper we focus only on the pricing strategy of primaries by studying the game for only one location. The results that we have characterized provide indispensable tools for studying the joint decision problem involving multiple locations. In the sequel to this paper, we provide a framework for solving the joint decision problem building on the single location pricing strategy. In addition, initially secondary market is likely to be introduced in geographically dispersed locations; which are unlikely to interfere with each other, thus, the price competition in each location reduces to a single location pricing problem which we solve in this paper.  The reduced problem turns out to be of independent interest as it exhibits certain desirable properties which no longer hold when there are multiple locations. 

We formulate the price selection as a game in which each primary selects a price depending on the transmission rate its channel provides. Since prices can take real values, the strategy space is uncountably infinite; which precludes the guarantee of the existence of an Nash Equilibrium (NE) strategy profile.  Also standard algorithms for computing an NE strategy profile do not exist unlike when the strategy space is finite. 

First, we consider that primaries interact only once (Section~\ref{sec:onenodegame}). We consider that the preference of the secondaries can be captured by a penalty function which associates a penalty value to each channel that is available for sale depending on its transmission rate and the price quoted. We show that for a large class of penalty functions, there exists a {\em unique} NE strategy profile, which we explicitly compute (Section~\ref{sec:computation}). In the sequel to this paper, we show that this is no longer the case when a primary owns a channel over multiple locations.

  We show that the unique NE strategy profile is symmetric i.e. price selection strategy of all primaries are statistically identical. Our analysis reveals that primaries select price in a manner such that the preference order of transmission rates is retained. This negates the intuition that prices ought to be selected so as to render all transmission rates equally preferable to a secondary. The analysis also reveals that the unique NE strategy profile consists of "nice" cumulative distributions in that they are continuous and strictly increasing; the former rules out pure strategy NEs and the latter ensures that the support sets are contiguous. 
  
  Subsequently, utilizing the explicit computation algorithm for the symmetric NE strategies, %
we analytically investigate the reduction in expected profit suffered under the unique symmetric NE pricing strategies as compared to the maximum possible value allowing for collusion among primaries (Section~\ref{sec:slnumerical}).Finally, we extend our one shot game at single location, to a repeated game where primaries interact with each other multiple number of times (Section~\ref{sec:repeatedgame}) and compute a subgame perfect Nash equilibrium (SPNE) in which a primary attains a payoff which is arbitrarily close to the payoff that a primary would have obtained if primaries were select price jointly; thus, price competition does not lower payoff. 

{\em All proofs are relegated to Appendix}.




\section{System Model} 
\label{sec:model}
  We consider a spectrum market with $l (l\geq 2)$ primaries. Each primary owns a channel. Different channels leased by primaries to secondaries constitute disjoint frequency bands. There are $m$ secondaries. We initially consider the case when primaries know $m$, later generalize our results for random, apriori unknown $m$ (Section~\ref{sec:random}). 
The channel of a primary provides a certain transmission rate to a secondary who is granted access. Transmission rate  (i.e. Shanon Capacity) depends on  1) the number of subscribers of a primary that are using the channel\footnote{Shanon Capacity \cite{cover} for user $i$ at a channel is equal to $\log\left(1+\dfrac{p_{i}h_i}{\sum_{j\neq i}p_jh_j+\sigma^2}\right)$ where $p_k$ is the power with which user $k$ is transmitting, $\sigma^2$ is the power of white noise, $h_k$ is the channel gain between transmitter and receiver which depends on the propagation condition. If a secondary is using the channel then $p_i, h_i$ of the numerator are the attributes associated with the secondary while $p_j, h_j j\neq i$ are those of the subscribers of the primaries. In general, the power $p_j$ for subscriber of primaries is constant for subscriber $j$ of primary, but the number of subscribers vary randomly over time. The power $p_i$ with which a secondary will transmit may be a constant or may decrease with the number of subscribers of primaries in order to limit the interference caused to each subscriber. The above factors contributes to the random fluctuation in the capacity of a channel offered to a secondary. In our setting $p_i, h_i$s are assumed to be the same across the secondaries for a channel which we justify later. However, these values can be different for different channels.} and 2) the propagation condition of the radio signal. The transmission rate evolves randomly over time owing to the random fluctuations of the usage of subscribers of primaries and the propagation condition\footnote{Referring to footnote 2, $h_k$ and $\sigma^2$ evolve randomly owing to the random scattering of the particles in the ionosphere and troposphere; this phenomenon is also known as {\em fading}.}. We assume that at every time slot, the channel of a primary belongs to one of the states $0, 1, \ldots, n$\footnote{We discretize the available transmission rates into a fixed number of states $n$. This is a standard approximation to discretize the continuous function\cite{fischer, Luo}. The corresponding inaccuracy becomes negligible with increase in $n$.}. State $i$ provides a lower transmission rate to a secondary than state $j$ if $i < j$ and state $0$ arises when
  the channel is not available for sale i.e. secondaries can not use the channel when it is in state $0$\footnote{Generally a minimum transmission rate is required to send data. State $0$ indicates that the transmission rate is below that threshold due to either the excessive usage of subscribers of primaries or the transmission condition.}. A channel is in state $i \geq 1 $ w.p. $q_i>0$ and in state $0$ w.p. $1-q$ where $q = \sum_{i=1}^n q_i$, independent of the channel states of other primaries \footnote{We have shown  that at a given slot, the channel state differs across the primaries mainly because of the differences of  i) the number of subscribers that are using the channel and ii) the propagation conditions.  Since different primaries have different subscriber bases, thus, their usage behaviors are largely uncorrelated. Also,  channels of different primaries operate on different frequency bands and have different noise levels, thus, the propagation conditions are also uncorrelated across the channels.}. Thus, the state of the channel of each primary is independent and identically distributed. We do not make any assumption on the relationship between $q_i$ and $l$ or $q_i$ and $m$.\footnote{ Since each primary sells its channel to only one secondary, thus, referring to footnotes 2 and 3 transmission rate (or $q_i$) at a channel does not depend on $m$ (secondary demand) in practice.}.  
 We assume
\begin{align}\label{prob}
q<1.
\end{align}
 
 We assume that the transmission rate offered by the channel of a primary is the same to all secondaries.  We justify the above assumption in the following. We consider the setting where the secondaries are one of the following types: i) Service provider who does not lease spectrum from the FCC and serves the end-users through secondary access, ii) end-users who directly buy a channel from primaries.  In initial stages of deployment of the secondary market, secondaries will be of the first type. When the secondaries are of the first type, then a primary would not know the transmission rate to the end-users who are subscribers of the service provider. A primary measures the channel qualities across different positions in the locality (e.g. a cell) and considers the average as the channel quality that an end-user subscribed to a secondary service provider will get at the location (e.g. a cell). This average will be identical across different end-users subscribed to different secondary service providers and hence, the channel quality is identical across the secondaries.  

If the secondaries are of the second type, then the primary may know the transmission rate that an end-user will attain which may be different at different positions. However, if a primary needs to select a price for each position of the location, then it needs to compute the transmission rate at each possible position at that location (e.g. a cell). Thus, the computation and storage requirement for the primary would be large. Such position based pricing scheme will also not be attractive to the end-users since they may perceive it discriminatory as the price changes when its position changes within a location (e.g. a cell).  Thus, such a position based pricing scheme may not be practically implementable. Hence, a primary estimates the channel quality and decides the price for the estimated channel quality by considering that the channel quality will not significantly vary across the location. This is because  end-users who are interested to buy the channel from a primary at a location most likely have similar propagation paths: for example, secondary users who buy the channels are most often present in buildings (e.g. shopping complex, an office or residential area). The distance from the base station of the primary to the end-users is also similar because the end-users are close to each other in a location. Thus, the path loss component will also similar. Hence, the channel quality is considered to be identical across the secondaries. 

Though the quality of a channel  is identical for secondaries, the quality can vary across the channels. A primary can get an estimate of the transmission rate by sending a pilot test signal at different positions with the location and then, applying some  standard estimation techniques\cite{estimation}.
 
\subsection{Penalty functions of Secondaries and Strategy of Primaries}
Each primary selects a price for its channel if it is available for sale.
We formulate the decision problem of primaries as a non-cooperative game. A primary selects a price with the knowledge of the state of its channel, but without knowing the states of the other channels; a primary however knows $l, m, n, q_1, \ldots, q_n$. 

Secondaries are passive entities. They select channels depending on the price and the transmission rate a channel offers. We assume that the preference of secondaries can be represented by a penalty function. If a primary selects a price $p$ at channel state $i$, then the channel incurs a penalty $g_i(p)$ for all secondaries. As the name suggests, a secondary prefers a channel with a lower penalty. Since lower prices should induce lower penalty, thus, we assume that each $g_i(\cdot)$ is strictly increasing; therefore, $g_i(\cdot)$ is invertible. A primary selects a price for its available channel, but, since there is an one-to-one relation between the price and penalty at each state, we can equivalently consider that primaries select penalties instead. For a given price, a channel of higher transmission rate must induce lower penalty, thus, $g_i(p)<g_j(p)$ if $i>j$. We can also consider that desirability of a channel for a secondary is the negative of penalty. A secondary will not buy any channel whose desirability falls below a certain threshold, equivalently, whose penalty exceeds a certain threshold. We consider that such threshold is the same for each secondary and we denote it as $v$ i.e. no secondary will buy any channel whose penalty exceeds $v$.  Secondaries have the same penalty function and the same upper bound for penalty value ($v$), thus, secondaries are statistically identical. 

Primary $i$ chooses its penalty using an arbitrary probability distribution function (d.f.) $\psi_{i,j}(\cdot)$ \footnote{Probability distribution refers cumulative distribution function (c.d.f.). C.d.f. of a random variable $X$ is the function
$G(x), G(x)=P(X\leq x) \quad \forall x\in \Re$ \cite{df}.}
when its channel is in state $j \geq 1$. If $j = 0$ (i.e., the channel is unavailable), primary $i$ chooses a penalty of $v+1$: this is equivalent
 to considering that such a channel is not offered for sale as no secondary
   buys a channel whose penalty exceeds $v$. 
\begin{defn}
A strategy of a primary $i$ for state $j\geq 1$, $\psi_{i,j}(\cdot)$ provides  the penalty distribution it uses at each node, when the channel state is $j\geq 1$. $S_i=(\psi_{i,1},....,\psi_{i,n})$ denotes the strategy of primary $i$, and $(S_1,...,S_l)$ denotes the strategy profile of all primaries (players).
$S_{-i}$ denotes the strategy profile of primaries other than $i$.
\end{defn}

\subsection{Payoff of Primaries}
We denote $f_i(\cdot)$ as the inverse of $g_i(\cdot)$. Thus, $f_i(x)$ denotes the price when the penalty is $x$ at channel state $i$. We assume that $g_i(\cdot)$ is continuous, thus $f_i(\cdot)$ is continuous and strictly increasing.  Also, $f_i(x)<f_j(x)$ for each $x$ and $i<j$.   Each primary  incurs a transition cost $c>0$ when the primary leases its channel to a secondary, and therefore never selects a price lower than $c$.

If primary $i$ selects a penalty $x$ when its channel state is $j$, then its  payoff is \\
\begin{eqnarray}\label{eq:uij}
\begin{cases} f_j(x)-c  & \text{if the primary sells its channel}\nonumber\\
0 & \text{otherwise. }
\end{cases}
\end{eqnarray}
  Note that if $Y$ is the number of channels offered for sale, for which the penalties are upper bounded by $v$, then
those with $\min(Y, m)$ lowest penalties are sold  since secondaries select channels in the increasing order of penalties. The ties among channels with identical penalties are broken randomly and symmetrically
   among the primaries. 
\begin{defn}\label{defn:expectedpayoff}
$u_{i,j}(\psi_{i,j},S_{-i})$ is the expected payoff when primary $i$\rq{}s channel is at state $j$ and selects strategy $\psi_{i,j}(\cdot)$ and other primaries use strategy $S_{-i}$.
\end{defn}
\subsection{Nash Equilibrium}
We use Nash Equilibrium (NE) as a solution concept which we define below
\begin{defn}\cite{mwg}
A \emph{Nash  equilibrium}  $(S_1, \ldots, S_n)$ is a strategy profile such that no primary can improve its expected profit by unilaterally deviating from its strategy.  So, with $S_i=(\psi_{i,1},....,\psi_{i,n})$, $(S_1, \ldots, S_n)$, is  an NE if for each primary $i$ and channel state $j$
\begin{align}
u_{i,j}(\psi_{i,j},S_{-i})\geq u_{i,j}(\tilde{\psi}_{i,j},S_{-i}) \ \forall \ \tilde{\psi}_{i, j}.
\end{align}
An NE $(S_1, \ldots, S_n)$  is a \emph{symmetric NE} if $S_i = S_j$ for all $i, j.$
\end{defn}
An NE is asymmetric if $S_i\neq S_j$ for some $i,j\in \{1,\ldots,l\}$. 


Note that if $m\geq l$, then primaries select the highest penalty $v$ with probability $1$. This is because, when $m\geq l$, then, the channel of a primary will always be sold. Henceforth, we will consider that $m<l$.

\subsection{A Class of Penalty Functions}\label{sec:a class}
Since $g_i(\cdot)$ is strictly increasing in $p$ and $g_i(p)>g_j(p)$ for $i<j$, we focus on penalty functions  of the form $g_i(p)=h_1(p)-h_2(i)$, where $h_1(\cdot)$ and $h_2(\cdot)$ are strictly increasing in their respective arguments. 
Note that $-g_i(p)$ may be considered as a utility that a secondary would get at channel state $i$ when the price is set at $p$. Such utility functions are commonly assumed to be concave \cite{song}; which is  possible only if $g_i(\cdot)$ is convex in $p$ i.e. $h_1(\cdot)$ is convex. It is easy to show that when $g_i(p)=h_1(p)-h_2(i)$ and $h_1(\cdot)$ is convex, satisfies the following property: 

\textbf{Assumption 1}
\begin{align}\label{con1}
\dfrac{f_i(y)-c}{f_j(y)-c}<\dfrac{f_i(x)-c}{f_j(x)-c}  \ \mbox{ for all } x>y > g_i(c), i<j.
\end{align}
Moreover, when $g_i(p)=h_1(p)/h_2(i)$, then, the inequality in (\ref{con1}) is satisfied for some certain convex functions $h_1(\cdot)$ like $h_1(p)=p^r (r\geq 1),\exp(p)$.  In addition, there is also a large set of functions that satisfy (\ref{con1}), such as: $g_i(p) = \zeta\left(p -h_2(i)\right), g_i(p) = \zeta\left(p/h_2(i)\right)$ where
 $\zeta(\cdot)$ is continuous and strictly increasing. Moreover,
 $g_i(\cdot)$ are such that the inverses are of the form $f_i(x)=h(x)+h_2(i), f_i(x) = h(x)*h_2(i)$, where $h(\cdot)$ is {\em any} strictly increasing function, satisfy Assumption 1. Henceforth, we only consider penalty functions which satisfy (\ref{con1}).

We mostly consider $g_i(\cdot)$s which do not depend on $l,m,n, q_1,\ldots,q_n$ (e.g. Fig.~\ref{fig:dist},~\ref{fig:asymptotic_rne},~\ref{fig:efficiency}). But  in some case we also consider that $g_i(\cdot)$ is a function of $n$ (e.g. Fig.~\ref{fig:example_nvaries}).

\section{One Shot Game: Structure, Computation, Uniqueness and Existence of NE}
\label{sec:onenodegame}
 
First, we identify key structural properties of a NE (should it exist).
  Next we show that the above properties lead to a unique strategy profile which we explicitly compute  - thus the symmetric NE is unique should it exist.  We finally prove that the strategy profile resulting from the structural properties above is   indeed a NE thereby establishing the existence. 
  \vspace{-0.5cm}
\subsection{Structure of an NE}\label{sec:singlestructure}
 We provide some important properties that any NE $\left(S_1, \ldots, S_l\right)$ (where $S_i=\{\psi_{i,1},\ldots,\psi_{i,n}\}$) must satisfy. First, we show that each primary must use the same strategy profile (Theorem~\ref{thm:noasymmetricNE}). In the sequel to this paper, we show that this is no longer the case when there are multiple locations. In fact we show that there may be multiple asymmetric NEs when there are multiple locations. We show that under the NE strategy profile a primary selects lower values of the penalties when the channel quality is high (Theorem~\ref{thm2}). We have also shown that $\psi_{i,j}(\cdot)$ are continuous and contiguous (Theorem~\ref{thm1} and \ref{thm3}). 
\begin{thm}\label{thm:noasymmetricNE}
Each primary must use the same strategy i.e. $\psi_{i,j}(\cdot)=\psi_{k,j}(\cdot)$ $\forall i,k\in \{1,\ldots,l\}$ and $j\in \{1,\ldots,n\}$.
\end{thm}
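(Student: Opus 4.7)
The plan is to argue by contradiction: assume $\psi_{i,j}(\cdot)\neq\psi_{k,j}(\cdot)$ for some distinct primaries $i,k$ and some state $j$. The crucial decomposition I would exploit is $u_{i,j}(x,S_{-i})=(f_j(x)-c)\,R_i(x)$, where $R_i(x)$ denotes the probability that primary $i$'s posting at penalty $x$ is bought. Since $R_i(x)$ is determined by $x$ and by $S_{-i}$ alone---not by primary $i$'s own state $j$---the NE indifference condition gives $R_i(x)=\alpha_{i,j}^*/(f_j(x)-c)$ for every $x\in\operatorname{supp}(\psi_{i,j})$, with an analogous expression for primary $k$ in terms of its payoff $\alpha_{k,j}^*$.

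I would next show that the right endpoint of every $\operatorname{supp}(\psi_{i,j})$ equals $v$. The event that no other primary offers a channel has probability at least $(1-q)^{l-1}>0$ since $q<1$, so posting a penalty arbitrarily close to $v$ yields strictly positive expected revenue, which rules out any supremum strictly below $v$. Taking the limit $x\uparrow v$ in the indifference relation one computes
\begin{equation*}
R_i(v^-)=\sum_{\ell=0}^{m-1}\binom{l-1}{\ell}q^{\ell}(1-q)^{l-1-\ell},
\end{equation*}
which is independent of $i$. Substituting back gives $\alpha_{i,j}^*=\alpha_{k,j}^*$ for each $j$, whence $R_i(x)\equiv R_k(x)$ on the intersection of the two supports (which is non-empty in a left-neighbourhood of $v$).

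The crux is to pass from $R_i\equiv R_k$ to equality of distributions. Writing $F_{k'}(x):=\sum_{j'\ge 1}q_{j'}\psi_{k',j'}(x)$ for primary $k'$'s unconditional penalty c.d.f.\ and conditioning on whether primary $k$ (respectively $i$) posts below $x$, one obtains the identity
\begin{equation*}
R_i(x)-R_k(x)\;=\;B(x)\,\bigl(F_i(x)-F_k(x)\bigr),
\end{equation*}
where $B(x):=P\bigl(\text{exactly } m-1 \text{ of the other } l-2 \text{ primaries post a penalty}\le x\bigr)$ is strictly positive on the interior of the common support (using $l>m$ and $0<q<1$). Therefore $F_i\equiv F_k$ there, and the equality propagates to all penalties via an interval argument starting at the common right-endpoint $v$ together with $F_i(v)=F_k(v)=q$. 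To descend from the marginals $F_i,F_k$ to the state-indexed $\psi_{i,j},\psi_{k,j}$, I would invoke Assumption~1: if the supports of $\psi_{i,j}$ and $\psi_{i,j'}$ overlapped on a positive-measure set for some $j\neq j'$, the relations $(f_j(x)-c)R_i(x)=\alpha^*_{i,j}$ and $(f_{j'}(x)-c)R_i(x)=\alpha^*_{i,j'}$ would force $(f_j(x)-c)/(f_{j'}(x)-c)$ to be constant there, contradicting Assumption~1. Hence the per-state supports of any single primary are essentially disjoint, and the identity $\sum_{j'}q_{j'}\bigl(\psi_{i,j'}(x)-\psi_{k,j'}(x)\bigr)\equiv 0$ decomposes state-by-state into $\psi_{i,j}\equiv\psi_{k,j}$, contradicting the initial assumption.

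The step I expect to be most delicate is the regularity underpinning both the left-limit $R_i(v^-)$ and the strict positivity of $B(x)$: both rely on every $\psi_{i,j}$ being atomless with contiguous support, which are precisely the contents of Theorems~\ref{thm1} and~\ref{thm3}. These structural facts therefore have to be proved in tandem with, rather than prior to, the symmetry statement; properly sequencing the three interdependent arguments is the main technical hurdle I foresee.
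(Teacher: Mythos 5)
Your proposal contains a genuine gap at its load-bearing step. You claim that the right endpoint of $\operatorname{supp}(\psi_{i,j})$ equals $v$ for \emph{every} state $j$, justified by the observation that penalties near $v$ earn strictly positive revenue. That inference is invalid: positive payoff at a point does not make it a best response, and in fact the claim is false --- in any NE the per-state supports are ordered and disjoint (Theorems~\ref{thm2} and~\ref{thm3}), so only the state-$1$ support reaches $v$; for $j\geq 2$ the support sits strictly below $L_{j-1}<v$. Consequently your identification $\alpha^{*}_{i,j}=(f_j(v)-c)R_i(v^-)$ fails for $j\geq 2$, and the chain $\alpha^{*}_{i,j}=\alpha^{*}_{k,j}\Rightarrow R_i\equiv R_k\Rightarrow F_i\equiv F_k$ never gets started for those states. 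A second, smaller gap: to decompose $\sum_{j'}q_{j'}(\psi_{i,j'}-\psi_{k,j'})\equiv 0$ state by state you need not only that each primary's per-state supports are essentially disjoint (which your Assumption~1 argument gives) but that they appear in the \emph{same order} for both primaries and with matching endpoints; disjointness alone does not preclude, e.g., $q_1=q_2$ with the two primaries swapping which state occupies which interval. Finally, the regularity you defer to the end (atomlessness, needed for $R_i(v^-)$ and for $B(x)>0$) cannot be borrowed from Theorems~\ref{thm1} and~\ref{thm3}, which are stated for the symmetric profile; it must be proved for arbitrary, possibly asymmetric NE profiles \emph{before} symmetry, which is exactly what the paper's Lemma~\ref{lm:continuity} does.

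For comparison, the paper avoids your global anchoring at $v$ entirely: it first proves continuity (except possibly a single atom at $v$) and the per-primary support ordering $U_{i,a}\leq L_{i,j}$ for $a>j$ (Lemmas~\ref{lm:continuity}--\ref{lm:disjoint}), then localizes at the largest state $j$ and the smallest penalty $x$ where two primaries' strategies first disagree, shows $r_i(x)=r_k(x)$ there, and derives a contradiction by comparing the two primaries' payoffs just above $x$. Your identity $R_i(x)-R_k(x)=B(x)\bigl(F_i(x)-F_k(x)\bigr)$ is correct and is a genuinely different (and rather elegant) device for converting equality of winning probabilities into equality of unconditional penalty distributions; it could plausibly support an alternative proof if you anchored only at state $1$ (where $U_{i,1}=v$ does hold) and then propagated downward through the ordered supports, matching endpoints inductively. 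As written, however, the argument does not go through.
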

Theorem~\ref{thm:noasymmetricNE} implies that an NE strategy profile can not be asymmetric. Since channel statistics are identical and payoff functions are identical to each primary, thus this game is symmetric. Given that the game is symmetric, apparently there should only be symmetric NE strategies. Although there are symmetric games where asymmetric NEs do exist \cite{Fey}, we are  able to rule that out in our setting   using the assumptions that naturally arise in practice namely those which are stated in Sections II.A. and II.B and Assumption 1 which is satisfied by a large class of functions that are likely to arise in practice (Section II.E). Thus, a significance of Theorem~\ref{thm:noasymmetricNE} is that Theorem 1 holds for a large class of penalty functions which are likely to arise in practice. However, we show that there may exist asymmetric NE in absence of Assumption 1 (Section~\ref{sec:asymmetricNE}).

Now, we point out another significance of the above theorem.   In a symmetric game it is difficult to implement an asymmetric NE. For example, for two players if $(S_1,S_2)$ is an NE with $S_1\neq S_2$, then $(S_2,S_1)$ is also an NE due to the symmetry of the game. The realization of such an NE is only possible when one player knows whether the other  is using $S_1$ or $S_2$. But,  coordination among players is infeasible apriori as the game is non co-operative. Thus, Theorem~\ref{thm:noasymmetricNE} entails that we can avoid such complications in this game. We show that this game has a unique symmetric NE through Lemma~\ref{lm:computation} and Theorem~\ref{thm4}. Thus, Theorem~\ref{thm:noasymmetricNE}, Lemma~\ref{lm:computation} and Theorem~\ref{thm4} together entail that there exists a unique NE strategy profile.

Since every primary uses the same strategy, thus, \textbf{we drop the indices corresponding to primaries and  represent the strategy of any primary as $S = (\psi_{1}(\cdot), \psi_2(\cdot),.....,\psi_n(\cdot)) $, where $\psi_{i}(\cdot)$ denotes the strategy at channel state $i$.} Thus, we can represent a strategy profile in terms of only one primary.


\begin{defn}\label{defn:phi}
$\phi_j(x)$ is the expected profit of a primary whose channel is in state $j$ and selects
  a penalty $x$ \footnote{Note that $\phi_j(x)$ depend on strategies of other primaries , to keep notational simplicity, we do not make it explicit}.
  \end{defn}
\begin{thm}
\label{thm1}
$\psi_{i}(.), i\in \{1,..,n\}$ is a continuous probability distribution. Function $\phi_j(\cdot)$ is continuous.
\end{thm}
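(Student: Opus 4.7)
The plan is to prove the two claims in turn. By Theorem~\ref{thm:noasymmetricNE}, all primaries use a common strategy $(\psi_1,\ldots,\psi_n)$, so I can argue symmetrically and consider a unilateral deviation by one primary. The continuity of $\psi_j(\cdot)$ follows from a Bertrand-style no-atom argument: an atom in a symmetric mixed equilibrium can be profitably shifted slightly downward, since by symmetry rivals also concentrate mass at the same point and a tiny undercut discontinuously raises the selling probability. Once every $\psi_i$ is atomless, continuity of $\phi_j$ follows because ties occur with probability zero.

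For the no-atom step, assume for contradiction that $\psi_j$ places mass $\mu>0$ at some $x_0$. A short separate argument confines $x_0$ to $(g_j(c),v]$: $x_0<g_j(c)$ yields negative conditional profit and is strictly dominated by placing that mass at $v+1$ (not selling); $x_0=g_j(c)$ yields zero conditional profit while an alternative that places mass near $v$ earns strictly positive expected profit (e.g., with probability $(1-q)^{l-1}>0$ no rival offers); and $x_0>v$ is equivalent to placing the mass at $v+1$. Now consider shifting the atom from $x_0$ to $x_0-\epsilon$ for $\epsilon>0$ small enough that no $\psi_i$ has an atom in $(x_0-\epsilon,x_0)$ (possible since atoms are countable). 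By symmetry each rival plays exactly $x_0$ with probability at least $q_j\mu$, so the tie-pivotal event
\[
A=\{N_<(x_0)<m,\ N_<(x_0)+T(x_0)\geq m\}
\]
satisfies $P(A)>0$, where $N_<(y)$ and $T(y)$ count the rivals with penalty $<y$ and $=y$ respectively (positivity uses $m<l$ together with $\Psi(x_0)\leq q<1$). On $A$, playing $x_0$ sells only with tie-break probability $\tau=(m-N_<(x_0))/(T(x_0)+1)<1$, whereas on $A\cap B_\epsilon$, with $B_\epsilon=\{\text{no rival penalty lies in }(x_0-\epsilon,x_0)\}$, the deviation $x_0-\epsilon$ strictly undercuts every tied rival and sells with probability $1$. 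Since $P(B_\epsilon)\to 1$ as $\epsilon\to 0^+$ and $f_j$ is continuous with $f_j(x_0)>c$, decomposing
\[
\phi_j(x_0-\epsilon)-\phi_j(x_0)=(f_j(x_0-\epsilon)-c)\bigl[P(\text{sell}\mid x_0-\epsilon)-P(\text{sell}\mid x_0)\bigr]+\bigl[f_j(x_0-\epsilon)-f_j(x_0)\bigr]P(\text{sell}\mid x_0),
\]
the first term tends to the strictly positive $(f_j(x_0)-c)\,E[\mathbf{1}_A(1-\tau)]$ and the second to $0$, so for sufficiently small $\epsilon$ the deviation strictly improves expected payoff, contradicting the NE property.

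Continuity of $\phi_j$ then follows quickly: with every $\psi_i$ atomless, the aggregate rival penalty distribution $\Psi(x)=\sum_i q_i\psi_i(x)$ is continuous on $(-\infty,v]$, ties have probability zero, and $P(\text{sell}\mid x)=\sum_{k=0}^{m-1}\binom{l-1}{k}\Psi(x)^k(1-\Psi(x))^{l-1-k}$ is a continuous function of $x$; multiplying by the continuous $f_j(x)-c$ yields continuity of $\phi_j$ on the relevant domain.

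The main obstacle will be producing an $\epsilon$-uniform strictly positive lower bound on $E[\mathbf{1}_A(1-\tau)]$. This relies on Theorem~\ref{thm:noasymmetricNE} so that an atom in $\psi_j$ is mirrored across all rivals, and on $m<l$ so that a tie at $x_0$ is pivotal for at least one rival's sale; the edge case $x_0=g_j(c)$ is handled separately by the ``value'' argument producing an alternative strategy of strictly positive expected payoff.
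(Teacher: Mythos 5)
Your proof is correct, but it reaches the conclusion by a genuinely different route than the paper. The paper derives Theorem~\ref{thm1} from Lemma~\ref{lm:continuity}, which rules out atoms below $v$ for \emph{arbitrary, possibly asymmetric} profiles via a two-step argument: an atom of primary $i$ at $x<v$ forces every \emph{other} primary to vacate $[x,x+\epsilon]$ (each strictly prefers undercutting to $x-\delta$ because $r_k(x-)>r_k(x)$), after which primary $i$ itself strictly gains by moving \emph{up} to $x+\epsilon$; symmetry (Theorem~\ref{thm:noasymmetricNE}) is then invoked only to eliminate a residual atom at $v$. You instead invoke Theorem~\ref{thm:noasymmetricNE} at the outset and run the classical self-undercutting argument: the atom is mirrored by every rival, the tie-pivotal event $A$ has probability at least $\binom{l-1}{m}(q_j\mu)^m(1-q)^{l-1-m}>0$ (this is where $m<l$ and $q<1$ enter), $1-\tau\geq 1/l$ on $A$, and your decomposition of $\phi_j(x_0-\epsilon)-\phi_j(x_0)$ correctly isolates a gain term converging to $(f_j(x_0)-c)E[\mathbf{1}_A(1-\tau)]>0$, so the atom-holder profitably deviates downward. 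The trade-off: the paper's version must work without symmetry because the proof of Theorem~\ref{thm:noasymmetricNE} itself relies on Lemma~\ref{lm:continuity}; your argument is therefore a legitimate and shorter alternative derivation of Theorem~\ref{thm1} \emph{given} Theorem~\ref{thm:noasymmetricNE}, but it could not be substituted for Lemma~\ref{lm:continuity} in the paper's overall logical chain without making that chain circular. Two loose ends in your write-up are harmless: you do not exclude a rival atom sitting exactly at $x_0-\epsilon$, but on $A\cap B_\epsilon$ the number of rivals at or below $x_0-\epsilon$ is still $N_<(x_0)<m$, so the deviator sells with probability one even if newly tied there; and your treatment of the edge cases $x_0\leq g_j(c)$ and $x_0>v$ reproduces Observation~\ref{o1}. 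Your final step for the continuity of $\phi_j$ coincides with the paper's, via $\phi_j(x)=(f_j(x)-c)\bigl(1-w\bigl(\sum_{k}q_k\psi_k(x)\bigr)\bigr)$ and the continuity of $w(\cdot)$ and the $\psi_k$.
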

The above theorem implies that $\psi_i(\cdot)$ does not have any jump at any penalty value. i.e. no penalty value is chosen with positive probability. We now intuitively justify the property. There are uncountably infinite number of penalty values and thus, clearly $\psi_i(\cdot)$ can only have jump at some of those values. Intuitively, there is no inherent asymmetry amongst the penalty values within the interval $(g_i(c),v)$ i.e. at the penalty values except the end points of the interval $[g_i(c),v]$. Thus, a primary does not prioritize any of those penalty values. Now, we intuitively explain why $\psi_i(\cdot)$ does not have jump at the end points. First, at penalty $g_i(c)$, a primary gets a payoff of $0$ when the channel state is $i$; but the payoff at any penalty value greater that $g_i(c)$ is positive, thus $\psi_i(\cdot)$ does not prioritize the penalty value $g_i(c)$.  On the other hand, intuitively if a primary selects penalty $v$ with positive probability, then the rest would select slightly lower penalty in order to enhance their sales and thus, the probability that the primary would sell its channel decreases. Thus, $\psi_i(\cdot)$ also does not have a jump at $v$.

Note that in a deterministic N.E. strategy at channel state $i$, then $\psi_{i}(\cdot)$  must have a jump from $0$ to $1$ at the above penalty value. Such $\psi_i(\cdot)$ is {\em not} continuous. Thus, the above theorem rules out any deterministic N.E. strategy.  The fact that $\phi_j(\cdot)$ is continuous has an important technical consequence; this guarantees the existence of the best response penalty in Definition~\ref{br} stated in Section~\ref{sec:computation}.

\begin{defn}\label{dlu}
 We denote the lower and upper endpoints of the support set\footnote{The support set of a probability distribution is the smallest closed set such that the probability of its complement is $0.$\cite{df}}   of $\psi_i(.)$ as $L_i$ and $U_i$ respectively i.e.
\begin{equation*}\label{n77}
L_i=\inf\{x: \psi_{i}(x)>0\}.
\end{equation*}
\begin{equation*}\label{n77a}
U_i=\inf\{x: \psi_i(x)=1\}.
\end{equation*}
\end{defn}

We next show that primaries select higher penalty when the transmission rate is low. More specifically, we show that upper endpoint of the support set of $\psi_{i}(\cdot)$ is upper bounded by the lower endpoint of $\psi_{i-1}(\cdot)$.
\begin{thm}
\label{thm2}
$U_i\leq L_{i-1}$, if $j<i$.
\end{thm}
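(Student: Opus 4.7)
The plan is to proceed by contradiction, exploiting the strict inequality in Assumption~1. Suppose, for the sake of contradiction, that $U_i > L_{i-1}$ for some $i\in\{2,\ldots,n\}$. First I would observe that $L_{i-1}\geq g_{i-1}(c)$: any penalty below $g_{i-1}(c)$ sells only at a price below the transition cost, yielding nonpositive payoff, and is strictly dominated by a penalty just below $v$, which sells with probability at least $(1-q)^{l-1}>0$ (the event that every competitor is in state $0$) and returns a strictly positive profit. Since $g_i(c)<g_{i-1}(c)$, this also gives $L_{i-1}>g_i(c)$.

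Next, I would invoke the continuity of $\psi_{i-1}$ and $\psi_i$ from Theorem~\ref{thm1} to pick a point $y\in\operatorname{supp}(\psi_{i-1})$ strictly between $L_{i-1}$ and $U_i$ --- continuity forces $L_{i-1}$ to be a limit point of $\operatorname{supp}(\psi_{i-1})$ from above --- and a point $x\in\operatorname{supp}(\psi_i)$ with $y<x\leq U_i$. Both $x$ and $y$ lie in $(g_{i-1}(c),v]$, so $f_{i-1}(\cdot)-c$ and $f_i(\cdot)-c$ are strictly positive at each of them. Letting $P(\cdot)$ denote the probability that a channel posted at a given penalty is sold (which depends only on the strategies of the other primaries, not on the posting primary's state), we have $P(z)\geq (1-q)^{l-1}>0$ for every $z\leq v$ by~\eqref{prob}.

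Since $\phi_j$ is continuous (Theorem~\ref{thm1}), every penalty in $\operatorname{supp}(\psi_j)$ is a best response at state $j$. The equilibrium conditions at states $i$ and $i-1$ therefore give, respectively,
\[
(f_i(x)-c)\,P(x)\;\geq\;(f_i(y)-c)\,P(y),\qquad (f_{i-1}(y)-c)\,P(y)\;\geq\;(f_{i-1}(x)-c)\,P(x).
\]
All four factors being strictly positive, I would multiply these inequalities and cancel $P(x)P(y)>0$ to obtain
\[
\frac{f_{i-1}(y)-c}{f_i(y)-c}\;\geq\;\frac{f_{i-1}(x)-c}{f_i(x)-c}.
\]
Assumption~1, applied with indices $i-1<i$ at arguments $x>y>g_{i-1}(c)$, supplies the strict reverse inequality, and this is the contradiction. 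Hence $U_i\leq L_{i-1}$.

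The main obstacle is arranging the setup so that Assumption~1 can be invoked in its \emph{strict} form: one must place $y$ strictly above $g_{i-1}(c)$ (not merely at the boundary), and one must know that $P(x)$ and $P(y)$ are both strictly positive so that the two best-response inequalities can be combined without worrying about zero-divisions or degenerate cases. The continuity of $\psi_{i-1}$ and $\psi_i$ from Theorem~\ref{thm1} is precisely what makes the first issue tractable, while the assumption $q<1$ in~\eqref{prob} handles the second.
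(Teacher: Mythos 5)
Your proof is correct and follows essentially the same route as the paper: the paper's Lemma~\ref{lm:disjoint} likewise takes a best response $y$ at the lower-quality state and a larger best response $x$ at the higher-quality state, combines the two best-response inequalities, and derives a contradiction with the strict inequality in Assumption~1. Your care in placing $y$ strictly above $g_{i-1}(c)$ and in noting $P(\cdot)>0$ via $q<1$ matches the role of Observation~\ref{o1} and Corollary~\ref{sup} in the paper's argument.
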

Theorem~\ref{thm2} is apparently counter intuitive. We prove it using the assumptions stated in Section II. In particular, we rely on Assumption 1 which is satisfied by a large class of penalty functions (Section~\ref{sec:a class}). Thus, the significance of Theorem~\ref{thm2} is that the counter intuitive structure holds for a large class of penalty functions. However, in Section~\ref{sec:counterexample2} we show that Theorem~\ref{thm2} needs not to hold in absence of Assumption 1.

Fig.~\ref{fig:dist} illustrates $L_i$s and $U_i$s in an example scenario (The distribution $\psi_i(\cdot)$ in Fig.~\ref{fig:dist} is plotted using (\ref{c5})).
\begin{figure*}
\begin{minipage}{.32\linewidth}
\begin{center}
\includegraphics[width=65mm, height=40mm]{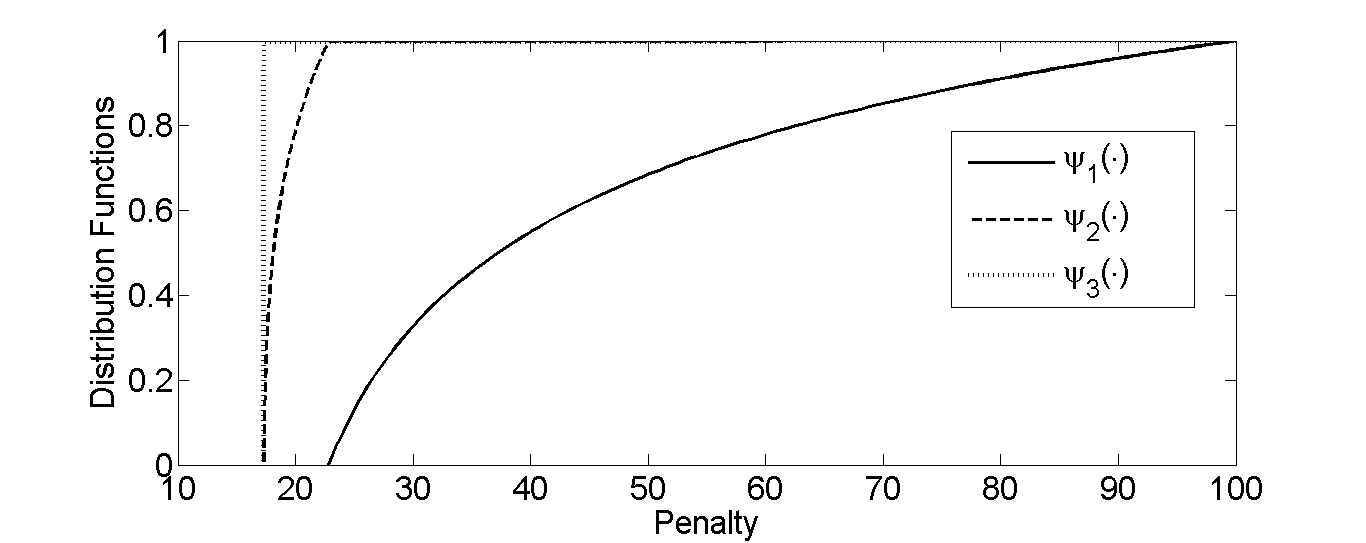}
\vspace*{-.1cm}

\end{center}
\end{minipage}\hfill
\begin{minipage}{.32\linewidth}
\begin{center}
\includegraphics[width=60mm, height=40mm]{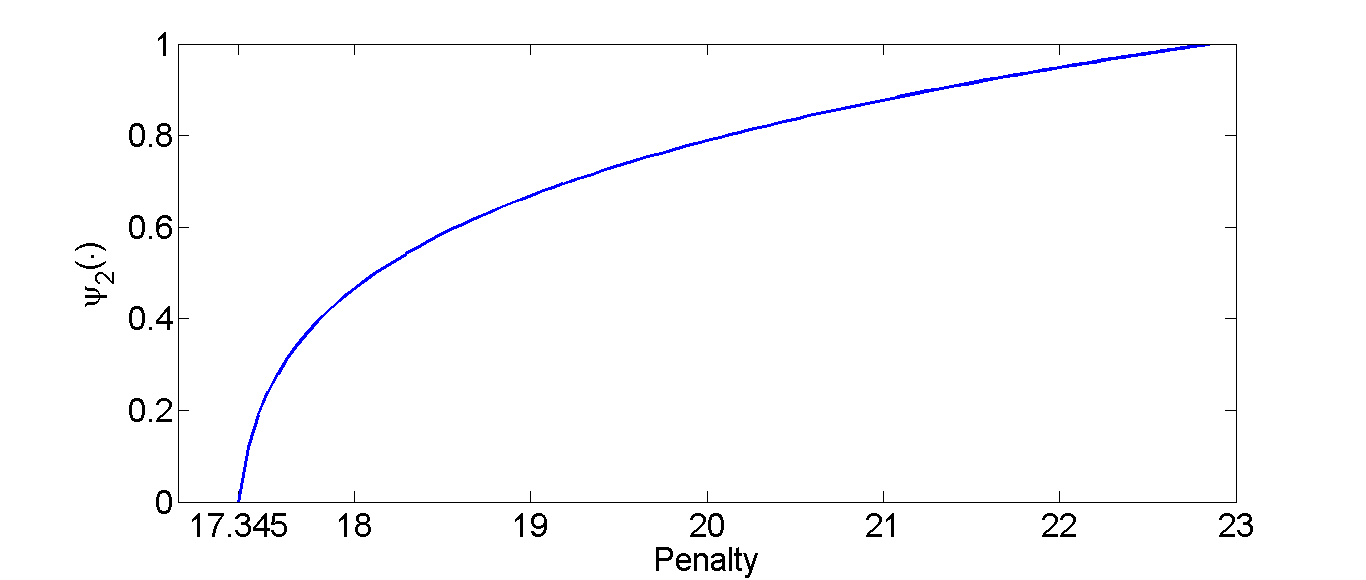}
\end{center}
\end{minipage}\hfill
\begin{minipage}{.32\linewidth}
\begin{center}
\includegraphics[width=60mm, height=40mm]{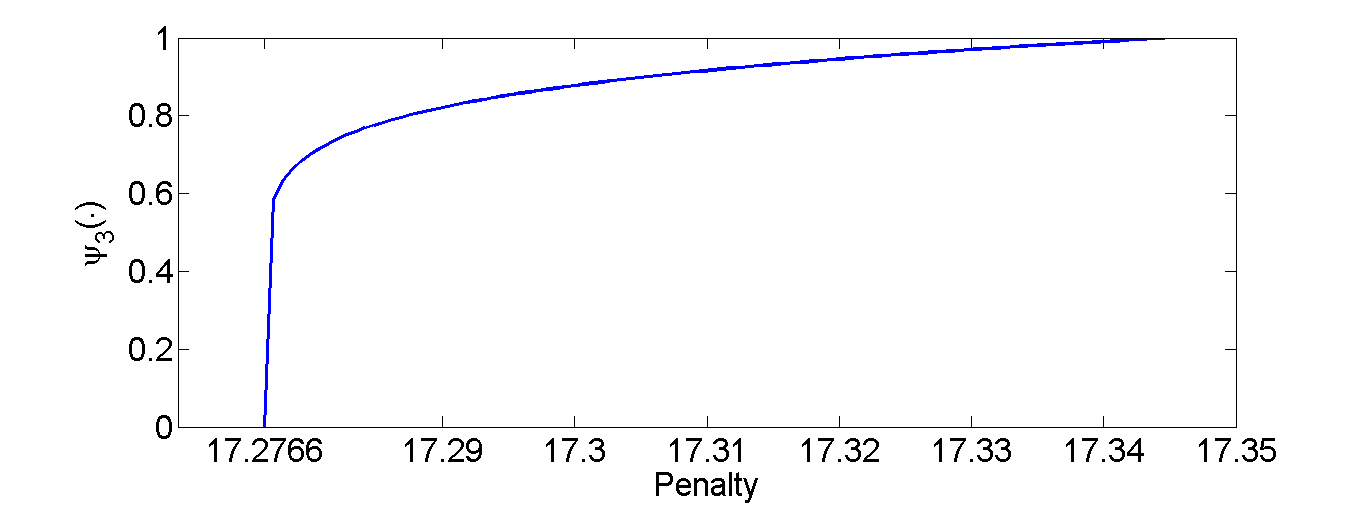}
\end{center}
\end{minipage}
\caption{\small Figure in the left hand side shows the d.f. $\psi_i(\cdot), i=1,\ldots,3$ as a function of penalty for an example setting: $v=100, c=1, l=21, m=10, n=3, q_1=q_2=q_3=0.2$ and $g_i(x)=x-i^3$. Note that support sets of $\psi_i(\cdot)$s are disjoint with $L_3=17.2766$, $U_3=17.345=L_2$, $U_2=22.864=L_1$, and $U_1=100=v$. Figures in the center and the right hand side show d.f. $\psi_2(\cdot)$ and $\psi_3(\cdot)$ respectively, using different scales compared to the left hand figure.}
\label{fig:dist}
\vspace*{-.6cm}
\end{figure*}

In general, a continuous NE penalty selection distribution may not be contiguous i.e. support set may be a union of multiple number of disjoint closed intervals. Thus, the support set of $\psi_{i}(\cdot)$ may be of the following form $[a_1,b_1]\cup\ldots\cup [a_d,b_d]$ with $b_k<a_{k+1}, k\in\{1,\ldots,d-1\}, a_1=L_i$ and $b_d=U_i$. In this case, $\psi_i(\cdot)$ is strictly increasing in each of $[a_k,b_k] k\in \{1,\ldots,d\}$, but it is constant in the \lq\lq{}gap\rq\rq{} between the intervals i.e. $\psi_i(\cdot)$ is constant in the interval $[a_{k-1},b_k]$ $k\in\{2,\ldots,d\}$. We rule out the above possibility in the following theorem i.e. the support set of $\psi_i(\cdot)$ consists of only one closed interval $[L_i,U_i]$ which also establishes that $\psi_{i}(\cdot)$ is strictly increasing from $L_i$ to $U_i$. In the following theorem we also rule out any \lq\lq{}gap\rq\rq{} between support sets for different $\psi_i(\cdot)$, $i=1,\ldots,n$. 
\begin{thm}
\label{thm3}
The support set of $\psi_i(\cdot), i=1,..,n$ is $[L_i, U_i]$  and $U_i=L_{i-1}$ for $i=2,..,n$, $U_1=v$.
\end{thm}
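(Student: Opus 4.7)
The plan is to prove all three assertions by a common mechanism: combine the NE indifference condition with the strict monotonicity of the per-sale profit $f_i(\cdot)-c$, using Theorem~\ref{thm2} to control the probability-of-sale function across candidate deviations. Since Theorem~\ref{thm1} tells us $\psi_i(\cdot)$ is atomless and $\phi_i(\cdot)$ is continuous, a standard NE argument gives that every $x$ in the support of $\psi_i$ yields the same expected payoff $\phi_i^{\ast}$, while any $x$ outside the support yields at most $\phi_i^{\ast}$. By Theorem~\ref{thm2} the state-wise supports are linearly ordered, so $\psi_j(x)=1$ for $j>i$ whenever $x\ge L_i$, and $\psi_j(x)=0$ for $j<i$ whenever $x\le U_i$. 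Consequently, on any subinterval of $[L_i,U_i]$, the probability that a rival primary quotes penalty at most $x$ is $\sum_{j>i}q_j+q_i\psi_i(x)$, which in turn pins down the probability that a state-$i$ primary sells at penalty $x$.

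First I would establish $U_1=v$. Suppose instead $U_1<v$. For any $x\in(U_1,v]$, no rival primary at any state puts mass on $(U_1,v]$, so the sale probability at penalty $x$ in state $1$ is exactly the probability that strictly fewer than $m$ of the other $l-1$ primaries have channels in states $1,\dots,n$; call this constant $p^{\ast}$. Its positivity follows from $(1-q)^{l-1}>0$, which uses the assumption $q<1$. Then $\phi_1(x)=(f_1(x)-c)\,p^{\ast}$ is strictly increasing on $(U_1,v]$, whence $\phi_1(v)>\phi_1(U_1)=\phi_1^{\ast}$, contradicting optimality of penalties in the support of $\psi_1$.

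Next I would rule out internal gaps in a single support and gaps between consecutive supports by the same device. If the support of some $\psi_i$ were a disjoint union $[a_1,b_1]\cup\cdots\cup[a_d,b_d]$ with $d\ge 2$, then $\psi_i$ would be constant on $(b_1,a_2)$; by Theorem~\ref{thm2}, no $\psi_j$ with $j\neq i$ places any mass on $(b_1,a_2)\subseteq[L_i,U_i]$. Therefore the sale probability in state $i$ is constant across $[b_1,a_2]$, while $f_i(\cdot)-c$ is strictly increasing, yielding $\phi_i(a_2)>\phi_i(b_1)$ and contradicting NE indifference. The same argument, applied to the putative gap $(U_i,L_{i-1})$ for some $i\ge 2$, gives $\phi_i(x)>\phi_i(U_i)=\phi_i^{\ast}$ for any $x$ in this gap, another contradiction; combined with $U_i\le L_{i-1}$ from Theorem~\ref{thm2}, this forces $U_i=L_{i-1}$.

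The main obstacle is keeping the book-keeping of the sale-probability expression clean in each regime. All three steps reduce to the same observation: whenever an interval carries no penalty mass from any state, the sale probability is constant on it, so the strictly increasing per-sale profit rules out such an interval under NE. The essential structural input making this work is Theorem~\ref{thm2}, which keeps different channel states' support intervals from interleaving; without it, an apparent gap in $\psi_i(\cdot)$ could be sustained by mass from another state and the monotonicity argument would break down.
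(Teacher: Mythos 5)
Your proof is correct and follows essentially the same route as the paper's: the paper merely unifies your three cases into one argument, observing that any gap $(x,y)\subseteq[L_n,v]$ carrying no penalty mass has a left endpoint that is a best response for some state (Corollary~\ref{sup}), yet every point inside the gap yields strictly higher payoff because the winning probability $1-w\bigl(\sum_j q_j\psi_j(\cdot)\bigr)$ is constant there while $f_i(\cdot)$ strictly increases. Your case-by-case treatment, with Theorem~\ref{thm2} guaranteeing that no other state's mass can fill an apparent gap, is exactly this mechanism.
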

Theorem~\ref{thm2} states that $L_{i-1}\geq U_i$. Theorem~\ref{thm3} further confirms that $L_{i-1}=U_{i}$ i.e. there is no \lq\lq{}gap\rq\rq{} between the support sets. Theorem~\ref{thm3} also implies that there is no \lq\lq{}gap\rq\rq{} within a support set. We now explain the intuition that leads to the reason. If there are $a$ and $b$ which are in the support sets such that the primaries do not select any penalty in the interval $(a,b)$, then, a primary can get strictly a higher payoff at any penalty in the interval $(a,b)$ compared to penalty at $a$ or just below $a$. Thus, a primary would select penalties at or just below $a$ with probability $0$ which implies that $a$ can not be in the support set of an NE strategy profile. We prove Theorem~\ref{thm3} using the above insights and Theorem~\ref{thm2}. 

Figure~\ref{fig:dist} illustrates d.f. $\psi_{i}(\cdot)$ for an example scenario. 
\vspace{-0.5cm}
\subsection{Computation, Uniqueness and Existence}\label{sec:computation}
We now show that the structural properties of an NE identified in
Theorems~\ref{thm:noasymmetricNE}-\ref{thm3} are satisfied by a unique strategy profile,
 which we explicitly compute (Lemma~\ref{lm:computation}). This proves the uniqueness of a NE subject to the existence. We show that the strategy profile is indeed an NE in Theorem~\ref{thm4}.
 We start with the following definitions.
 
 \begin{defn}\label{br}
 A \emph{best response} penalty for  a channel in state  $j\geq 1$  is $x$ if and only if
\begin{equation}\label{eq:bestresponse}
\phi_j(x)=\underset{y\in \Re}{\text{sup}}\phi_j(y).
\end{equation}
Let $u_{j,max} = \phi_j(x)$ for a best response $x$\footnote{Since $\phi_j(\cdot)$ is continuous by Theorem~\ref{thm1} and penalty must be selected within the interval $[g_j(c),v]$, thus the maximum exists in (\ref{eq:bestresponse}). This maximum is equal to $u_{j,max}$ and $\phi_j(x)$ is equal to $u_{j,max}$ for some $x$.} for state $j$, $j \geq 1$ i.e.,  $u_{j,max}$ is the maximum expected profit  that a primary earns when its channel is in state  $j$, $j \geq 1$.
\end{defn} 
\begin{rmk}
In an NE strategy profile a primary only selects a best response penalty with a positive probability. Thus, a primary selects $x$ with positive probability at channel state $i$, then expected payoff to the primary must be $u_{i,max}$ at $x$.
\end{rmk}
\begin{defn}\label{defn:w}
Let $w(x)$ ($w_i$, respectively) be the probability of at least $m$ successes out of $l-1$ independent Bernoulli trials, each of which occurs with probability $x$ ($\sum_{k=i}^{n}q_k$, respectively). Thus, 
\begin{eqnarray}
w(x)&=&\sum_{i=m}^{l-1}\dbinom{l-1}{i}x^i(1-x)^{l-i-1}.\label{d4}\\
w_i&=&w\left(\sum_{j=i}^{n}q_j\right) \quad \text{for } i=1,...,n \quad \text{\&} w_{n+1}=0\label{d5}.
\end{eqnarray}
\end{defn}

The following lemma provides the explicit expression for the maximum expected payoff that a primary can get under an NE strategy profile.
\begin{lem} \label{lu} For $1 \leq i \leq n$,
\begin{eqnarray}u_{i,max} & = & p_i-c .\nonumber\\
\mbox{where, } p_i& =& c+(f_i(L_{i-1})-c)(1-w_i).
\label{n51}\\
\mbox{and } L_i&=&g_i(\dfrac{p_i-c}{1-w_{i+1}}+c), L_{0}=v.\label{n52}
\end{eqnarray}
\end{lem}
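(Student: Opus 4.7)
The plan is to exploit the two structural facts already proved, namely that the supports $[L_i, L_{i-1}]$ of $\psi_i(\cdot)$ are disjoint with a common endpoint (Theorems~\ref{thm2}, \ref{thm3}) and that each $\psi_i(\cdot)$ is continuous (Theorem~\ref{thm1}). Together with the NE indifference principle, these reduce the lemma to computing the selling probability at the two endpoints $L_i$ and $U_i = L_{i-1}$ of state $i$'s support.

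First I would record the indifference fact: since every primary uses the same strategy $S$ and $\psi_i(\cdot)$ is strictly increasing on $[L_i, U_i]$ (Theorem~\ref{thm3}), every point of $[L_i, L_{i-1}]$ lies in the support, so by the best-response property (Remark following Definition~\ref{br}) we must have $\phi_i(x) = u_{i,max}$ for all $x \in [L_i, L_{i-1}]$; by continuity of $\phi_i$ (Theorem~\ref{thm1}) this extends to both endpoints. Hence it suffices to evaluate $\phi_i$ at the two endpoints.

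Next I would compute the selling probability at $x = L_{i-1} = U_i$. Fix another primary $k$. Because the supports are disjoint (Theorems~\ref{thm2}, \ref{thm3}), primary $k$ chooses a penalty strictly less than $L_{i-1}$ iff its channel lies in one of the states $i, i+1, \dots, n$ (states $j < i$ have support entirely above $L_{i-1}$, while for states $j \geq i$ continuity of $\psi_j(\cdot)$ makes the probability of choosing exactly $L_{i-1}$ equal to zero, and state $0$ uses penalty $v+1 > L_{i-1}$). Hence this event has probability $\sum_{j=i}^{n} q_j$, independently across the $l-1$ other primaries. Consequently the probability that at least $m$ of them undercut $L_{i-1}$ is exactly $w_i$, so the probability of sale is $1 - w_i$ and
\begin{equation*}
\phi_i(L_{i-1}) \;=\; \bigl(f_i(L_{i-1}) - c\bigr)(1 - w_i) \;=\; u_{i,max}.
\end{equation*}
Setting $p_i := c + u_{i,max}$ yields (\ref{n51}). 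An identical argument at $x = L_i$ shows that another primary undercuts only when its state is in $\{i+1,\dots,n\}$ (state $i$ itself now has support $\geq L_i$ and, by continuity, a.s. $> L_i$), giving sale probability $1 - w_{i+1}$, whence
\begin{equation*}
\bigl(f_i(L_i) - c\bigr)(1 - w_{i+1}) \;=\; u_{i,max} \;=\; p_i - c.
\end{equation*}
Using that $w_{i+1} < 1$ (because $\sum_{j=i+1}^{n} q_j \leq q < 1$ by (\ref{prob}), and $w(\cdot) < 1$ on $[0,1)$), I can solve for $f_i(L_i)$ and apply $g_i = f_i^{-1}$ to obtain (\ref{n52}). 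The base case $L_0 = v$ is immediate from $U_1 = v$ (Theorem~\ref{thm3}).

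The main obstacle is the selling-probability computation at the endpoints: one must carefully argue that ties with any other primary occur with probability zero, which uses continuity (Theorem~\ref{thm1}) together with the disjoint-support structure (Theorems~\ref{thm2}--\ref{thm3}) and the convention that state-$0$ primaries pick penalty $v+1$. Once ties are eliminated, the Bernoulli counting that produces $w_i$ and $w_{i+1}$ via Definition~\ref{defn:w} is mechanical, and the rest of the lemma is just rearranging the indifference equations.
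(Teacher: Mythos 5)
Your proposal is correct and follows essentially the same route as the paper: both rely on the support structure from Theorems~\ref{thm2} and \ref{thm3}, the fact that support endpoints are best responses (Corollary~\ref{sup}), and evaluation of the payoff $\phi_i$ at $L_{i-1}$ and $L_i$, where the disjoint ordered supports make the undercut probabilities exactly $w_i$ and $w_{i+1}$. The only cosmetic difference is that the paper packages the chaining from $L_0=v$ as an explicit induction and cites the pre-derived payoff formula $\phi_j(x)=(f_j(x)-c)(1-w(\sum_k q_k\psi_k(x)))$, whereas you recompute the tie-free selling probabilities at the endpoints directly.
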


\begin{rmk}\label{rmk:expectedpayoff}
Expected payoff obtained by a primary under an NE strategy profile at channel state $i$ is given by $p_i-c$.
\end{rmk}
\begin{rmk}
Starting from $L_0=v$, we obtain $p_1$ (from (\ref{n51})) and using $p_1$ we obtain $L_1$ by (\ref{n52}) which we use to obtain $p_2$ from (\ref{n51}). Thus, recursively we obtain $p_i$ and $L_i$ for $i=1,\ldots,n$.
\end{rmk}


 We now obtain expressions for $\psi_{i}(\cdot)$ using expression of $L_i$ and $p_i$. We use the fact that $w(\cdot)$ is strictly increasing and continuous in $[0,1]$.
\begin{lem} \label{lm:computation}
An NE strategy profile (if it exists) $\left(\psi_1(\cdot), \ldots, \psi_n(\cdot)\right)$ must
comprise of: \begin{align}\label{c5}
\psi_i(x)= & 
 0 ,  \text{if} \ x<L_i\nonumber\\
& \dfrac{1}{q_i}(w^{-1}(\dfrac{f_i(x)-p_i}{f_i(x)-c})-\sum_{j=i+1}^{n}q_j), \text{if} \ L_{i-1}\geq x\geq L_i\nonumber\\
& 1,  \text{if} \ x>L_{i-1}.
\end{align}
where $p_i, L_i, i=0,..,n$ are defined in (\ref{n51}).
\end{lem}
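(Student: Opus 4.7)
The strategy is to apply the indifference principle of mixed-strategy NE: at every penalty $x$ in the support of $\psi_i(\cdot)$, the expected profit $\phi_i(x)$ must equal the common value $u_{i,\max}=p_i-c$ provided by Lemma~\ref{lu}. By Theorems~\ref{thm1}--\ref{thm3}, this support is exactly $[L_i,L_{i-1}]$. Writing $\phi_i(x)$ in closed form on this interval and equating it to $p_i-c$ will yield a single algebraic equation whose solution is $\psi_i(x)$, giving (\ref{c5}).

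I would carry out the closed-form computation as follows. Fix $x\in[L_i,L_{i-1}]$ and consider an arbitrary competitor. Since each $\psi_k(\cdot)$ is continuous (Theorem~\ref{thm1}) and its support is $[L_k,L_{k-1}]$ (Theorem~\ref{thm3}), conditioning on the competitor's channel state $k$: for $k>i$ its penalty lies in $[L_k,L_{k-1}]\subseteq[L_n,L_i]$, hence is a.s.\ at most $x$; for $k=i$ its penalty is at most $x$ with conditional probability $\psi_i(x)$; for $0<k<i$ its penalty lies in $[L_k,L_{k-1}]\subseteq[L_{i-1},v]$ and hence exceeds $x$ a.s.; for $k=0$ the competitor offers penalty $v+1>x$. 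Thus the probability that any given competitor undercuts $x$ is
\[
\alpha(x)\;=\;\sum_{j=i+1}^{n}q_j+q_i\,\psi_i(x).
\]
Our primary's channel sells at penalty $x$ iff fewer than $m$ of the $l-1$ independent competitors undercut it, so by Definition~\ref{defn:w},
\[
\phi_i(x)\;=\;\bigl(f_i(x)-c\bigr)\bigl(1-w(\alpha(x))\bigr).
\]

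Setting $\phi_i(x)=p_i-c$ and rearranging produces $w(\alpha(x))=(f_i(x)-p_i)/(f_i(x)-c)$. Because $w$ is continuous and strictly increasing on $[0,1]$ it is invertible there, and the expressions (\ref{n51})--(\ref{n52}) are calibrated so the right-hand side lies in the range of $w$ for all $x\in[L_i,L_{i-1}]$: at $x=L_{i-1}$ it equals $w_i$ and at $x=L_i$ it equals $w_{i+1}$, the boundary values of $w$ evaluated at $\sum_{j=i}^{n}q_j$ and $\sum_{j=i+1}^{n}q_j$. Inverting and solving $\alpha(x)=\sum_{j=i+1}^{n}q_j+q_i\psi_i(x)$ for $\psi_i(x)$ yields the middle line of (\ref{c5}); the outer lines $\psi_i=0,1$ follow directly from Definition~\ref{dlu} and Theorem~\ref{thm3}.

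The main obstacle is justifying the clean decomposition of $\alpha(x)$: it depends crucially on the a.s.\ one-sided placement of competitors' penalties relative to $x$, which in turn uses Theorem~\ref{thm2}'s $U_k\leq L_{k-1}$, Theorem~\ref{thm3}'s identification $U_k=L_{k-1}$ (so the supports tile without gaps), and Theorem~\ref{thm1}'s continuity to eliminate atoms at the shared endpoints $L_i$ and $L_{i-1}$. A secondary technicality is verifying that the argument of $w^{-1}$ stays in $[0,1]$ throughout $[L_i,L_{i-1}]$, which is precisely what the recursive definitions of $p_i$ and $L_i$ in Lemma~\ref{lu} ensure via the boundary computations above.
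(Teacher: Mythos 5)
Your proposal is correct and follows essentially the same route as the paper: invoke the fact that every point of the support $[L_i,L_{i-1}]$ is a best response (Corollary~\ref{sup}), write $\phi_i(x)=(f_i(x)-c)\bigl(1-w(\sum_{j=i+1}^{n}q_j+q_i\psi_i(x))\bigr)$ using the tiling of supports from Theorems~\ref{thm1}--\ref{thm3}, equate to $p_i-c$, and invert $w$. The only difference is that you spell out the competitor-state decomposition and the boundary calibration of $w^{-1}$ in more detail than the paper does, which is fine.
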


\begin{rmk}
Using (\ref{c5}) we can easily compute the strategy profile $(\psi_1(\cdot),\ldots,\psi_{n}(\cdot))$. Fig.~\ref{fig:dist} illustrates d.f. $\psi_i(\cdot)$ for an example scenario.
\end{rmk}
The following lemma ensures that $\psi_i(\cdot)$ as defined in Lemma~\ref{lm:computation} is indeed a strategy profile. 

\begin{lem}\label{dcont}
$\psi_i(\cdot)$ as defined in Lemma~\ref{lm:computation} is a strictly increasing and continuous probability distribution function.
\end{lem}
Fig.~\ref{fig:dist} illustrates continuous and strictly increasing $\psi_i(\cdot)$ for $i=1,\ldots, 3$ for an example setting.


Explicit computation in Lemma~\ref{lm:computation} shows that the NE strategy profile is unique, if it exists. There is a plethora of symmetric games \cite{mwg} where NE strategy profile does not exist. However, we establish that any strategy profile of the form (\ref{c5}) is an NE. 
\begin{thm}\label{thm4} Strategy profile as defined in Lemma~\ref{lm:computation} is  an NE.
\end{thm}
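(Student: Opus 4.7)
The plan is to show that no primary can gain by a unilateral deviation from the symmetric profile $(\psi_{1}(\cdot),\dots,\psi_{n}(\cdot))$ prescribed in~(\ref{c5}). Concretely, assuming every other primary uses this profile, I would establish that for each state $j$ the deviating primary's expected-payoff function $\phi_j(x)$ satisfies $\phi_j(x)\le p_j-c=u_{j,\max}$ for every $x\in\Re$, with equality throughout the support $[L_j,L_{j-1}]$ of $\psi_j(\cdot)$. Combined with Lemma~\ref{dcont}, which guarantees that (\ref{c5}) defines a bona fide distribution with that support, this is exactly the NE condition.

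The first step is to compute $\phi_j(x)$ in closed form when $x$ lies in $[L_i,L_{i-1}]$ for some $i\in\{1,\dots,n\}$. Because the supports $\{[L_k,L_{k-1}]\}_{k=1}^n$ are disjoint by the very form of (\ref{c5}), a single rival primary offers a penalty at most $x$ with probability $\sum_{k=i+1}^{n}q_k+q_i\psi_i(x)$: rivals whose channel is in a state $k>i$ always undercut $x$, those with $k<i$ or $k=0$ never do, and those in state $i$ undercut with probability $\psi_i(x)$. Substituting (\ref{c5}) collapses this sum to $w^{-1}\!\bigl(\tfrac{f_i(x)-p_i}{f_i(x)-c}\bigr)$, so the probability that fewer than $m$ rivals undercut equals $1-\tfrac{f_i(x)-p_i}{f_i(x)-c}=\tfrac{p_i-c}{f_i(x)-c}$. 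Since continuity of $\psi_i$ makes ties a measure-zero event, I obtain
\[
\phi_j(x)=(f_j(x)-c)\,\frac{p_i-c}{f_i(x)-c},\qquad x\in[L_i,L_{i-1}].
\]
Taking $i=j$ immediately gives $\phi_j(x)=p_j-c$ on $[L_j,L_{j-1}]$, which verifies the indifference condition on state $j$'s own support.

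Next I would exploit Assumption~1 to bound deviations into a neighbouring state's support, i.e.\ $i\neq j$. Writing $\phi_j(x)=(p_i-c)\cdot\tfrac{f_j(x)-c}{f_i(x)-c}$ and recalling that Assumption~1 states that $\tfrac{f_a(\cdot)-c}{f_b(\cdot)-c}$ is strictly increasing whenever $a<b$, we conclude that when $i>j$ (a deviation to a lower penalty range) $\phi_j$ is strictly increasing on $[L_i,L_{i-1}]$, so its maximum on that interval is at $L_{i-1}$; when $i<j$ (a deviation to a higher penalty range) $\phi_j$ is strictly decreasing, so its maximum is at $L_i$. Crucially, the defining identities $p_i-c=(f_i(L_{i-1})-c)(1-w_i)=(f_i(L_i)-c)(1-w_{i+1})=(f_{i+1}(L_i)-c)(1-w_{i+1})$ extracted from (\ref{n51})-(\ref{n52}) force $\phi_j$ to take the same value on the two sides of every boundary $L_i$, so $\phi_j$ is continuous on $[L_n,v]$. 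Walking from either side toward $[L_j,L_{j-1}]$, each interval's extremum matches the next interval's boundary value, yielding $\phi_j(x)\le p_j-c$ throughout $[L_n,v]$ with equality precisely on $[L_j,L_{j-1}]$.

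It remains to handle the extreme regions. If $x>v$, no secondary buys, so $\phi_j(x)=0\le p_j-c$. If $x<L_n$, the deviator strictly undercuts every rival that is offering at all, sells with probability one, and earns $\phi_j(x)=f_j(x)-c$, which is strictly increasing in $x$ and at $x=L_n$ agrees by continuity with the value already shown to be at most $p_j-c$. Combining all cases, no deviation strictly improves the payoff, and the profile is an NE. The main obstacle I anticipate is the bookkeeping: correctly orienting the monotonicity produced by Assumption~1 in each direction of deviation, and verifying that the identities from (\ref{n51})-(\ref{n52}) make $\phi_j$ continuous across every $L_i$. Once those checks are in place, the monotone-sandwich argument drives $\phi_j(x)$ below $p_j-c$ outside $[L_j,L_{j-1}]$ and the theorem follows.
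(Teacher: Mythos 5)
Your proposal is correct and rests on the same three ingredients as the paper's proof: the closed-form deviation payoff $\phi_j(x)=(p_i-c)\frac{f_j(x)-c}{f_i(x)-c}$ on $[L_i,L_{i-1}]$, Assumption 1, and the recursion (\ref{n51})--(\ref{n52}). The organization differs slightly: the paper compares $\phi_j(x)$ to $p_j-c$ in one shot by telescoping the identity $p_{i+1}-c=(p_i-c)\frac{f_{i+1}(L_i)-c}{f_i(L_i)-c}$ (Observation~\ref{recurse}) across all intermediate states and applying Assumption 1 to each factor, whereas you establish piecewise monotonicity of $\phi_j$ on each rival interval, continuity of $\phi_j$ at every boundary $L_i$, and then induct outward from $[L_j,L_{j-1}]$. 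These are the same inequality chain read in two directions, so your route is best described as a cleaner presentation of the same argument rather than a genuinely different one. Two small points need repair. First, your chain of identities is miswritten: $(f_i(L_i)-c)(1-w_{i+1})$ equals $p_i-c$, while $(f_{i+1}(L_i)-c)(1-w_{i+1})$ equals $p_{i+1}-c$, not $p_i-c$; what you actually need (and what (\ref{n51})--(\ref{n52}) deliver) is $\frac{p_{i+1}-c}{f_{i+1}(L_i)-c}=\frac{p_i-c}{f_i(L_i)-c}$, which is precisely the continuity of $\phi_j$ across $L_i$. Second, on intervals $[L_i,L_{i-1}]$ with $i>j$ the monotonicity you extract from Assumption 1 is only guaranteed where $f_j(x)>c$ (the hypothesis of (\ref{con1}) requires the argument to exceed $g_j(c)$); where $f_j(x)\le c$ the deviation payoff is non-positive and hence trivially below $p_j-c>0$, so that region should be split off with one sentence, as the paper does at the start of its Case ii.
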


Hence, we have shown that
\begin{thm}\label{singlelocation}
The strategy profile, in which each  primary randomizes over the penalties in the range $[L_i,L_{i-1}]$ using the continuous probability distribution function $\psi_i(\cdot)$  (defined in Lemma~\ref{lm:computation}) when the channel state is $i$, is the unique NE strategy profile.
\end{thm}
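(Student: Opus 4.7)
The plan is to assemble Theorem~\ref{singlelocation} directly from the preceding results, since each component claim has already been established in the development of Section~\ref{sec:singlestructure} and Section~\ref{sec:computation}. The statement has two parts: (i) the strategy profile described in Lemma~\ref{lm:computation} is an NE, and (ii) it is the only NE, symmetric or otherwise.

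For existence, I would simply invoke Theorem~\ref{thm4}, which asserts that any strategy profile of the form (\ref{c5}) constitutes an NE, together with Lemma~\ref{dcont}, which certifies that the $\psi_i(\cdot)$ so defined are legitimate probability distribution functions (continuous and strictly increasing on $[L_i,L_{i-1}]$). These two results together close the existence half of the theorem without any additional work.

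For uniqueness, I would first appeal to Theorem~\ref{thm:noasymmetricNE} to argue that every NE must be symmetric, so the analysis can be carried out by describing a single common strategy $S=(\psi_1,\ldots,\psi_n)$. Next, Theorems~\ref{thm1}, \ref{thm2}, and \ref{thm3} collectively pin down the support structure any NE $\psi_i(\cdot)$ must have: each $\psi_i(\cdot)$ is continuous, strictly increasing, and its support is the contiguous interval $[L_i,L_{i-1}]$ (with $L_0=v$). Given this forced support structure, Lemma~\ref{lu} determines the constants $p_i$ and $L_i$ uniquely by the recursion (\ref{n51})--(\ref{n52}). Finally, Lemma~\ref{lm:computation} shows that on each support interval the equal-payoff condition (each penalty in the support must yield payoff $u_{i,\max}=p_i-c$, per the remark following Definition~\ref{br}) forces $\psi_i(x)$ to coincide with the closed form in (\ref{c5}). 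Thus any NE must agree with the candidate strategy profile, proving uniqueness.

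There is essentially no remaining technical obstacle in this proof; all the substantive work has been done in the earlier theorems and lemmas. The only care needed is to verify that the chain of implications is tight, i.e., that the structural characterization (symmetry, continuity, monotonicity, contiguous and ordered supports) indeed leaves no freedom beyond what the explicit recursion of Lemma~\ref{lu} and the equal-payoff characterization in Lemma~\ref{lm:computation} prescribe. I would conclude by noting that combining uniqueness with the existence statement of Theorem~\ref{thm4} yields the claim.
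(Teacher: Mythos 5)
Your proposal is correct and follows essentially the same route as the paper: the paper presents Theorem~\ref{singlelocation} with the phrase ``Hence, we have shown that,'' i.e., as a direct assembly of Theorem~\ref{thm:noasymmetricNE} (symmetry), Theorems~\ref{thm1}--\ref{thm3} (structure of the supports), Lemmas~\ref{lu}--\ref{dcont} (unique explicit form and validity of $\psi_i(\cdot)$), and Theorem~\ref{thm4} (existence). Your chain of implications is exactly the one the paper intends, so nothing further is needed.
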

\subsection{Random Demand}\label{sec:random}
Note that all our results readily generalize to allow for random number of secondaries (M) with probability mass functions (p.m.f.) $\Pr(M=m)=\gamma_m$. A primary does not have the exact realization of number of secondaries, but it knows the p.m.f. . We only have to redefine  $w(x)$ as-
\begin{eqnarray}
\sum_{k=0}^{\max(M)}\gamma_k\sum_{i=k}^{l-1}\dbinom{l-1}{i}x^i(1-x)^{l-1-i}\nonumber
\end{eqnarray}
and $w_{n+1}=\gamma_0$.


\section{Performance Evaluations of NE Strategy Profile in Asymptotic Limit}\label{sec:slnumerical}
 In this section, we analyze the reduction in payoff under NE strategy profile due to the competition. 
 First, we study the expected payoff that a primary obtains under the unique NE strategy profile in the asymptotic limit (Lemma~\ref{eff}). Subsequently, we compare the expected payoff of primaries under the NE strategy profile with the payoff that primaries get when they collude (Lemma~\ref{thresh}). Subsequently, we investigate the asymptotic variation of strategy profiles of primaries with $n$ in an example setting (Fig.~\ref{fig:example_nvaries}).

Recall from Remark~\ref{rmk:expectedpayoff} that expected payoff obtained by a primary under the unique NE strategy profile at channel state $i$ is given by $p_i-c$. Next,
\begin{defn}\label{payoff}
Let $R_{NE}$ denote  the ex-ante expected profit of a primary at the Nash equilibrium. Then,
\begin{eqnarray}\label{eq:rne}
R_{NE}=\sum_{i=1}^{n}(q_i.(p_i-c)).
\end{eqnarray}
\end{defn}
Note that $l\cdot R_{NE}$ denotes the total ex-ante expected payoff obtained by primaries at the NE strategy profile. We obtain
\begin{lem}\label{eff}
Let $c_{j}=g_j(c), j=1,..,n$. When $l\rightarrow \infty$, then
\begin{eqnarray}
p_i-c\rightarrow \begin{cases} f_i(v)-c\quad & \text{if } (l-1)\sum_{j=1}^{n}q_j<m\nonumber\\
f_i(c_{k})-c \quad \text{if }  & (l-1)\sum_{j=k+1}^{n}q_j<m\nonumber\\ & <(l-1)\sum_{j=k}^{n}q_j \quad 1\leq k< i\nonumber\\
0 \quad \text{if } & m< (l-1)\sum_{j=i}^{n}q_j.
\end{cases}
\end{eqnarray}
\end{lem}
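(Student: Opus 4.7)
The plan is to induct on $i$ by exploiting the recursion (\ref{n51})--(\ref{n52}) for $p_i$ and $L_i$, after first pinning down the asymptotic behavior of the quantities $w_i=w\bigl(\sum_{j=i}^{n}q_j\bigr)$ that drive that recursion.

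I begin with a concentration lemma for $w(\cdot)$. Since $w(x)$ is the probability that a Binomial$(l-1,x)$ random variable meets or exceeds $m$, standard Chernoff/Hoeffding bounds give $w(x)\to 1$ whenever $(l-1)x>m$ and $w(x)\to 0$ whenever $(l-1)x<m$ (the inequalities being eventually bounded away from equality, which is the implicit regime of the lemma). The same bounds show that $1-w(x)$ decays exponentially in $l$ at a rate strictly increasing in $x$ for $x>m/(l-1)$; consequently, writing $s_j=\sum_{i=j}^{n}q_i$, one has $(1-w_j)/(1-w_{j+1})\to 0$ whenever $s_{j+1}>m/(l-1)$ (since $s_j>s_{j+1}$). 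This yields three regimes: (i) in Case~1, $w_i\to 0$ for every $i$; (ii) in Case~2 with threshold $k$, $w_j\to 1$ for $j\le k$ and $w_j\to 0$ for $j\ge k+1$; (iii) in Case~3, $w_i\to 1$.

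Cases~1 and~3 then follow easily. For Case~1, a direct induction on $j\le i$ using (\ref{n51})--(\ref{n52}), $w_j\to 0$, and the identity $g_j\circ f_j=\mathrm{id}$ yields $L_{j-1}\to v$ and $p_j-c\to f_j(v)-c$. For Case~3, the bounds $L_{i-1}\in[c_{i-1},v]$ (immediate from $L_{i-1}\ge g_{i-1}(c)=c_{i-1}$ and $L_{i-1}\le U_{i-1}\le v$ by Theorem~\ref{thm3}) show that $f_i(L_{i-1})-c$ is uniformly bounded; combined with $w_i\to 1$ in (\ref{n51}), this forces $p_i-c\to 0$.

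The main obstacle is Case~2: for $j\le k$, (\ref{n51}) gives $p_j-c\to 0$ because $w_j\to 1$, but the quotient $(p_j-c)/(1-w_{j+1})$ appearing in (\ref{n52}) is of indeterminate form $0/0$ whenever $j+1\le k$. I resolve this by inducting on $j$ from $1$ up to $k$: writing $p_j-c=(f_j(L_{j-1})-c)(1-w_j)$ and using the universal bound $L_{j-1}\le v$, one gets $(p_j-c)/(1-w_{j+1})\le (f_j(v)-c)(1-w_j)/(1-w_{j+1})\to 0$ by the ratio comparison of the first step, so $L_j\to g_j(c)=c_j$ for each $j\le k-1$; at $j=k$ the denominator $1-w_{k+1}\to 1$ removes the indeterminacy while $p_k-c\to 0$ still holds, again giving $L_k\to c_k$. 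With $L_k\to c_k$ in hand the remaining steps are routine: $p_{k+1}-c=(f_{k+1}(L_k)-c)(1-w_{k+1})\to f_{k+1}(c_k)-c$ since $w_{k+1}\to 0$, then $L_{k+1}\to g_{k+1}(f_{k+1}(c_k))=c_k$ via (\ref{n52}), and inducting for $j=k+2,\ldots,i$ yields $p_i-c\to f_i(c_k)-c$. The only delicate step is the $0/0$ resolution, which is precisely why the exponential rate comparison in the first step is needed; everything else reduces to continuity of $f_j,g_j$ and algebra.
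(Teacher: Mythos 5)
Your proposal is correct and follows essentially the same route as the paper's proof: pin down the asymptotics of the $w_j$ by concentration and then push limits through the recursion (\ref{n51})--(\ref{n52}). One substantive comment. The $0/0$ indeterminacy you work to resolve for $L_j$ with $j+1\le k$ never actually enters the proof: the lemma concerns only the $p_i$, and for $i\le k$ the bound $p_i-c\le (f_i(v)-c)(1-w_i)\to 0$ uses nothing beyond $L_{i-1}\le v$, while for $i>k$ the forward recursion is seeded at $L_k=g_k\bigl(\tfrac{p_k-c}{1-w_{k+1}}+c\bigr)$, whose denominator tends to $1$ while its numerator tends to $0$; the limits of $L_1,\dots,L_{k-1}$ are simply never needed, and this is exactly how the paper avoids the issue. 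Moreover, your resolution via $(1-w_j)/(1-w_{j+1})\to 0$ is under-justified as written: Hoeffding supplies only an \emph{upper} bound on $1-w_{j+1}$, and to make the ratio vanish you also need a matching large-deviations \emph{lower} bound on the denominator so that the two exponential rates can be compared; since the whole step is dispensable, this does not invalidate your proof of the lemma, but you should either supply that lower bound or delete the step. On the other hand, your handling of the tail $j>k$ (propagating $L_k\to c_k\Rightarrow p_{k+1}-c\to f_{k+1}(c_k)-c\Rightarrow L_{k+1}\to c_k\Rightarrow\cdots$) is a slightly cleaner rendering of the paper's argument, which instead infers $L_j\to L_{j-1}$ by equating the two expressions (\ref{n51}) and (\ref{n52}) for $p_j$ and then collapses all $L_j$, $j\ge k$, onto $c_k$.
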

\begin{figure*}
\begin{minipage}{.48\linewidth}
\begin{center}

\includegraphics[width=60mm, height=40mm]{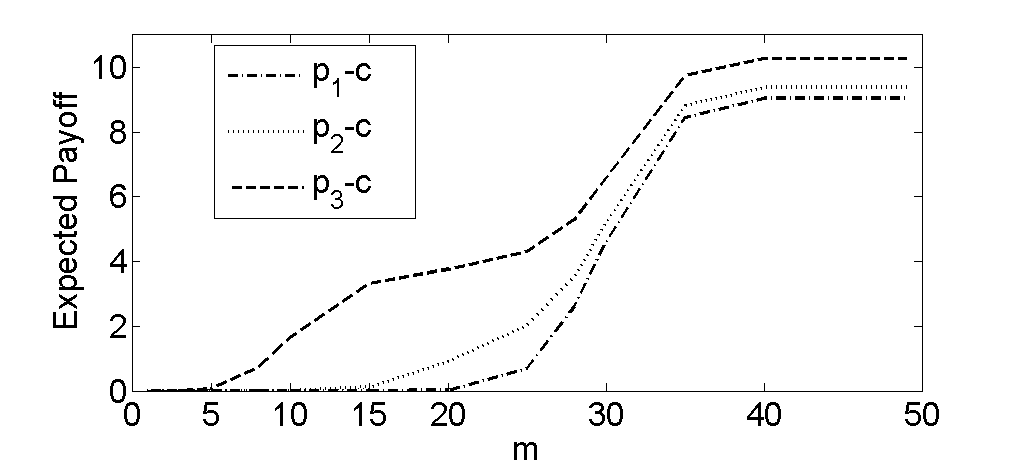}
\caption{\small This figure illustrates the variation of $p_i-c, i=1,\ldots,n$  with $m$ in an example setting: $l=51, n=3, v=100, c=1, q_1=q_2=q_3=0.2$ and $g_i(x)=x^2-i^3$. For $m\leq 5$, $p_i-c\approx 0$ for all $i$. For $5\leq 5\leq 15$, $p_i-c\approx 0$ for $i=1,2.$. For $15\leq m\leq 20$, $p_1-c\approx 0$.  When $m$ exceeds $40$, $p_i-c, i=1,2,3$ closely match the highest possible expected value as Lemma~\ref{eff} indicates.}
\label{fig:asymptotic_rne}
\vspace{-0.4cm}
\end{center}
\end{minipage}\hfill
\begin{minipage}{.48\linewidth}
\begin{center}
\includegraphics[width=60mm,height=40mm]{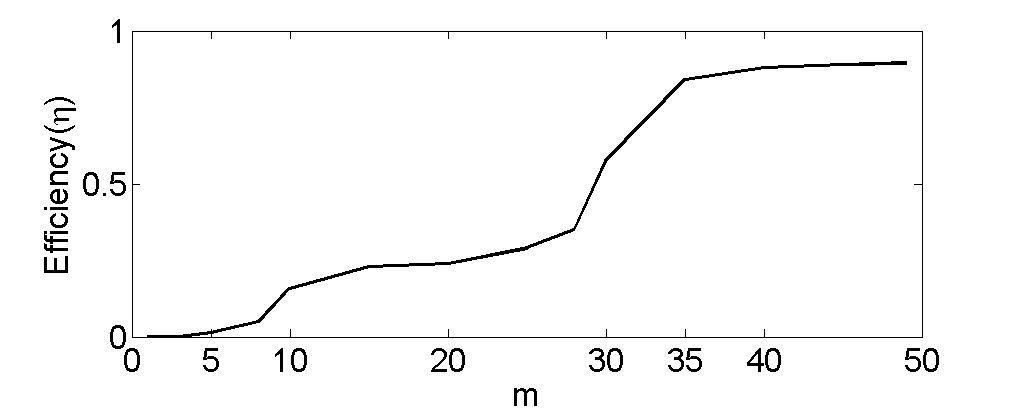}
\caption{\small Variation of efficiency ($\eta$) with $m$ in an example setting: $g_i(x)=x^2-i^3$, $l=51, n=3, q_1=q_2=q_3=0.2, v=100$ and $c=1$. When $m\leq 5$, $\eta\approx0$. When $m\geq 35$, $\eta\approx 1$.}
\label{fig:efficiency}
\vspace{-0.4cm}
\end{center}
\end{minipage}
\end{figure*}

We illustrate Lemma ~\ref{eff} using an example in Figure~\ref{fig:asymptotic_rne}.
Intuitively, competition increases with the decrease in $m$. Primaries choose prices progressively closer to the lower limit $c$. Thus, the expected payoff $p_i-c, i=1\ldots,n$ decreases as $m$ decreases. The above lemma reveals that, as $m$ becomes smaller only those primaries whose channels provide higher transmission rate can have strictly positive payoff i.e. $p_i-c$ is positive (Fig.~\ref{fig:asymptotic_rne}). 

From Lemma~\ref{eff} and (\ref{eq:rne}) we readily obtain
\begin{cor}\label{cor:rne}
When $l\rightarrow \infty$, then,
\begin{eqnarray}
R_{NE}\rightarrow\begin{cases} \sum_{j=1}^{n}q_j.(f_{j}(v)-c)\quad &\mbox{If } (l-1)\sum_{j=1}^n q_j<m\nonumber\\
\sum_{j=i+1}^{n}q_j.(f_{j}(c_{i})-c) \quad &\mbox{If }  (l-1)\sum_{j=i+1}^{n}q_j<m\nonumber\\ & <(l-1)\sum_{j=i}^{n}q_j \nonumber\\ & , i\in\{1,..,n-1\}\nonumber\\
0\quad &\mbox{If } m< (l-1)q_n.\nonumber\\\end{cases}
\end{eqnarray}
\end{cor}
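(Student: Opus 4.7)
The plan is to derive the three cases of the corollary by substituting, term by term, the three-case asymptotic formula of Lemma~\ref{eff} into the defining sum $R_{NE}=\sum_{i=1}^{n} q_i (p_i-c)$ in (\ref{eq:rne}). The key observation is that the thresholds appearing in Lemma~\ref{eff} are the quantities $(l-1)\sum_{j=k}^{n}q_j$ for $k=1,\ldots,n$, and these are monotone decreasing in $k$. So a single value of $m$ fixes, once and for all, an index $i^{*}$ such that $(l-1)\sum_{j=i^{*}+1}^{n}q_j<m<(l-1)\sum_{j=i^{*}}^{n}q_j$, or else puts $m$ above the largest threshold or below the smallest. This $i^{*}$ then determines, simultaneously for every $i\in\{1,\ldots,n\}$, which of the three cases of Lemma~\ref{eff} controls $p_i-c$.

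First I would handle the extreme regime $(l-1)\sum_{j=1}^{n}q_j<m$. For any $i\ge 1$, the chain $(l-1)\sum_{j=i}^{n}q_j\le (l-1)\sum_{j=1}^{n}q_j<m$ shows the first case of Lemma~\ref{eff} applies to every $i$, giving $p_i-c\to f_i(v)-c$ and hence $R_{NE}\to\sum_{i=1}^{n}q_i(f_i(v)-c)$. Next, for the intermediate regime $(l-1)\sum_{j=i^{*}+1}^{n}q_j<m<(l-1)\sum_{j=i^{*}}^{n}q_j$ with $i^{*}\in\{1,\ldots,n-1\}$, I would split the outer sum at $i^{*}$. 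For indices $i\le i^{*}$, monotonicity gives $(l-1)\sum_{j=i}^{n}q_j\ge (l-1)\sum_{j=i^{*}}^{n}q_j>m$, so the third case of Lemma~\ref{eff} applies and each such term contributes $0$ in the limit. For indices $i>i^{*}$, the choice $k=i^{*}$ satisfies $1\le k<i$ and the hypothesis of the second case of Lemma~\ref{eff} is exactly the regime assumption, yielding $p_i-c\to f_i(c_{i^{*}})-c$. Summing gives $R_{NE}\to\sum_{i=i^{*}+1}^{n}q_i(f_i(c_{i^{*}})-c)$, matching the middle case.

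Finally, for the regime $m<(l-1)q_n=(l-1)\sum_{j=n}^{n}q_j$, the same monotonicity argument yields $(l-1)\sum_{j=i}^{n}q_j\ge (l-1)q_n>m$ for every $i$, so the third case of Lemma~\ref{eff} applies uniformly and every term in $R_{NE}$ tends to $0$. Putting the three cases together establishes the corollary.

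I do not anticipate any serious technical obstacle: the only care needed is bookkeeping, namely to check that the index $k=i^{*}$ chosen in the middle regime satisfies $1\le k<i$ for every surviving term in the sum, and that the complementary terms are indeed driven to $0$ by the third case of Lemma~\ref{eff}. Both follow immediately from monotonicity of $\sum_{j=k}^{n}q_j$ in $k$.
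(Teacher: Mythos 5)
Your proposal is correct and matches the paper's approach: the paper simply states that the corollary follows ``readily'' from Lemma~\ref{eff} and (\ref{eq:rne}), and your case-by-case substitution of the asymptotic values of $p_i-c$ into the sum, using monotonicity of $(l-1)\sum_{j=k}^{n}q_j$ in $k$ to decide which branch of Lemma~\ref{eff} governs each index, is exactly the bookkeeping that justification entails.
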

Thus, asymptotically $R_{NE}$ decreases as $m$ decreases (Fig.~\ref{fig:asymptotic_rne}).
\begin{defn}\label{defn:eta}
Let $R_{OPT}$ be  the maximum  expected profit earned through collusive selection of
 prices by the primaries. \emph{Efficiency} $\eta$ is defined as $\dfrac{l\cdot R_{NE}}{R_{OPT}}.$
\end{defn}
Efficiency is a measure of the reduction in the expected profit owing to competition. 
The asymptotic behavior of $\eta$ is characterized by the following lemma.
\begin{lem}\label{thresh}
When $l\rightarrow \infty$, then
\begin{eqnarray}
\eta \rightarrow\begin{cases}
 1 \quad &\mbox{If } (l-1)\sum_{j=1}^{n}q_j<m \nonumber\\
 0 \quad & \mbox{If } m< (l-1)q_n.
         \end{cases}
         \end{eqnarray}
\end{lem}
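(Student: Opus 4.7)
The plan is to combine the asymptotic characterization of $R_{NE}$ given in Corollary~\ref{cor:rne} with a direct asymptotic analysis of $R_{OPT}$, and then form the ratio $\eta = l R_{NE}/R_{OPT}$ in each of the two regimes.

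First I would characterize $R_{OPT}$ in each regime. Under collusion, primaries observe all channel states and jointly choose prices. Since the maximum allowable price for a state-$j$ channel is $f_j(v)$ (penalty exactly $v$), and since sold channels are selected by the secondaries in ascending order of penalty, the collusive optimum is: offer every available channel, assign strictly lower penalties (within $[0,v]$) to channels in higher states so that high-state channels have priority, and set every offered price at its maximum, subject to at most $m$ channels being sold. Letting $N_j \sim \mathrm{Binomial}(l,q_j)$ be the number of state-$j$ channels and $Y = \sum_j N_j$, the collusive revenue equals $\sum_j \hat{N}_j (f_j(v)-c)$, where $\hat{N}_j$ is determined by a waterfilling from state $n$ downward until either every available channel is offered or $m$ channels have been sold.

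For the first case, $(l-1)\sum_j q_j < m$, the mean of $Y$ is at most $m$, so a Chernoff/Hoeffding bound gives $\Pr(Y>m) \to 0$. Hence with probability approaching $1$ every available channel is sold, so $\hat{N}_j = N_j$ and $R_{OPT} \to l\sum_j q_j(f_j(v)-c)$. Combining with Corollary~\ref{cor:rne}, which in exactly this regime yields $R_{NE} \to \sum_j q_j(f_j(v)-c)$, one obtains $\eta = lR_{NE}/R_{OPT} \to 1$.

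For the second case, $m < (l-1)q_n$, the mean of $N_n$ strictly exceeds $m$, so concentration gives $\Pr(N_n \geq m) \to 1$; hence all $m$ secondaries buy a state-$n$ channel with probability approaching $1$, yielding $R_{OPT} \geq m(f_n(v)-c)(1-o(1))$, a quantity bounded away from $0$. Corollary~\ref{cor:rne} already gives $R_{NE} \to 0$, but to conclude I need the stronger bound $l R_{NE} \to 0$. Using Lemma~\ref{lu}, $p_i-c = (f_i(L_{i-1})-c)(1-w_i)$; since $\sum_{j\geq i} q_j \geq q_n$ and $m < (l-1)q_n$ with a linear-in-$l$ margin, Hoeffding's inequality gives $1-w_i \leq \exp(-C(l-1))$ for some $C>0$. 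Since $f_i(L_{i-1}) \leq f_i(v)$ is bounded in $l$, $lR_{NE}$ decays exponentially while $R_{OPT}$ is bounded below by a positive constant, so $\eta \to 0$.

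The main obstacle I expect is on the $R_{OPT}$ side: the paper does not explicitly derive the collusive benchmark, so I must justify that full-information joint pricing indeed reduces to the waterfilling-by-state structure described above, and then pass to the limit carefully using the concentration bounds. A subtler point is that both cases implicitly require a positive linear-in-$l$ margin (e.g.\ $m/(l-1)\to\beta$ with $\beta > q$ in Case 1 and $\beta < q_n$ in Case 2) for the concentration arguments to deliver convergence to the claimed limits rather than merely boundedness; this mild regularity is standard here and should be stated explicitly at the outset of the proof.
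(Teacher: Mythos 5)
Your proposal is correct and follows essentially the same route as the paper: combine the asymptotics of $R_{NE}$ (Corollary~\ref{cor:rne} and the Hoeffding-based exponential bound on $1-w_i$) with bounds on $R_{OPT}$, using a concentration lower bound $R_{OPT}\geq E(\min(Z,m))(f_n(v)-c)$ in the small-$m$ regime. The only difference is in the first case, where the paper avoids your waterfilling characterization of the collusive optimum entirely by pairing the trivial upper bound $R_{OPT}\leq l\sum_{i}q_i(f_i(v)-c)$ with the a priori fact $\eta\leq 1$; and your remark about requiring a positive linear-in-$l$ margin is exactly how the paper phrases its hypotheses ($m\geq (l-1)(\sum_j q_j+\epsilon)$, $m\leq (l-1)(q_n-\epsilon)$).
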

We illustrate the variation of efficiency with $m$ using an example in Figure~\ref{fig:efficiency}. Intuitively, the competition decreases with increase in  $m$; thus primaries set their penalties close to the highest possible value for all states. This leads to high efficiency.  On the other hand, competition becomes intense when $m$ decreases, thus, $R_{NE}$ becomes very small as Corollary ~\ref{cor:rne} reveals. But, if primaries collude,  primaries can maximize the aggregate payoff by offering only the channels of highest possible states by selecting highest penalties. Thus, efficiency becomes very small when $m$ is very small (Fig.~\ref{fig:efficiency}).

The transmission rates of an available channel constitute a continuum in practice. We have discretized the transmission rates of an available channel in multiple states for the ease of analysis. However, the theory allows us to investigate numerically how the penalty distribution strategies behave in the asymptotic limit (Fig.~\ref{fig:example_nvaries}). 

\begin{figure}
\includegraphics[width=100mm,height=40mm]{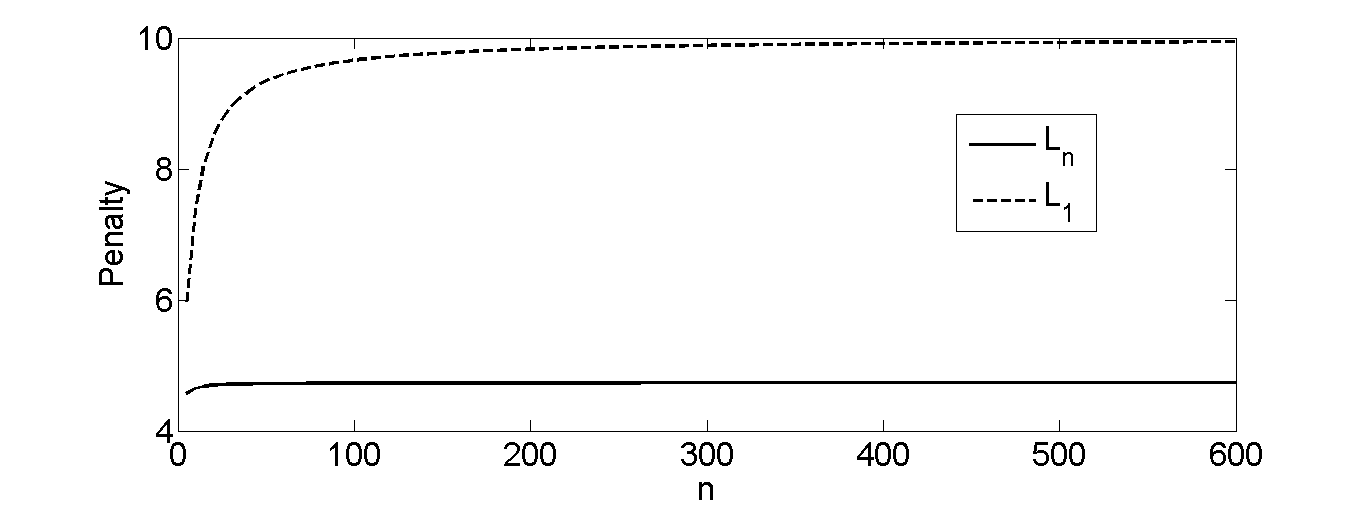}
\caption{\small Figure shows the plot of $L_1$ and $L_n$ with $n$. We consider $v=10, c=0, l=21, m=10, q_1=\ldots=q_n=0.5/n$ and the following penalty function: $g_i(x)=x-(r_{max}-r_{min})*i/n$, where $r_{max}$ denotes the maximum possible transmission rate which a secondary user can transmit and $r_{min}$ denotes the minimum transmission rate required to transmit the signal. Note that the penalty function is of the form $g_i(x)=h_1(x)-h_2(i)$,  where, $h_1(x)=x$ is strictly increasing concave function in $x$, and $h_2(i)=(r_{max}-r_{min})*i/n$ is strictly decreasing in $i$. Thus, $g_i(\cdot)$ satisfies Assumption 1. We have equally divided the available rate region into the number of states $n$. We consider that $h_2(\cdot)$ is the representative rate at state $i$. We consider $r_{max}=3.5$ and $r_{min}=0.5$.}
\label{fig:example_nvaries}
\vspace{-0.4cm}
\end{figure}
Fig.~\ref{fig:example_nvaries} reveals that $L_n$ increases with $n$ and eventually converges to a point which is strictly less than $v$. On the other hand, $L_1$ converges to $v$ as $n$ becomes large (Fig.~\ref{fig:example_nvaries}). Thus, the lower endpoints (and thus, the upper endpoints (since $U_{j}=L_{j-1}$)) of penalty selection strategies converge at different points when $n$ becomes large.


\section{Repeated Game}\label{sec:repeatedgame}

We have so far considered the setting where primaries and secondaries interact only once. In practice, however, primaries and secondaries interact repeatedly. To analyze repeated interactions we consider a scenario where the one shot game is repeated an infinite number of times. We characterize the subgame perfect Nash Equilibrium where the  expected payoff of primaries can be made arbitrarily close to the payoff that primaries would obtain if they would collude.  

The one shot game is played at time slots $t=0,1,2,\ldots,$.  Let, $\phi_{i,t}$ denote the expected payoff at stage $t$, when the channel state is $i$. Hence, the payoff of a primary, when its channel state is $i$, is given by 
\begin{align}
\phi_i=(1-\delta)\sum_{t=0}^{\infty}\delta^{t}\phi_{i,t}
\end{align}
where, $\delta\in(0,1)$ is the discount factor. 

Since NE constitutes a weak notion in repeated game \cite{mwg}, we will focus on Subgame Perfect Nash Equilibrium (SPNE).
\begin{defn}\cite{mwg}
Strategy profile $(S_1,\ldots,S_n)$ constitutes a SPNE if the strategy profile prescribes an NE at every subgame.
\end{defn}
The one shot unique NE that we have characterized, is also a SPNE in repeated game. Total expected payoff that a primary can get, is  $R_{NE}$ (Definition~\ref{payoff}) under one-shot game. We have already shown that this payoff is not always efficient (Lemma~\ref{thresh}) i.e. $\eta\nrightarrow 1$. Here, we present an efficient SPNE (Theorem~\ref{spne}), provided that $\delta$ is sufficiently high.

Fix a sequence  of $\{\epsilon_i\}_{1}^{n}$ such that 
\begin{eqnarray}
 0=\epsilon_1<\epsilon_2<\ldots<\epsilon_n. \label{condn2}\\
\epsilon_i\leq v-L_{i-1},
\epsilon_i\leq v-g_i((f_i(v)-c)(1-w_i)+c).\label{condn3}
\end{eqnarray}
We provide a {\em Nash Reversion Strategy} such that a primary will get a payoff arbitrarily close to the payoff that it would obtain when all primaries collude.\\
\textbf{Strategy Profile ($SP_R$)}:
\textit{ Each primary selects penalty $v-\epsilon_i$, (where $\epsilon_i$ satisfies (\ref{condn2}) and (\ref{condn3})), when state of the channel is $i$,  at $t=0$ and also at time slot $\tau=1,2,\ldots$ as long as all other primaries have chosen $v-\epsilon_j$, $j\in \{1,\ldots,n\}$, when their channel state is $j$ at all time slots $0,1,\ldots,\tau-1$. Otherwise, play the unique one shot game NE strategy $\psi_i(\cdot)$ (Lemma~\ref{lm:computation}).}

\begin{rmk}
Note that if everyone sticks to the strategy, then each primary selects a penalty of $v-\epsilon_i$ at every time slot, when the channel state is $i$. Under the collusive setting, each primary selects penalty $v$. Thus, for every $\gamma>0$, we can choose sufficiently small $\epsilon_i, i=1,\ldots,n$ and sufficiently high $\delta$ such that the efficiency (definition~\ref{defn:eta}) is at least $1-\gamma$.
\end{rmk}

Now we are ready to state the main result of this section.
\begin{thm}\label{spne}
Suppose $\{\epsilon_i\}_{i=1}^{n}$ are such that they satisfy (\ref{condn2}) and (\ref{condn3}). Then, there exists $\delta_{min}\in (0,1)$ such that for any discount factor $\delta\geq \delta_{min}$ the strategy profile $SP_R$ is a SPNE.
\end{thm}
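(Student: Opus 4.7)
I apply the one-shot deviation principle (OSDP): to show $SP_R$ is an SPNE it suffices to verify, in every subgame, that no primary can strictly improve its total discounted payoff by deviating in a single period while following $SP_R$ thereafter. Every subgame falls into one of two phases. In the \emph{punishment phase} (after some primary has already deviated), $SP_R$ mandates the one-shot NE strategies $(\psi_1,\ldots,\psi_n)$ of Lemma~\ref{lm:computation} in perpetuity, irrespective of current and future actions. Since opponents' continuation play is independent of my current choice and $(\psi_1,\ldots,\psi_n)$ is a stage-game NE by Theorem~\ref{thm4}, conforming is a stage-game best response; hence no one-shot deviation is profitable in the punishment phase, and this part of the argument is valid for every $\delta\in(0,1)$.

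In the \emph{cooperative phase} (no prior deviation), consider a primary whose channel is in state $i$. Let $\phi_i^{coop}$ denote its one-period expected payoff when all primaries follow $SP_R$ (selecting $v-\epsilon_j$ at state $j$), and let $\phi_i^{dev}$ be the supremum of one-period expected payoffs over all deviating penalties $y\neq v-\epsilon_i$, with opponents still following $SP_R$. Writing $R_{coop}=\sum_j q_j\phi_j^{coop}$ and recalling $R_{NE}=\sum_j q_j(p_j-c)$ from Definition~\ref{payoff}, the OSDP condition at state $i$ reads
\begin{align}
\phi_i^{coop}+\frac{\delta}{1-\delta}R_{coop}\geq \phi_i^{dev}+\frac{\delta}{1-\delta}R_{NE},
\end{align}
equivalently $\phi_i^{dev}-\phi_i^{coop}\leq \frac{\delta}{1-\delta}(R_{coop}-R_{NE})$. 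The pivotal step is to establish $R_{coop}>R_{NE}$; given this, since $\phi_i^{dev}\leq f_i(v)-c$ is uniformly bounded and the right-hand side grows without bound as $\delta\uparrow 1$, a single threshold $\delta_{min}\in(0,1)$ makes the inequality hold at every state $i$, completing the proof.

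To obtain $R_{coop}\geq R_{NE}$ I invoke condition~(\ref{condn3}): $\epsilon_i\leq v-L_{i-1}$ yields $f_i(v-\epsilon_i)\geq f_i(L_{i-1})$. Moreover, in the cooperative phase the sale probability $p_i^*$ for a state-$i$ primary satisfies $p_i^*\geq 1-w_i$, since whenever fewer than $m$ other primaries have states $\geq i$ the symmetric tie-break still guarantees a sale. Hence
\begin{align}
\phi_i^{coop}=(f_i(v-\epsilon_i)-c)\,p_i^*\geq (f_i(L_{i-1})-c)(1-w_i)=p_i-c,
\end{align}
and weighting by $q_i$ gives $R_{coop}\geq R_{NE}$, strictly whenever same-state ties arise with positive probability. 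The main obstacle is this strict separation together with an explicit enumeration of $\phi_i^{dev}$: the deviator's useful moves are penalties just below $v-\epsilon_k$ for $k\in\{1,\ldots,n\}$, giving $\phi_i^{dev}=\max_k(f_i(v-\epsilon_k)-c)(1-w_{k+1})$; the second inequality in~(\ref{condn3}) is needed to control the deviation gap $\phi_i^{dev}-\phi_i^{coop}$ so that $\delta_{min}$ remains strictly below $1$.
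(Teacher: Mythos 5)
Your proof is correct and follows essentially the same route as the paper: check one-shot deviations in both the punishment phase (where the stage-game NE of Theorem~\ref{thm4} makes conformity a best response for every $\delta$) and the cooperative phase, bound the one-period deviation gain, and use the strict gap $R_{coop}-R_{NE}>0$ — arising from the tie-breaking bonus $\beta_j>0$ when all state-$j$ primaries quote $v-\epsilon_j$ — to obtain a single threshold $\delta_{min}<1$, exactly as the paper does via its explicit condition (\ref{r10a}). The only point you leave implicit is that same-state ties do occur with positive probability (e.g., exactly $m$ of the other $l-1$ primaries in state $j$ and the rest unavailable, which has probability $\binom{l-1}{m}q_j^m(1-q)^{l-1-m}>0$); this is precisely what makes the denominator of the paper's threshold strictly exceed its numerator and hence $\delta_{min}$ strictly less than $1$.
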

\begin{rmk}
Thus, there exists a SPNE strategy profile (for sufficiently high $\delta$)  where each primary obtains an expected payoff arbitrarily close to the payoff it would have obtained if all primaries collude.
\end{rmk}
\section{Pending Question: What happens when Assumption 1 is relaxed?}\label{sec:assumpnecessary}
We show that if penalty functions do not satisfy Assumption 1, then the system may have multiple NEs (section~\ref{sec:multipleNEs}), asymmetric NE (section~\ref{sec:asymmetricNE}) and the strategy profile that we have characterized in (\ref{c5}) may not be an NE (section~\ref{sec:counterexample2}).
\subsection{Multiple NEs}\label{sec:multipleNEs}
We first give a set of penalty functions which do not satisfy Assumption 1 and then we state a strategy profile which is an NE for this set of penalties along with the strategy profile that we have characterized in (\ref{c5}). \\
Let $f_i(\cdot)$ be such that 
\begin{align}\label{mt1}
\dfrac{f_i(x)-c}{f_j(x)-c}=\dfrac{f_i(y)-c}{f_j(y)-c}\quad (x>y>g_i(c), i<j)
\end{align}
Examples of such kind of functions are $g_i(x)=(x-c)^p/i$. 

It can be easily verified that strategy profile, described as in (\ref{c5}), is still an NE strategy profile under the above setting. We will provide another NE strategy profile. 

First, we will introduce some notations which will be used throughout this section. 
\begin{eqnarray}
\bar{p}_i=(f_i(v)-c)(1-w_1)+c \label{mt3}\\
\bar{L}=g_1(\bar{p}_1)\label{mt4}
\end{eqnarray}

Now, we show that there exists a symmetric NE strategy profile where a primary selects the same strategy for its each state of the channel. This establishes that the the system has multiple NEs.

Let\rq{}s consider the following symmetric strategy profile where at channel state $i$ a primary\rq{}s strategy profile is $\bar{\psi}_i(\cdot)=\bar{\psi}(\cdot)$ for $i=1,\ldots,n$, where 
\begin{align}\label{dfmt}
\bar{\psi}(x)=& 0 \quad(\text{if } x<\bar{L})\nonumber\\
& \dfrac{1}{\sum_{j=1}^{n}q_j}w^{-1}(1-\dfrac{\bar{p}_1-c}{f_1(x)-c})\quad (\text{if } v\geq x\geq \bar{L})\nonumber\\
& 1\quad (\text{if } x>v)
\end{align}
First, we show that $\bar{\psi}(\cdot)$ is a probability d.f.
\begin{lem}\label{lm:strategymultiple}
$\bar{\psi}(\cdot)$ as defined in (\ref{dfmt}) is a probability distribution function.
\end{lem}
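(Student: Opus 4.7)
The plan is to verify the three defining properties of a cumulative distribution function for $\bar{\psi}(\cdot)$: (i) it takes values in $[0,1]$, (ii) it is non-decreasing (in fact strictly increasing on its support), and (iii) it is continuous, with the correct limiting behavior at $\pm\infty$. Since the function is explicitly $0$ below $\bar{L}$ and $1$ above $v$, the real work is to analyze the middle piece on $[\bar{L}, v]$ and check that it matches those constants at the boundary.

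First, I would verify the boundary values. At $x=\bar{L}$, by definition $\bar{L}=g_1(\bar{p}_1)$, so $f_1(\bar{L})=\bar{p}_1$; hence the fraction $(\bar{p}_1-c)/(f_1(\bar{L})-c)$ equals $1$, making the argument of $w^{-1}$ equal to $0$. Since $w(\cdot)$ from (\ref{d4}) satisfies $w(0)=0$, we get $w^{-1}(0)=0$ and thus $\bar{\psi}(\bar{L})=0$, matching the first branch continuously. At $x=v$, plugging in $\bar{p}_1=(f_1(v)-c)(1-w_1)+c$ from (\ref{mt3}) gives $(\bar{p}_1-c)/(f_1(v)-c)=1-w_1$, so the argument of $w^{-1}$ is $w_1=w(\sum_{j=1}^n q_j)$; hence $w^{-1}(w_1)=\sum_{j=1}^n q_j$ and $\bar{\psi}(v)=1$, again matching the third branch continuously.

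Next, I would establish strict monotonicity on $[\bar{L},v]$ by tracking monotonicity through the composition: $f_1(\cdot)$ is strictly increasing (stated in Section II.C), so $f_1(x)-c$ is strictly increasing and positive on $[\bar{L},v]$; hence $(\bar{p}_1-c)/(f_1(x)-c)$ is strictly decreasing; hence $1-(\bar{p}_1-c)/(f_1(x)-c)$ is strictly increasing; and since $w(\cdot)$ is strictly increasing and continuous on $[0,1]$ (Lemma~\ref{lm:computation} already invokes this), its inverse $w^{-1}(\cdot)$ is also strictly increasing and continuous, so $\bar{\psi}(\cdot)$ is strictly increasing on $[\bar{L},v]$. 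Combined with the endpoint calculation, this shows that on $[\bar{L},v]$ the function takes values in $[0,1]$ and is continuous there. Continuity at the junctions $x=\bar{L}$ and $x=v$ follows from the endpoint evaluations done above, and continuity outside $[\bar{L},v]$ is immediate from the constant definitions.

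I do not anticipate a substantive obstacle: the only subtlety is to confirm that the argument $1-(\bar{p}_1-c)/(f_1(x)-c)$ stays within $[0,1]$ for $x\in[\bar{L},v]$ so that $w^{-1}$ is well-defined on the middle branch, which follows from the monotonicity argument since the argument ranges from $0$ (at $x=\bar{L}$) to $w_1 \in [0,1]$ (at $x=v$). Once these properties are assembled, $\bar{\psi}(\cdot)$ satisfies all the defining properties of a continuous, strictly increasing probability distribution function on $[\bar{L},v]$, which is precisely the claim.
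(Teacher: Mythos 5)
Your proposal is correct and follows essentially the same route as the paper's own proof: verify that the argument of $w^{-1}$ stays in its domain on $[\bar{L},v]$, evaluate the endpoints to get $\bar{\psi}(\bar{L})=0$ and $\bar{\psi}(v)=1$, and conclude continuity and strict monotonicity from the composition of the strictly increasing continuous maps $x\mapsto 1-\frac{\bar{p}_1-c}{f_1(x)-c}$ and $w^{-1}(\cdot)$. No gaps.
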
 

Note that in this strategy profile each primary selects the same strategy irrespective of the channel state. Next, we show that strategy profile as described in (\ref{dfmt}) is an NE strategy profile. 
\begin{thm}\label{thm:multipleNE}
Consider the strategy profile where $\bar{\psi}_i(\cdot)=\bar{\psi}(\cdot)$, for $i=1,\ldots, n$. This strategy profile constitute an NE.
\end{thm}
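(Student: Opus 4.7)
The plan is to fix the symmetric profile in which every competitor uses $\bar{\psi}(\cdot)$ regardless of its channel state, compute the expected payoff of a tagged primary (whose channel is in state $i$) as a function of the penalty $x$ it announces, and then verify that this payoff is (i) constant on the support $[\bar{L},v]$ of $\bar{\psi}$ and (ii) not larger off the support. Because each competitor plays $\bar{\psi}$ when its channel is available (probability $q=\sum_j q_j$) and sets penalty $v+1>v$ when unavailable, the probability that a single competitor offers a penalty at most $x$ is $q\bar{\psi}(x)$ for $x\leq v$. Continuity of $\bar{\psi}$ (Lemma~\ref{lm:strategymultiple}) rules out ties almost surely, so the tagged primary sells iff fewer than $m$ of the $l-1$ competitors offer a lower penalty; using Definition~\ref{defn:w} this yields
\begin{equation*}
\phi_i(x) \;=\; (f_i(x)-c)\bigl(1-w(q\bar{\psi}(x))\bigr),\qquad x\in[g_i(c),v].
\end{equation*}

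Next I would read off from the closed form (\ref{dfmt}) that $q\bar{\psi}(x)=w^{-1}\!\bigl(1-(\bar{p}_1-c)/(f_1(x)-c)\bigr)$ on $[\bar{L},v]$, which gives
\begin{equation*}
\phi_i(x) \;=\; (\bar{p}_1-c)\,\frac{f_i(x)-c}{f_1(x)-c}\qquad (x\in[\bar{L},v]).
\end{equation*}
Hypothesis (\ref{mt1}) is precisely the statement that $(f_i(x)-c)/(f_1(x)-c)$ is independent of $x$ for $x>g_1(c)$; since $\bar{L}=g_1(\bar{p}_1)>g_1(c)$, this ratio is constant on $[\bar{L},v]$ and equals $(f_i(v)-c)/(f_1(v)-c)$. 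Hence $\phi_i(x)$ is a constant $\phi_i^{\star}$ on the support of $\bar{\psi}$, which is the indifference condition required for any mixing over $[\bar{L},v]$ to be a best response.

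It remains to check deviations off the support. For $x>v$ no secondary is willing to buy, so $\phi_i(x)=0<\phi_i^{\star}$. For $x<\bar{L}$ we have $\bar{\psi}(x)=0$, so every competitor's offered penalty exceeds $x$ almost surely and the tagged primary sells with certainty, giving $\phi_i(x)=f_i(x)-c$, which is strictly increasing in $x$. Continuity at the boundary follows from $f_1(\bar{L})=\bar{p}_1$ combined with (\ref{mt1}), which yields
\begin{equation*}
f_i(\bar{L})-c \;=\; (\bar{p}_1-c)\,\frac{f_i(v)-c}{f_1(v)-c} \;=\; \phi_i^{\star};
\end{equation*}
hence the supremum of $\phi_i$ on $[g_i(c),\bar{L})$ equals $\phi_i^{\star}$ and is attained only in the limit, so no deviation below $\bar{L}$ is strictly profitable. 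Combining the three regimes, $\bar{\psi}$ is a best response in every state $i$ against opponents using $\bar{\psi}$, so the symmetric profile is an NE. The main technical step I expect is the ratio identity used to match the payoff at the boundary point $\bar{L}$ with the common value on the support; this match relies critically on (\ref{mt1}) holding with equality, which explains both why this second NE exists outside the class covered by Assumption~1 and why a single formula $\bar{\psi}$ suffices across all states $i$.
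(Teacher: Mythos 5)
Your proposal is correct and follows essentially the same route as the paper: compute $\phi_i(x)=(f_i(x)-c)\bigl(1-w(q\bar{\psi}(x))\bigr)$, use the closed form of $\bar{\psi}$ together with (\ref{mt1}) to show the payoff is the constant $\bar{p}_i-c$ on $[\bar{L},v]$, and bound deviations below $\bar{L}$ by $f_i(\bar{L})-c=\bar{p}_i-c$ (the paper isolates this boundary identity as Observation~\ref{obs:identity}). No gaps.
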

 
\subsection{Asymmetric NE}\label{sec:asymmetricNE}
If Assumption 1 is not satisfied, then there may exist asymmetric NEs in contrast to the setting when Assumption 1 is satisfied. We again consider the penalty functions are of the type given in (\ref{mt1}). 
We consider $n=2, l=2, m=1$ and $q_1=q_2$.
Here, we denote $\psi_{i,j}(\cdot)$ as the strategy profile for primary $i , i=1,2 $ at channel state $j$, $j=1,2$. Let 
\begin{align}\label{eq:asymce1}
\hat{L}=g_2((f_2(v)-c)(1-q_1-q_2)/(1-q_2)+c)
\end{align}
Note from (\ref{eq:asymce1}) that 
\begin{align}
(f_2(\hat{L})-c)(1-q_2)=(f_2(v)-c)(1-q_1-q_2)\nonumber\\
(f_1(\hat{L})-c)(1-q_2)=(f_1(v)-c)(1-q_1-q_2) \quad (\text{from } (\ref{mt1}))\nonumber\\
(f_1(\hat{L})-c)(1-q_1)=(f_1(v)-c)(1-q_1-q_2)\quad (\text{since } q_1=q_2)
\end{align}
Next 
\begin{align}
\hat{L}_{low}=g_1((f_1(\hat{L})-c)(1-q_2)+c)
\end{align}
Again using (\ref{mt1}) and the fact that $q_1=q_2$ we also obtain
\begin{align}
f_2(\hat{L}_{low})-c=(f_2(\hat{L})-c)(1-q_1)
\end{align}
Consider the following strategy profile
\begin{align}\label{eq:asymmne1}
\psi_{1,1}(x)& =1,\quad (x\geq v),\nonumber\\
& =\dfrac{1}{q_1}(1-q_2-\dfrac{(f_2(v)-c)(1-q_1-q_2)}{f_2(x)-c})\quad (v>x>\hat{L})\nonumber\\
& =0, \quad x\leq \hat{L}\nonumber\\
\psi_{1,2}(x)& =1, \quad (x\geq \hat{L}), \nonumber\\
& =\dfrac{1}{q_2}(1-\dfrac{(f_1(\hat{L})-c)(1-q_2)}{f_1(x)-c})\quad (\hat{L}>x>\hat{L}_{low})\nonumber\\
& =0,\quad x\leq \hat{L}_{low}
\end{align}
and
\begin{align}\label{eq:asymmne2}
\psi_{2,1}(x)& =1,\quad (x\geq \hat{L}),\nonumber\\
& =\dfrac{1}{q_1}(1-\dfrac{(f_2(\hat{L})-c)(1-q_1)}{f_2(x)-c})\quad \hat{L}_{low}<x<\hat{L}\nonumber\\
& =0.\quad x\leq \hat{L}_{low}\nonumber\\
\psi_{2,2}(x)&=1,\quad x\geq v\nonumber\\
& =\dfrac{1}{q_2}(1-q_1-\dfrac{(f_1(v)-c)(1-q_1-q_2)}{f_1(x)-c})\quad \hat{L}>x>v\nonumber\\
& =0, \quad x\leq \hat{L}
\end{align}
It is easy to discern that the above strategy profile is a continuous distribution function. Also note that $\psi_{1,1}\neq \psi_{2,1}$ and $\psi_{1,2}\neq \psi_{2,2}$. Hence, the strategy profile is asymmetric. The following theorem confirms that the strategy profile that we have just described is indeed an NE.
\begin{thm}\label{thm:asymmne}
The strategy profile $\psi_{1}=\{\psi_{1,1}(\cdot),\psi_{1,2}(\cdot)\}$ and $\psi_2=\{\psi_{2,1}(\cdot),\psi_{2,2}(\cdot)\}$ as described in (\ref{eq:asymmne1}) and (\ref{eq:asymmne2}) respectively is an NE. 
\end{thm}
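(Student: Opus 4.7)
The plan is to verify the equilibrium conditions directly: for each primary $i\in\{1,2\}$ and each state $j\in\{1,2\}$, I would show that the expected payoff $\phi_{i,j}(x)$ computed against the opponent's strategy in~(\ref{eq:asymmne1})--(\ref{eq:asymmne2}) is constant on the support of $\psi_{i,j}(\cdot)$ and does not exceed that constant at any $x$ outside it. By the remark following Definition~\ref{br}, this suffices to conclude that the profile is an NE.

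Before the case analysis I would record the boundary identities already derived just below~(\ref{eq:asymce1}): namely $(f_1(\hat{L})-c)(1-q_1)=(f_1(v)-c)(1-q_1-q_2)$, $(f_2(\hat{L})-c)(1-q_2)=(f_2(v)-c)(1-q_1-q_2)$, and $f_2(\hat{L}_{low})-c=(f_2(\hat{L})-c)(1-q_1)$; invoking~(\ref{mt1}) I also obtain $f_1(\hat{L}_{low})-c=(f_1(\hat{L})-c)(1-q_2)$. These at once confirm that the four pieces in~(\ref{eq:asymmne1})--(\ref{eq:asymmne2}) are valid continuous CDFs (monotone, hitting $0$ and $1$ at the correct endpoints). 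Since $l=2,\,m=1$, whenever primary $i$ chooses a penalty $x\leq v$ it sells if and only if its opponent either does not offer or offers a strictly higher penalty, so
\begin{align*}
\phi_{i,j}(x)=(f_j(x)-c)\bigl[(1-q_1-q_2)+q_1(1-\psi_{3-i,1}(x))+q_2(1-\psi_{3-i,2}(x))\bigr],
\end{align*}
and $\phi_{i,j}(x)=0$ for $x>v$.

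I would then perform four substitutions for primary~$1$; primary~$2$ follows by the symmetry induced by $q_1=q_2$ together with the state-swap built into the asymmetric definition. For $x\in[\hat{L},v]$ (support of $\psi_{1,1}$), $\psi_{2,1}(x)=1$ and $\psi_{2,2}(x)$ from~(\ref{eq:asymmne2}) collapse the bracket to $(f_1(v)-c)(1-q_1-q_2)/(f_1(x)-c)$, giving the constant $\phi_{1,1}(x)=(f_1(v)-c)(1-q_1-q_2)$. For $x\in[\hat{L}_{low},\hat{L}]$ (support of $\psi_{1,2}$), $\psi_{2,2}(x)=0$ and $\psi_{2,1}(x)$ from~(\ref{eq:asymmne2}) collapse the bracket to $(f_2(\hat{L})-c)(1-q_1)/(f_2(x)-c)$, giving $\phi_{1,2}(x)=(f_2(\hat{L})-c)(1-q_1)$, also constant. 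For $x$ outside the relevant support the checks split into: (a)~$x>v$, where $\phi_{i,j}(x)=0<u_{i,j,\max}$; (b)~$x<\hat{L}_{low}$, where the opponent is certain to offer a higher penalty, so $\phi_{i,j}(x)=f_j(x)-c$, strictly increasing in $x$ and meeting the ceiling exactly at $\hat{L}_{low}$ by the boundary identities; and (c)~when primary~$1$'s state~$1$ tries a penalty in the ``state-$2$'' interval $(\hat{L}_{low},\hat{L})$ (and symmetrically state~$2$ on $(\hat{L},v)$), a direct substitution combined with~(\ref{mt1})---which forces $(f_1(x)-c)/(f_2(x)-c)$ to be the same constant for every $x$---shows that $\phi_{1,1}(x)$ equals $(f_1(\hat{L})-c)(1-q_1)=(f_1(v)-c)(1-q_1-q_2)=u_{1,1,\max}$, so the primary is merely indifferent and not strictly better off.

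The principal obstacle is bookkeeping rather than conceptual difficulty: the asymmetric profile forces me to partition each state into three or four penalty regions and track $\psi_{2,1}$ and $\psi_{2,2}$ separately in each. The conceptual step that makes every region collapse to a one-line calculation is the ratio-invariance~(\ref{mt1}), which guarantees that $(f_1(x)-c)/(f_2(x)-c)$ is the same constant $(f_1(v)-c)/(f_2(v)-c)$ throughout; this is what makes $\phi_{1,1}$ and $\phi_{1,2}$ literally flat across both candidate supports simultaneously, and it is precisely the structural feature that Assumption~1 rules out, thereby permitting this asymmetric NE to coexist with the symmetric one constructed via~(\ref{c5}).
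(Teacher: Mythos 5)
Your proposal is correct and follows essentially the same route the paper intends: the authors omit the proof of Theorem~\ref{thm:asymmne}, stating it is similar to that of Theorem~\ref{thm:multipleNE}, which is exactly your direct verification — substitute the opponent's strategy into $\phi_{i,j}(x)=(f_j(x)-c)r_i(x)$, use the boundary identities following~(\ref{eq:asymce1}) to show the payoff is constant on each support, and invoke the ratio-invariance~(\ref{mt1}) to show deviations into the other interval yield at most the same payoff. Your region-by-region calculations (including the indifference, rather than strict loss, on cross-interval deviations) check out.
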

The above theorem confirms that there may exist an asymmetric NE when the penalty functions are of the form (\ref{mt1}).
\subsection{Strategy Profile Described in (\ref{c5}) may not be an NE}\label{sec:counterexample2}
We first describe a set of penalty functions that do not satisfy (\ref{con1}) and then we show that the strategy profile that we have characterized in (\ref{c5}) is not an NE. \footnote{Note that in theorem~\ref{thm4} we have shown that the strategy profile described in (\ref{c5}) is an NE when (\ref{con1}) is satisfied}

Consider that $n=2,c=0$; penalty functions are as follow:
\begin{align}
g_1(x)=x, f_1(x)=x\label{a3e0}\\
g_2(x)=\log(x), f_2(x)=e^x\label{a3e}
\end{align}
Now, as $c=0$, hence
\begin{align}\label{a3e1}
\dfrac{f_1(x)-c}{f_2(x)-c}=\dfrac{x}{e^x}=F(x)\nonumber\\
\dfrac{dF(x)}{dx}=e^{-x}-xe^{-x}
\end{align}
From (\ref{a3e1}), it is clear that for $x>1$, $F(x)$ is strictly decreasing. Hence we have for $1<x<y$
\begin{align}\label{a3e3}
\dfrac{f_2(y)-c}{f_1(y)-c}>\dfrac{f_2(x)-c}{f_1(x)-c}
\end{align}
which contradicts (\ref{con1}).\\
Now, let $v=5$ and $l=20,m=10, q_1=0.2,q_2=0.4$. Under this setting, we obtain
\begin{thm}\label{thm:notaNE}
The strategy profile as defined in (\ref{c5}) is \textbf{not} an NE strategy profile.
\end{thm}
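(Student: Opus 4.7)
The plan is to exhibit a strictly profitable unilateral deviation for a primary whose channel is in state~$2$, thereby refuting the NE condition for the strategy profile prescribed by \eqref{c5}. The deviation is suggested by the failure of Assumption~1 in the form \eqref{a3e3}: since $(f_2(x)-c)/(f_1(x)-c)=e^x/x$ is \emph{increasing} on $x>1$ rather than decreasing, a state-$2$ primary should want to charge a \emph{higher} penalty than the interval $[L_2,L_1]$ that \eqref{c5} assigns to it. The natural candidate deviation is therefore the point mass at the highest admissible penalty $x=v$.

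I will first specialize the recursion \eqref{n51}--\eqref{n52} to the given data $c=0$, $v=5$, $f_1(x)=x$, $f_2(x)=e^x$, $l=20$, $m=10$, $q_1=0.2$, $q_2=0.4$. The required binomial tails are $w_1=w(0.6)$ and $w_2=w(0.4)$; because $Y\sim\mathrm{Bin}(19,0.6)$ and $19-Y\sim\mathrm{Bin}(19,0.4)$, these obey the clean identity $w_1+w_2=1$. A direct evaluation gives $w_2\approx 0.186$ and $w_1\approx 0.814$, whence $p_1=vw_2\approx 0.93$ and $L_1=p_1/w_1\approx 1.143$. By Remark~\ref{rmk:expectedpayoff}, the expected profit that a state-$2$ primary earns under the candidate profile is $p_2-c=e^{L_1}w_1\approx 2.552$.

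Next I will compute the payoff of the deviation in which a state-$2$ primary plays the pure penalty $v$ while the remaining $l-1=19$ primaries continue to play \eqref{c5}. By Lemma~\ref{dcont}, each $\psi_i(\cdot)$ prescribed by \eqref{c5} is a continuous distribution, so every competitor places zero mass exactly at $v$ (ruling out ties) and every competitor in state~$1$ or state~$2$ has penalty strictly below $v$. Combined with the fact that a state-$0$ channel offers penalty $v+1>v$, each competitor undercuts the deviator with probability $q_1+q_2=0.6$. Hence the deviator's sale probability is $P(\mathrm{Bin}(19,0.6)\le m-1)=P(\mathrm{Bin}(19,0.6)\le 9)=1-w_1=w_2$, and the deviation payoff is $(f_2(v)-c)w_2=e^5 w_2\approx 27.6$.

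The conclusion follows at once from $27.6\gg 2.552$: the deviation is strictly profitable, so the strategy profile defined in \eqref{c5} is not an NE. Structurally, the inequality reduces to $e^{v-L_1}>w_1/w_2$, in which the exponential gap $e^{v-L_1}=e^{3.86}\approx 47$ dwarfs the probability ratio $w_1/w_2\approx 4.4$; this is the quantitative shadow of the failure of Assumption~1. The only delicate point in the argument is the tie-free sale-probability calculation at $x=v$, for which continuity of the $\psi_i(\cdot)$ from Lemma~\ref{dcont} is essential; beyond that, the proof is a routine numerical comparison.
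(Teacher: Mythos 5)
Your deviation is valid and the numbers check out, but it is a genuinely different counterexample from the one in the paper. The paper deviates in state $1$, downward: since the recursion yields $p_1=0.9305<L_2=0.9372$, a state-$1$ primary can post a pure penalty $x\in(p_1,L_2)$, undercut every competitor's support, sell with probability one, and earn $f_1(x)-c=x>p_1-c$ — a one-line argument requiring no further probability computation. You deviate in state $2$, upward to $v$: the deviator sells only when fewer than $m$ of the $19$ competitors are available, with probability $1-w_1=w_2\approx 0.186$, yet $e^5w_2\approx 27.6$ still dwarfs the on-path payoff $p_2-c=e^{L_1}w_1\approx 2.55$. Both deviations are correct consequences of the reversed monotonicity in \eqref{a3e3}; yours makes the mechanism more transparent (a state-$2$ seller wants the \emph{highest} price because $(f_2(x)-c)/(f_1(x)-c)$ is increasing, so the ordering $L_2<L_1$ imposed by \eqref{c5} is exactly backwards), while the paper's is computationally lighter and needs only the inequality $p_1<L_2$. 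Two small points to tidy up: the on-path payoff $p_2-c$ should be justified by the indifference computation in \eqref{s43} (which uses no part of Assumption~1) rather than by Remark~\ref{rmk:expectedpayoff}, since that remark is phrased for NE profiles and the whole point here is that \eqref{c5} is not one; and your tie-free argument at $x=v$ is correctly anchored in the continuity of the $\psi_i(\cdot)$ from Lemma~\ref{dcont}, which is indeed the one delicate step.
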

\begin{rmk}
Thus, the condition in (\ref{con1}) is also {\em necessary} for a NE strategy profile to be in the form (\ref{c5}).
\end{rmk}
In the example constructed above, however, an NE strategy profile may still exist. Now, we show that in certain cases we can have a symmetric NE where $L_1<L_2$. Thus, when channel state is $1$ and $2$, a primary chooses penalty from the interval $[L_1,L_2]$ and $[L_2,v]$ respectively. (Note the difference; in the strategy profile described in lemma~\ref{lm:computation}, we have $L_1>L_2$).
Consider the following symmetric strategy profile where each primary selects strategy $\tilde{\psi}_i$ at channel state $i$, $i\in\{1,2\}$-
\begin{align}\label{a3e5}
\tilde{\psi}_i(x)= & 
 0 ,  \text{if } x<\tilde{L}_i\nonumber\\
& \dfrac{1}{q_i}(w^{-1}(\dfrac{f_i(x)-\tilde{p}_i}{f_i(x)-c})-\sum_{j=1}^{i-1}q_j), \text{if } \tilde{L}_{i+1}\geq x\geq \tilde{L}_i\nonumber\\
& 1,  \text{if } x>\tilde{L}_{i+1}
\end{align}
with $\tilde{L}_3=v$, $\tilde{L}_1>1$.\\
and 
\begin{align}\label{a3e2}
\tilde{p}_2& =(f_2(v)-c)(1-w(q_1+q_2))\nonumber\\
\tilde{L}_2& =g_2(\dfrac{\tilde{p}_2}{1-w(q_1)})\nonumber\\
\tilde{L}_1& =g_1(f_1(\tilde{L}_2)(1-w(q_1))\nonumber\\
\tilde{p}_1& =\dfrac{f_1(\tilde{L}_2)}{f_2(\tilde{L}_2)}*\tilde{p}_2
\end{align}
When $v=5$, $l=20,m=10, q_1=0.2, q_2=0.4$; we obtain $\tilde{p}_2=27.6185, \tilde{L}_2=3.3201, \tilde{p}_1=3.3148, \tilde{L}_1=3.3148$ .\\
It is easy to show that $\tilde{\psi}_i(\cdot)$ as defined in (\ref{a3e5}) is distribution function with $\tilde{\psi}_i(\tilde{L}_i)=0$ and $\tilde{\psi}_i(\tilde{L}_{i+1})=1$.  Note that under $\tilde{\psi}_i(\cdot)$, a primary selects penalty from the interval $[\tilde{L}_1,\tilde{L}_2]$ and $[\tilde{L}_2,v]$ when the channel states are $1$ and $2$ respectively. So, it remains to show that it is an NE strategy profile. The following theorem shows that it is indeed an NE strategy profile.
\begin{thm}\label{counterexample2}
Strategy profile $\tilde{\psi}_i(\cdot), i=1,\ldots,n$ as described in (\ref{a3e5}) is an NE .
\end{thm}

\section{Conclusion and Future Work}
We have analyzed a spectrum oligopoly market with primaries and secondaries where secondaries select a channel depending on the price quoted by a primary and the transmission rate a channel offers.  We have shown that in the one-shot game there exists a unique NE strategy profile which we have explicitly computed. We have analyzed the expected payoff under the NE strategy profile in the asymptotic limit and compared it with the payoff that primaries would obtain when they collude. We have shown that under a repeated version of the game there exists a subgame perfect NE strategy profile  where each primary obtains a payoff arbitrarily close to the payoff that it would have obtained if all primaries collude. 

The characterization of an NE strategy profile under the setting  i) when secondaries have different penalty functions and ii) when demand of secondaries vary depending on the pricing strategy remains an open problem. The analytical tools and results that we have provided may provide the basis for developing a framework to solve those problems.
\appendix
\vspace{-0.1cm}
\subsection{Proof of the results of Section~\ref{sec:singlestructure}}
We first state Lemma~\ref{lm:continuity},~\ref{lem:low_up_endpoint} and ~\ref{lm:disjoint} in order to prove Theorem~\ref{thm:noasymmetricNE}.  Theorem~\ref{thm1} readily follows from Lemma~\ref{lm:continuity}. After that we show Corollary~\ref{sup} which we use to prove theorems~\ref{thm2} and ~\ref{thm3}.

Now, we introduce some notations which we use throughout this section.
\begin{defn}
Let $r_i(x)$ denote the probability of winning of primary $i$ when it selects penalty $x$. \\
Let $t_{i}(x_1,\ldots,x_{i-1},x_{i+1},\ldots,x_l)$ denote the probability of at least $m$ success out of $l-1$ (except $i$) independent Bernouli event with event $k$ has success probability of $x_k$.
\end{defn}
Note that $r_i(x)$ does not depend on the state of the channel since secondaries select the channels based only on the penalties. Since secondaries prefer channels which induce lower penalty thus $r_i(\cdot)$ is non-increasing function. Note that $t_i(\cdot)$ is strictly increasing in each component. 
Note also that 
\begin{align}\label{eq:tiequalw}
t_i(x_1,\ldots,x_{i-1},x_{i+1},\ldots,x_l)=w(x) \quad (\text{if } x_1=\ldots=x_l=x).
\end{align}
If primary $i$ selects penalty $x$ at channel state $j$, then its expected payoff is
\begin{align}\label{eq:payoffi}
(f_j(x)-c)r_i(x).
\end{align}
\begin{defn}\label{br-asym}
A {\em Best response} penalty for primary $i$ at channel state $j\geq 1$ is 
\begin{align}
x=\underset{y\in \Re}{\text{argmax}}(f_j(y)-c)r_i(y)\nonumber
\end{align}
Let $u_{i,j,max}$ denote the maximum expected payoff under NE strategy profile for primary $i$ at channel state $j$ i.e. $u_{i,j,max}$ is equal to the payoff at $x$ at channel state $j$ if $x$ is a best response for primary $i$ at channel state $j$.
\end{defn}
A primary only selects penalty $x$ with positive probability at channel state $j$, if $x$ is a best penalty response at $j$. 
We state an observation that we will use throughout: \begin{obs}\label{o1}
Any penalty $y \leq g_j(c)$ can not be a best response (definition \ref{br}) for channel state $j$.
\end{obs}
\begin{proof}
Note that the profit of a primary is non-positive
if the selected penalty is upper bounded by $g_j(c)$. On the other hand when a primary selects penalty $x$ where $g_j(c)<x\leq v$, it can sell its channel at least in the event when the total number of available channels are less than $m$. Since  $0<\sum_{i=1}^{n}q_i<1$ by (\ref{prob}), thus the event occurs with non-zero probability, hence the profit is strictly positive when $g_j(c)<x\leq v$. Hence, the result follows.
\end{proof}
We denote $f(x-)=\lim_{y\uparrow x}f(y)$ throughout this section for a function $f(\cdot)$. Now we are ready to show Lemma~\ref{lm:continuity}.
\begin{lem}\label{lm:continuity}
$\psi_{i,j}(\cdot)$ is continuous at all points, except possibly at $v$. Also, at most one primary can have a jump at $v$.
\end{lem}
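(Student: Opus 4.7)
The plan is to argue by contradiction for both assertions. Suppose $\psi_{i,j}(\cdot)$ has a jump of size $a > 0$ at some point $x_0$. Observation~\ref{o1} immediately rules out $x_0 \le g_j(c)$, so only $x_0 \in (g_j(c), v)$ and $x_0 = v$ remain; in each case I will exhibit a strictly profitable deviation for some primary.

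The common workhorse is the following comparison of winning probabilities. Since primary $i$ in state $j$ puts mass at least $q_j a > 0$ on the penalty $x_0$ unconditionally, for every other primary $k \ne i$ the winning probability satisfies $r_k(x_0-) - r_k(x_0) > 0$: lowering the penalty from $x_0$ to just below $x_0$ converts a symmetric tie-break with primary $i$ (in the event that primary $i$ plays exactly $x_0$) into an outright win. Consequently, whenever $x_0 > g_{j'}(c)$ so that $f_{j'}(x_0) - c > 0$, the continuity of $f_{j'}$ implies that for $\delta > 0$ and $\eta > 0$ sufficiently small,
\begin{align*}
(f_{j'}(x_0 - \delta) - c)\, r_k(x_0 - \delta) \;\geq\; (f_{j'}(x_0 - \delta) - c)\, r_k(x_0-) \;>\; (f_{j'}(y) - c)\, r_k(y)
\end{align*}
for every $y \in [x_0, x_0 + \eta)$; thus $x_0 - \delta$ strictly dominates every such $y$ as a pure action for primary $k$ in state $j'$. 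When $x_0 \le g_{j'}(c)$, Observation~\ref{o1} already places the support of $\psi_{k,j'}$ strictly above $x_0$, so no separate argument is needed.

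Applying the above with $x_0 = v$ yields the second assertion immediately: if two primaries had jumps at $v$, the dominated-action conclusion would contradict $v$ being a best response for one of them. Applying it with $x_0 \in (g_j(c), v)$ first shows that no primary other than $i$ places any mass in $[x_0, x_0 + \eta)$ for some common $\eta > 0$ (take the minimum of the finitely many positive $\eta$'s, one per primary-state pair). Since $r_i(\cdot)$ depends only on the strategies of primaries other than $i$, this forces $r_i$ to be continuous at $x_0$ and constant on $[x_0, x_0 + \eta)$. Because $q < 1$ in~\eqref{prob} ensures $r_i(x_0) \ge r_i(v) > 0$, and $f_j$ is strictly increasing, primary $i$'s deviation in state $j$ from $x_0$ to $x_0 + \eta/2$ is strictly profitable, contradicting that $x_0$ is a best response.

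The step I expect to require the most care is the uniform choice of $\eta$ and $\delta$: the strict inequality $r_k(x_0-) > r_k(x_0)$ must translate into strict dominance with positive slack simultaneously over the finitely many primary-state pairs $(k, j')$ with $x_0 > g_{j'}(c)$, so that a single $\eta > 0$ can be fixed on which no other primary places any mass. The positivity of $r_i(x_0)$, which follows from $q < 1$, is the other crucial ingredient; it is what lets the strict monotonicity of $f_j$ translate into a strict gain in expected payoff when primary $i$ moves slightly to the right of $x_0$.
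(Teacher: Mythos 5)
Your proposal follows the paper's proof of this lemma essentially step for step: a jump of $\psi_{i,j}$ at $x_0$ forces $r_k(x_0-)>r_k(x_0)$ for every other primary $k$, hence no primary other than $i$ places mass in some right-neighbourhood $[x_0,x_0+\eta)$, and then primary $i$ profits strictly by sliding its atom to the right; the treatment of $x_0=v$ is likewise the paper's. There is, however, one genuine gap, located in the sentence ``When $x_0 \le g_{j'}(c)$, Observation~\ref{o1} already places the support of $\psi_{k,j'}$ strictly above $x_0$, so no separate argument is needed.'' Observation~\ref{o1} only excludes penalties $y\le g_{j'}(c)$ from being best responses; it says nothing about penalties in $(g_{j'}(c),\,x_0+\eta)$, and when $x_0=g_{j'}(c)$ --- which can occur for some state $j'<j$ even though $x_0>g_j(c)$, since $g_{j'}(c)>g_j(c)$ for $j'<j$ --- that set is nonempty for every $\eta>0$. ``Support strictly above $x_0$'' therefore does not deliver a fixed $\eta>0$ on which primary $k$ in state $j'$ places no mass: the infimum of that support could be exactly $x_0$, with mass accumulating from above. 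In that event $r_i$ need not be constant on $[x_0,x_0+\eta)$, and your closing deviation from $x_0$ to $x_0+\eta/2$ need not be profitable, so the contradiction is not reached.

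The repair is the one the paper actually uses. For such a pair $(k,j')$ the payoff anywhere in $[x_0,x_0+\epsilon_0]$ is at most $f_{j'}(x_0+\epsilon_0)-c$, which tends to $0$ as $\epsilon_0\downarrow 0$, whereas primary $k$ in state $j'$ can always secure at least $(f_{j'}(v)-c)\bigl(1-w(\sum_{j=1}^{n}q_j)\bigr)>0$ by quoting $v$ (positivity uses $q<1$). Choosing $\epsilon_0$ so that $f_{j'}(x_0+\epsilon_0)-c$ falls below this guaranteed payoff rules out mass in $[x_0,x_0+\epsilon_0]$ for those states, and taking the minimum over the finitely many primary--state pairs restores your uniform $\eta$. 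With this patch your argument coincides with the paper's. (The claim $r_k(x_0-)>r_k(x_0)$, which you justify via the tie-break at the atom, is asserted in exactly the same form in the paper's proof, so I do not count it against you.)
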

\begin{proof}
Let $\psi_{i,j}(\cdot)$ has a jump at $x<v$ where $i\in \{1,\ldots,l\}$ and $j\in \{1,\ldots,n\}$. Thus, $x$ is a best response for primary $i$ at channel state $j$. Next, we show that no player other than player $i$  will select penalty in the interval $[x,x+\epsilon]$ with positive probability for small enough $\epsilon>0$.

Fix a player $k\in \{1,\ldots,i-1,i+1,\ldots,l\}$ and channel state $k_1\in\{1,\ldots,n\}$. 

First, note that if $f_{k_1}(x)\leq c$, then a player can not select penalty in the interval $[x,x+\epsilon_0]$ with positive probability where $f_{k_1}(x+\epsilon_0)-c<(f_{k_1}(v)-c)(1-w(\sum_{j=1}^{n}q_j))$ since a primary gets a payoff of at least $(f_{k_1}(v)-c)(1-w(\sum_{j=1}^{n}q_i))$ at penalty $v$. Note that $\sum_{j=1}^{n}q_i<1$, thus $\epsilon_0>0$. We need to consider states $k_1\in\{1,\ldots,n\}$ such that $f_{k_1}(x)>c$. 

The payoff that player $k$ will get at a channel state $k_1$ at a $y\in [x,x+\epsilon_1]$ is 
\begin{align}\label{eq:payoffaty}
(f_{k_1}(y)-c)r_{k}(y)\leq (f_{k_1}(y)-c)r_{k}(x)\nonumber\\
(\text{since } r_k(y)\leq r_k(x)).
\end{align}

For any $\delta>0$, expected payoff for player $k$ at $x-\delta$ at channel state $k_1$ is lower bounded by
\begin{align}
(f_{k_1}(x-\delta)-c)r_{k}(x-) (\text{since } r_k(x-)\leq r_k(x-\delta)).
\end{align}
Since $\psi_{i,j}(\cdot)$ has a jump at $x$, thus $r_{k}(x-)>r_{k}(x)$. Hence, by continuity of $f_{k_1}(\cdot)$ there exists a $\delta>0$  and small enough $\epsilon_1>0$ such that for every $y\in [x,x+\epsilon_1]$  we have
\begin{align}
(f_{k_1}(x-\delta)-c)r_{k}(x-)& >(f_{k_1}(y)-c)r_{k}(x)\nonumber\\
& \geq (f_{k_1}(y)-c)r_{k}(x)\quad (\text{from } (\ref{eq:payoffaty})).\nonumber
\end{align}
Thus, player $k$ has strictly higher payoff at $x-\delta$ compared to at penalty $y\in [x,x+\epsilon_1]$. 

Hence, no player apart from $i$ selects penalty in the interval $[x,x+\epsilon]$ with positive probability where $\epsilon=\min(\epsilon_0,\epsilon_1)$. 

Since no player apart from player $i$ select penalty in the interval $[x,x+\epsilon]$, thus player $i$ will have a strictly higher payoff at $x+\epsilon$ instead of $x$ which contradicts the fact that $x$ is a best penalty response for player $i$.

If player $i$ selects $v$ with positive probability, then player $j,j\neq i$ will have strictly higher payoff by selecting a penalty just below $v$. Hence, player $j$ will select $v$ with $0$ probability. Hence, the result follows.
\end{proof}
We introduce some notations which we use throughout this section:
\begin{defn}
Let $X_{m,i}$ be the $m$th lowest penalty selected by primaries except $i$.
\end{defn}
Note that if primary $i$ selects penalty $x$, then it will not be able to sell its channel if $X_{m,i}<x$.
Now we show some results which directly follow from Lemma \ref{lm:continuity}. We use these results to prove Theorem~\ref{thm:noasymmetricNE}.
\begin{obs}\label{obs:successprob}
If $\psi_{k,j}(\cdot), k\neq i, j=1,\ldots,n$ does not have jump at $x$ apart from $i$ ($i$ may or may not have jump at $x$), then
\begin{align}\label{eq:penaltyequivalence}
r_i(x)=1-t_i(x_1,\ldots,x_{i-1},x_{i+1},\ldots,x_l) \nonumber\\ (\text{where } x_k=\sum_{j=1}^{n}q_j\psi_{k,j}(x)).
\end{align}
\end{obs}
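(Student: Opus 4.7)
The plan is to unpack the definition of $r_i(x)$ in terms of the event that primary $i$'s penalty ranks among the $m$ lowest eligible penalties, and then convert the resulting probability into the form of $t_i$ using independence across primaries. I will restrict attention to $x \leq v$, which is the only range of interest (for $x > v$ no sale is possible, making the statement vacuous or trivial).

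First, I would describe primary $i$'s winning event precisely. When primary $i$ selects penalty $x \leq v$, its channel is eligible; it is sold iff fewer than $m$ of the remaining $l-1$ primaries offer a strictly lower (or tied with favorable tie-breaking) penalty. Since the hypothesis states that $\psi_{k,j}(\cdot)$ is continuous at $x$ for every $k \neq i$ and every $j$, for each fixed other primary $k$ and state $j$, the event \{primary $k$ picks exactly $x$ at state $j$\} has probability zero. Consequently the tie-breaking rule contributes nothing, and
\begin{align*}
r_i(x) \;=\; \Pr\bigl(\text{fewer than } m \text{ primaries in } \{1,\ldots,l\}\setminus\{i\} \text{ have penalty}\leq x\bigr).
\end{align*}

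Next I would compute, for a single other primary $k$, the probability that its posted penalty is at most $x$. Conditioning on the channel state: with probability $q_j$ the state is $j\geq 1$ and the penalty is drawn from $\psi_{k,j}(\cdot)$, contributing $\psi_{k,j}(x)$; with probability $1-q$ the state is $0$ and the penalty is $v+1>x$, contributing nothing. Thus
\begin{align*}
\Pr\bigl(\text{primary } k\text{'s penalty} \leq x\bigr) \;=\; \sum_{j=1}^{n} q_j\,\psi_{k,j}(x) \;=\; x_k.
\end{align*}

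Finally, because the channel states and pricing randomizations of distinct primaries are mutually independent, the indicators $\bigl\{\text{primary }k\text{'s penalty}\leq x\bigr\}$ for $k \neq i$ are independent Bernoullis with success parameters $x_k$. By definition of $t_i$, the probability that at least $m$ of them succeed equals $t_i(x_1,\ldots,x_{i-1},x_{i+1},\ldots,x_l)$. Subtracting from one yields the claimed identity.

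The main (and only) subtlety is the careful handling of ties: one must invoke the hypothesized continuity of the $\psi_{k,j}(\cdot)$ at $x$ to collapse ``$\leq x$'' and ``$<x$'' to the same event with probability one, so that the particular tie-breaking rule for primary $i$'s penalty is irrelevant. Everything else is a direct computation from the definitions.
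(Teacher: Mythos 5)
Your proposal is correct and follows essentially the same route as the paper: both reduce $r_i(x)$ to the probability that fewer than $m$ of the other $l-1$ primaries post a penalty at most $x$ (the paper phrases this as $P(X_{m,i}>x)$ after killing the tie term $P(X_{m,i}=x)$ via the no-jump hypothesis), compute each marginal probability as $x_k=\sum_{j=1}^{n}q_j\psi_{k,j}(x)$, and invoke independence across primaries to identify the result with $1-t_i(x_1,\ldots,x_{i-1},x_{i+1},\ldots,x_l)$. Your explicit treatment of the state-$0$ case and of the $x>v$ range is a harmless elaboration of what the paper leaves implicit.
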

\vspace{-0.3cm}
\begin{proof}
Note that 
\begin{align}\label{eq:successprob}
r_i(x)=P(A|X_{m,i}=x)P(X_{m,i}=x)+P(X_{m,i}>x)
\end{align}
where $P(A|X_{m,i}=x)$ is the probability that primary $i$ will be selected by secondaries when $X_{m,i}=x$. If $\psi_{k,j}(\cdot), k\neq i, j\in \{1,\ldots,n\}$ does not have any jump at $x$, then, 
\begin{align}\label{eq:nojump}
P(X_{m,i}=x)=0. 
\end{align}
On the other hand note that the probability of the event that  primary $k$ selects penalty less than or equal to $x$ is given by $\sum_{j=1}^{n}q_j\psi_{k,j}(x)$, hence, $P(X_{m,i}>x)$ is given by 
\begin{align}\label{eq:strictprob}
1-t_i(x_1,\ldots,x_{i-1},x_{i+1},\ldots,x_l)
\end{align}
where $x_k=\sum_{j=1}^{n}q_j\psi_{k,j}(x)$. Thus, if $\psi_{k,j}(\cdot)$ does not have jump at $x$ for all $k$ and $j$, then (\ref{eq:penaltyequivalence}) follows from (\ref{eq:successprob}), (\ref{eq:nojump}) and (\ref{eq:strictprob}).
\end{proof}

By Lemma~\ref{lm:continuity} no primary has a jump at $x<v$. Thus, $r_i(x)$ is exactly given by (\ref{eq:penaltyequivalence}) when $x<v$. By Lemma~\ref{lm:continuity}, only one player can have a jump at $v$. Thus, if player $k$ have a jump at $v$, then by Observation~\ref{obs:successprob}, $r_k(v)=r_k(v-)$. Hence, the following corollary is a direct consequence of (\ref{eq:penaltyequivalence}). 
\begin{cor}\label{obs:equality}
$r_{i}(x)=r_{k}(x)$ iff $\sum_{j=1}^{n}q_j\psi_{i,j}(x)=\sum_{j=1}^{n}q_j\psi_{k,j}(x)$ and $r_{i}(x)>r_{k}(x)$ iff $\sum_{j=1}^{n}q_j\psi_{i,j}(x)>q_j\sum_{j=1}^{n}\psi_{i,j}(x)$ for $x<v$ and for $x=v$ if no primary has a jump at $v$. If primary $k$ has a jump at $v$, then $r_k(v)=r_k(v-)$.
\end{cor}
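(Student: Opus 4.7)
The plan is to reduce the corollary to an almost immediate consequence of Observation~\ref{obs:successprob} combined with Lemma~\ref{lm:continuity}, together with two elementary properties of the function $t_i(\cdot)$: it is symmetric in its arguments, and it is strictly increasing in each of them. I will split the argument into the three regimes indicated in the statement: $x<v$, $x=v$ with no jump, and $x=v$ with a jump at a single primary.

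For $x<v$, Lemma~\ref{lm:continuity} gives that no primary has a jump at $x$, so the hypothesis of Observation~\ref{obs:successprob} is satisfied simultaneously for primaries $i$ and $k$. Writing $x_j=\sum_{l=1}^n q_l\psi_{j,l}(x)$, we therefore obtain
\begin{align*}
r_i(x) &= 1-t_i(x_1,\ldots,x_{i-1},x_{i+1},\ldots,x_l),\\
r_k(x) &= 1-t_k(x_1,\ldots,x_{k-1},x_{k+1},\ldots,x_l).
\end{align*}
The key observation is that the multisets of arguments of $t_i$ and $t_k$ differ only by swapping $x_k$ (present in the first) with $x_i$ (present in the second); all other entries $\{x_j:j\neq i,k\}$ are identical. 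Since $t_i(\cdot)$, viewed as the probability of at least $m$ successes in a symmetric union of independent Bernoulli trials, depends only on the multiset of success probabilities, the equality $x_i=x_k$ immediately gives $t_i(\cdot)=t_k(\cdot)$ and hence $r_i(x)=r_k(x)$. Conversely, if $x_i>x_k$, then the arguments of $t_k$ dominate those of $t_i$ coordinatewise (after a suitable relabeling), and strict monotonicity of $t$ in each coordinate delivers $t_k(\cdot)>t_i(\cdot)$, i.e. $r_i(x)>r_k(x)$. This yields both iff statements in this regime.

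For $x=v$ with no jump at $v$ by any primary, the same derivation goes through verbatim. For $x=v$ when exactly one primary $k$ has a jump (Lemma~\ref{lm:continuity} prevents two primaries from jumping there), I handle $r_k(v)$ separately. Observation~\ref{obs:successprob} can still be applied to primary $k$ at $v$, because the observation's hypothesis asks that all primaries \emph{other than} $k$ be jumpless at $v$, which holds. Hence $r_k(v)=1-t_k(x_1,\ldots,x_{k-1},x_{k+1},\ldots,x_l)$ with $x_j=\sum_l q_l\psi_{j,l}(v)$ for $j\neq k$. For $y<v$ close to $v$, the same expression with $\psi_{j,l}(y)$ in place of $\psi_{j,l}(v)$ gives $r_k(y)$. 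Since $\psi_{j,l}(\cdot)$ is continuous at $v$ for every $j\neq k$ (Lemma~\ref{lm:continuity}) and $t_k$ is continuous (in fact polynomial) in its arguments, passing to the limit $y\uparrow v$ yields $r_k(v-)=r_k(v)$, completing the corollary.

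The only delicate point I anticipate is the case $x=v$ with a jump: one must be attentive to the fact that Observation~\ref{obs:successprob} is asymmetric in its hypothesis (it permits primary $i$ itself to jump at $x$), which is exactly why it still applies to $r_k(v)$ even though primary $k$ jumps there. Apart from this bookkeeping, the proof is a routine manipulation of the Bernoulli-sum representation of $r_i(\cdot)$ supplied by Observation~\ref{obs:successprob}.
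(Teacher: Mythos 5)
Your proof is correct and follows essentially the same route as the paper: the paper also derives this corollary directly from the representation of $r_i(x)$ in Observation~\ref{obs:successprob} together with Lemma~\ref{lm:continuity}, using the symmetry and strict monotonicity of $t_i(\cdot)$ in its arguments and a left-limit argument for the jump case at $v$. You simply make explicit the coordinatewise swap of $x_i$ and $x_k$ and the continuity step that the paper leaves implicit.
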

\begin{defn}
Let 
\begin{eqnarray}
L_{i,j}=\inf\{x:\psi_{i,j}(x)>0\}.\label{eq:lowerendpoint}\\
U_{i,j}=\inf\{x:\psi_{i,j}=1\}.\label{eq:upperendpoint}
\end{eqnarray}
\end{defn}
$L_{i,j}, U_{i,j}$ are respectively the lowest and upper endpoint of support set of $\psi_{i,j}(\cdot)$. 
\begin{lem}\label{lem:low_up_endpoint}
i ) $L_{i,j}$ is a best response for primary $i$ at channel state $j$.\\
ii) $U_{i,j}$ is a best response for primary $i$ at channel state $j$, if one of the followings is true:\\
a) $U_{i,j}<v$. \\
b) $U_{i,j}=v$ and no primary has a jump at $U_{i,j}$.\\
c) $U_{i,j}=v$ and only primary $i$ has a jump at $v$.
\end{lem}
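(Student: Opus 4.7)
The plan is to combine a standard mixed-strategy equilibrium observation with continuity of the expected-payoff integrand in the penalty variable. Because $\psi_{i,j}$ is a best response for primary $i$ at state $j$, the expected payoff $\int (f_j(x)-c)\, r_i(x)\, d\psi_{i,j}(x)$ equals $u_{i,j,\max}$, while by definition of $u_{i,j,\max}$ the integrand satisfies $(f_j(x)-c)\, r_i(x) \le u_{i,j,\max}$ for every $x$. Consequently the integrand equals $u_{i,j,\max}$ for $\psi_{i,j}$-almost every $x$, and in particular we can find points arbitrarily close to any point of the support of $\psi_{i,j}$ at which the integrand equals $u_{i,j,\max}$. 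The remaining task is therefore to certify continuity of $(f_j(\cdot)-c)\, r_i(\cdot)$ at the two endpoints $L_{i,j}$ and $U_{i,j}$, and then pass to the limit.

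Since $f_j(\cdot)$ is continuous by assumption, it suffices to handle $r_i(\cdot)$. By Observation~\ref{obs:successprob}, whenever no primary $k\neq i$ has a jump at $x$, we may write $r_i(x) = 1 - t_i(x_1,\ldots,x_{i-1},x_{i+1},\ldots,x_l)$ with $x_k = \sum_{j=1}^{n} q_j\, \psi_{k,j}(x)$, and $t_i$ is jointly continuous in its arguments. Lemma~\ref{lm:continuity} tells us that no primary has a jump at any $x<v$ and that at most one primary has a jump at $v$. The crucial point, which I would highlight in the write-up, is that $r_i$ does \emph{not} involve $\psi_{i,j}$ itself, so a jump of primary $i$ at $v$ is irrelevant for the continuity of $r_i$ there; only jumps of the other primaries matter.

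For part (i), the definition $L_{i,j} = \inf\{x:\psi_{i,j}(x)>0\}$ guarantees positive $\psi_{i,j}$-mass in every right-neighborhood $(L_{i,j}, L_{i,j}+\delta]$, so we can pick a sequence $x_n \downarrow L_{i,j}$ of points at which the integrand equals $u_{i,j,\max}$; if $L_{i,j}<v$, continuity of the integrand at $L_{i,j}$ (checked above, since no primary has a jump there) forces the equality at $L_{i,j}$ itself, and if $L_{i,j}=v$ then $\psi_{i,j}$ is a point mass at $v$, which is automatically a best response since primary $i$ concentrates its entire strategy there. Part (ii) is handled symmetrically by choosing $x_n \uparrow U_{i,j}$ in the support and invoking left-continuity of the integrand at $U_{i,j}$: case (a), $U_{i,j}<v$, falls directly under Lemma~\ref{lm:continuity}; case (b), $U_{i,j}=v$ with no jumps at $v$ at all, is handled as in part (i); and case (c), $U_{i,j}=v$ with only primary $i$ jumping at $v$, is exactly where the observation that $r_i$ does not depend on $\psi_{i,j}$ is indispensable. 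The main obstacle in the whole argument is therefore case (c): one must carefully separate a jump of $\psi_{i,j}$ (which is harmless for $r_i$) from a jump of $\psi_{k,j}$ for some $k \neq i$ (which would break the continuity certificate), and this is where the "at most one primary has a jump at $v$" clause of Lemma~\ref{lm:continuity} does the real work.
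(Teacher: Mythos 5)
Your proposal is correct and follows essentially the same route as the paper: extract a sequence of best-response penalties from the support converging to the endpoint, and pass to the limit using continuity of $f_j(\cdot)$ and of $r_i(\cdot)$ at that endpoint, the latter guaranteed by Lemma~\ref{lm:continuity} (with the point-mass case $L_{i,j}=v$ handled directly, as in the paper). The paper omits part (ii) as "similar," and your treatment of case (c) — that $r_i$ does not depend on primary $i$'s own distribution, so only jumps of the \emph{other} primaries at $v$ could break continuity, and Lemma~\ref{lm:continuity} rules those out — is exactly the right way to fill that gap.
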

\begin{proof}
We prove part (i). The proof of part (ii) is similar and hence we omit.\\
We prove part (i) by considering the following two scenarios:\\
\textit{Case i}: $L_{i,j}=v$: Note that $\psi_{i,j}(v)=1$. Thus, by (\ref{eq:lowerendpoint}), $\psi_{i,j}(\cdot)$ has a jump at $L_{i,j}$; thus, $L_{i,j}$ is a best response to primary $i$ at channel state $j$. \\
\textit{Case ii}: $L_{i,j}<v$: By Lemma~\ref{lm:continuity}, no primary has a jump at $L_{i,j}$ and thus $r_i(\cdot)$ is continuous at $L_{i,j}$. Thus, by (\ref{eq:lowerendpoint}),  primary $i$ selects a penalty just above $L_{i,j}$ with positive probability when the channel state is $j$ i.e. for every $\epsilon>0$ there exists $y\in (L_{i,j},L_{i,j}+\epsilon)$ such that $y$ is a best response to primary $i$ at channel state $j$. Then, we must have a sequence $z_k, k=1,2,\ldots$ such that each $z_k$ is a best response for primary $i$ at channel state $j$ and $\underset{k\rightarrow \infty}{\text{lim}}z_k=L_{i,j}$. But profit to primary $i$ at channel state $j$ is $(f_j(z_k)-c)r_{i}(z_k)$. Now from continuity of $f_j(\cdot)$ and $r_i(\cdot)$ at $L_{i,j}$ we obtain
\begin{align}
\underset{k\rightarrow \infty}{\text{lim}}(f_j(z_k)-c)r_i(z_k)=(f_j(L_{i,j})-c)r_i(L_{i,j}).
\end{align}
Since each $z_k,k=1\ldots,\infty$ is a best response, thus, $L_{i,j}$ is also a best response.
\end{proof}
\begin{lem}\label{lm:disjoint}
$U_{i,a}\leq L_{i,j}$ if $a>j$
\end{lem}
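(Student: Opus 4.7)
The plan is to argue by contradiction: assume $U_{i,a} > L_{i,j}$ for some $a > j$, and produce two best-response penalties $x_a, x_j$ for primary $i$ (at channel states $a$ and $j$ respectively) with $x_a > x_j > g_j(c)$. Then the pair of best-response inequalities, obtained by considering the alternative of using $x_j$ at state $a$ and $x_a$ at state $j$, will be multiplied together to give one inequality; Assumption~1 will contradict it.

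First I set $x_j := L_{i,j}$, which is a best response at state $j$ by Lemma~\ref{lem:low_up_endpoint}(i) and exceeds $g_j(c)$ by Observation~\ref{o1}. For $x_a$, Lemma~\ref{lem:low_up_endpoint}(ii) handles almost every case by letting $x_a := U_{i,a}$; the only case not covered is $U_{i,a}=v$ with some primary $k\neq i$ having a jump at $v$ (by Lemma~\ref{lm:continuity} at most one primary can have such a jump). The hard part is manufacturing a best response in this degenerate case. I handle it by noting that, by Lemma~\ref{lm:continuity}, no primary has a jump strictly below $v$, so $r_i(\cdot)$ and hence $\phi_{i,a}(x):=(f_a(x)-c)r_i(x)$ are continuous on $[0,v)$; since $U_{i,a}=v>L_{i,j}$, the distribution $\psi_{i,a}$ places positive mass in every left-neighborhood of $v$, in particular inside $(L_{i,j},v)$. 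If any point of this mass failed to be a best response, continuity of $\phi_{i,a}$ would force a whole neighborhood of it to give strictly less than $u_{i,a,\max}$, contradicting that primary $i$'s strategy is optimal. Thus a best response $x_a \in (L_{i,j},v)$ exists.

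With $x_a > x_j$ in hand, optimality of $x_a$ at state $a$ and of $x_j$ at state $j$ yields
\begin{align*}
(f_a(x_a)-c)\,r_i(x_a) &\geq (f_a(x_j)-c)\,r_i(x_j),\\
(f_j(x_j)-c)\,r_i(x_j) &\geq (f_j(x_a)-c)\,r_i(x_a).
\end{align*}
Since $q<1$, both $r_i(x_a)$ and $r_i(x_j)$ are bounded below by the (positive) probability that fewer than $m$ of the other $l-1$ primaries have an available channel, hence strictly positive. Moreover, since $a>j$ we have $g_a(c)<g_j(c)<x_j<x_a$, so all four factors $f_a(x_a)-c$, $f_j(x_j)-c$, $f_a(x_j)-c$, $f_j(x_a)-c$ are strictly positive. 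Multiplying the two inequalities and canceling the common positive factor $r_i(x_a)r_i(x_j)$ gives
\begin{equation*}
(f_a(x_a)-c)(f_j(x_j)-c) \;\geq\; (f_a(x_j)-c)(f_j(x_a)-c).
\end{equation*}

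Finally I invoke Assumption~1 with the pair $(j,a)$ playing the role of $(i,j)$ and with $(y,x)=(x_j,x_a)$; the hypothesis $x_a>x_j>g_j(c)$ is exactly what is required. It returns
\begin{equation*}
\frac{f_j(x_j)-c}{f_a(x_j)-c} \;<\; \frac{f_j(x_a)-c}{f_a(x_a)-c},
\end{equation*}
which, upon cross-multiplication by the positive denominators, rearranges to the strict reverse inequality $(f_a(x_a)-c)(f_j(x_j)-c) < (f_a(x_j)-c)(f_j(x_a)-c)$. This contradicts the displayed inequality above, completing the proof. The only delicate step is the construction of $x_a$ in the degenerate boundary case; the rest is a short algebraic manipulation.
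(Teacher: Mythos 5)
Your proposal is correct and follows essentially the same route as the paper's proof: both pick a best response at state $j$ (namely $L_{i,j}$) and a best response at state $a$ above it, combine the two optimality inequalities, and invoke Assumption~1 to reach a contradiction, and both dispose of the degenerate case $U_{i,a}=v$ with another primary jumping at $v$ by extracting a best response of $\psi_{i,a}$ from $(L_{i,j},v)$ via continuity. The only cosmetic difference is that you multiply the two best-response inequalities while the paper chains them by substitution.
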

\begin{proof}
Fix a primary $i$.  We first show that for any $x,y$ such that $x<y\leq v$, if $x$ is a best response when the state of the channel is $j$, then $y$ can not be a best response when the state of the channel is $a$ where $a>j$. If not, consider $y> x$ such that $x, y$ are the best responses when channel states are respectively  $j, a (a>j)$. Since $x$ is a best response at channel state $j$, thus $f_j(x)>c$ by Observation~\ref{o1}. On the other hand, since $y$ is a best response at channel state $a$, thus, $f_a(y)>c$ by Observation~\ref{o1}. Since $y>x$, thus $f_j(y)>c$.
  Also,
\begin{eqnarray}
u_{i,a,max} & = & (f_a(y)-c) r_{i}(y).\label{ptt1a} 
\end{eqnarray}
Expected payoff to primary $i$ at channel state $j$ at $y$ is $(f_j(y)-c)r_i(y)$. Thus, from (\ref{ptt1a})
\begin{align}\label{n41asym}
u_{i,j,max}\geq (f_j(y)-c)r_{i}(y)
     = u_{i,a,max}.\dfrac{f_j(y)-c}{f_a(y)-c}.
\end{align}
Since $x$ is a best response to primary $i$ at channel state $j$, thus
\begin{eqnarray}
u_{i,j,max} & = & (f_j(x)-c)r_{i}(x).\nonumber
\end{eqnarray}
Expected payoff of primary $i$ at channel state $a$ at penalty $x$ is
\begin{eqnarray}
(f_a(x)-c)r_{i}(x)
             = u_{i,j,max}.\dfrac{f_a(x)-c}{f_j(x)-c}\label{n42asym}
\end{eqnarray}
Using (\ref{n41asym}) and (\ref{n42asym}), we obtain-
\begin{eqnarray}\label{n43asym}
& (f_a(x)-c)r_{i}(x) \geq u_{i,a,max}.\dfrac{(f_j(y)-c)(f_a(x)-c)}{(f_a(y)-c)(f_j(x)-c)}\nonumber\\
& > u_{i,a,max}(\text{from } (\ref{con1}) \text{since } y>x, a>j, f_j(x)>c)
\end{eqnarray}
which contradicts Definition~\ref{br-asym}.

We also obtain from the argument in the above paragraph, if $U_{i,a}$ is a best response then $U_{i,a}<L_{i,j}$. 

 If $U_{i,a}$ is not a best response then by Lemma~\ref{lem:low_up_endpoint}, $U_{i,a}=v$ and there exists a primary other than $i$ which has a jump at $v$. Thus, by Lemma~\ref{lm:continuity}, primary $i$ does not have a jump at $v$. Thus, $\psi_{i,a}(\cdot)$ is continuous and thus, by the definition of $U_{i,a}$ (\ref{eq:upperendpoint}) for every $\epsilon>0$, there exists $y\in [v-\epsilon,v)$ such that $y$ is a best response for primary $i$ at channel state $a$. Hence, if $U_{i,a}>L_{i,j}$, then there exists a $y_1>L_{i,j}$ such that $y_1$ is a best response for primary $i$ for state $a$. But, we have already shown that it is not possible. Hence, the result follows.
\end{proof}

\textit{Proof of Theorem~\ref{thm:noasymmetricNE}}:
Suppose the statement is not true. Thus, we must have $i,k\in \{1,\ldots,l\}$ which do not have identical strategy. Let, $j$ be the largest index in $\{1,\ldots,n\}$ such that $\psi_{i,j}(\cdot)$ and $\psi_{k,j}(\cdot)$ differs.  Thus, we must have 
\begin{align}
x=\inf\{y\leq v: \psi_{i,j}(y)\neq \psi_{k,j}(y)\}.\nonumber
\end{align}
If $x=v$, then $\psi_{i,j}(\cdot)=\psi_{k,j}(\cdot)$ since $\psi_{i,j}(x^{\prime})=1$ for any $x^{\prime}\geq v$. Hence, we must have $x<v$.

Note that by definition of $j$, $\psi_{i,a}(x)=\psi_{k,a}(x)$ $\forall a> j$. Since $x<v$, thus $\psi_{i,j}(x)$ and $\psi_{k,j}(\cdot)$ are continuous at $x$ by Lemma~\ref{lm:continuity}, thus, $\psi_{i,j}(x)=\psi_{k,j}(x)$. Hence, $\psi_{i,a}(x)=\psi_{k,a}(x)$ for all $a\geq j$. By definition of $x$, $\psi_{i,j}(\cdot)$ and $\psi_{k,j}(\cdot)$ can not differ at a penalty less than $x$ and thus $\psi_{i,j}(x)=\psi_{k,j}(x)\neq 1$. Thus $x<U_{i,j}$ and $x<U_{k,j}$, hence $\psi_{i,a}(x)=\psi_{k,a}(x)=0$ $\forall a<j$ by Lemma~\ref{lm:disjoint}.
Since $\psi_{i,a}(x)=\psi_{k,a}(x)$ for all $a$ and $q_a, a=1,\ldots,n$ are exactly the same for each primary, thus, by Corollary~\ref{obs:equality} 
\vspace{-0.2cm}
\begin{align}\label{eq:samedist}
r_i(x)=r_k(x).
\end{align}
By definition of $x$, for every $\epsilon>0$, there is a $y$ such that $y\in (x,x+\epsilon)$ and $\psi_{i,j}(y)\neq \psi_{k,j}(y)$. Without loss of generality, we assume that $\psi_{i,j}(y)>\psi_{k,j}(y)$ for every $y$ in $(x,x+\epsilon)$ for some $\epsilon>0$. Thus, $\psi_{i,j}(x+\epsilon)>\psi_{i,j}(x)$ for every $\epsilon>0$. We consider the following two possible scenarios:

\textit{Case i}: $\psi_{k,j}(x+\epsilon)>\psi_{k,j}(x)$ for every $ \epsilon>0$. 

Hence, there is a $\gamma>0$, such that $x+\gamma\leq U_{k,j}$, $y\in (x,x+\gamma)$  is a best response for primary $k$ at channel state $j$ and $\psi_{i,j}(y)>\psi_{k,j}(y)$.  Since $\psi_{k,j}(x+\epsilon)>\psi_{k,j}(x)$ and $\psi_{i,j}(x+\epsilon)>\psi_{i,j}(x)$ for every $\epsilon>0$; and no primary has a jump at $x$, thus, $x$ is also a best response for primary $k$ and primary $i$ at channel state $j$. But, expected payoff to primary $k$ at $x$ at channel state $j$ is
\begin{align}\label{eq:same1}
(f_j(x)-c)r_{k}(x)=(f_j(x)-c)r_{i}(x)\quad(\text{from } (\ref{eq:samedist})).
\end{align}
Since $x$ is a best response for primary $i$ at channel state $j$, thus,
\begin{align}\label{eq:inequalpayoff}
(f_j(x)-c)r_i(x)\geq (f_j(y)-c)r_{i}(y).
\end{align}
Since $\psi_{i,j}(y)>\psi_{k,j}(y)$ and $\psi_{k,a}(y)=\psi_{k,a}(x)$ (since $L_{k,j}<y<U_{k,j}$) by Lemma~\ref{lm:disjoint} for all $a\neq j$, thus, $\sum_{a=1}^{n}q_a\psi_{i,a}(y)>\sum_{a=1}^{n}q_a\psi_{k,a}(y)$. Thus, from Corollary~\ref{obs:equality} $r_{k}(y)<r_{i}(y)$. Thus, expected payoff at $y$ is
\begin{align}\label{eq:same}
(f_j(y)-c)r_{k}(y)<(f_j(y)-c)r_{i}(y)\nonumber\\
\leq (f_j(x)-c)r_{i}(x)\quad (\text{from } (\ref{eq:inequalpayoff}))\nonumber\\
=(f_j(x)-c)r_{k}(x)\quad (\text{from } (\ref{eq:same1})).
\end{align}
Since $y$ and $x$ are best response to primary $k$ at channel state $j$, thus expected payoff to primary $k$ at channel state $j$ at $x$ and $y$ must be equal. But, this leads to a contradiction from (\ref{eq:same}) and (\ref{eq:same1}).

\textit{Case ii}: $\psi_{k,j}(x)=\psi_{k,j}(y)$ for some $y>x$: 

Let $x_1=\inf\{y> x:\psi_{k,j}(y)=\psi_{k,j}(x)\}$. Note that $\psi_{i,j}(x_1)>\psi_{k,j}(x_1)$. We consider two possible scenarios:

\textit{Case ii a}: $x_1<v$:

Since no primary has a jump at $x_1$ by Lemma~\ref{lm:continuity}, thus, by definition of $x_1$, it is a best response for primary $k$ at channel state $j$. But expected payoff to primary $k$ at channel state $j$ at $x_1$ is given by
\begin{align}
(f_j(x_1)-c)r_k(x_1).
\end{align}
 Since $\psi_{k,j}(x_1)=\psi_{k,j}(x)<1$, thus, $x_1<U_{k,j}$, thus, $\psi_{k,a}(x)=\psi_{k,a}(x_1)$ $\forall a<j$ by Lemma \ref{lm:disjoint}. Since $\psi_{i,j}(x+\epsilon)>\psi_{i,j}(x)$ for every $\epsilon>0$, thus, $x\geq L_{i,j}$, hence $\psi_{i,a}(x)=1$ $\forall a>j$. Since $x_1>x$, thus, $\psi_{i,a}(x_1)=\psi_{i,a}(x)=1$ for all $a>j$. If $\psi_{k,a}(x)<\psi_{k,a}(x_1)\leq 1$ for some $a>j$, then $\psi_{i,a}(\cdot)$ differs from $\psi_{k,a}(\cdot)$ (since $1=\psi_{i,a}(x)=\psi_{i,a}(x_1)$ for all $a>j$) at least at $x$, but this is against the definition of $j$. Hence, $\psi_{k,a}(x)=\psi_{k,a}(x_1)$ $\forall a$. Since $\psi_{i,j}(x_1)>\psi_{k,j}(x_1)$ and $\psi_{i,a}(x_1)\geq \psi_{i,a}(x)$ for all $a<j$,thus, $\sum_{a=1}^{n}q_a\psi_{i,a}(x)>\sum_{a=1}^{n}q_a\psi_{k,a}(x_1)$. Thus, by Corollary~\ref{obs:equality}, $r_{i}(x_1)>r_{k}(x_1)$. Since $x$ is a best response for player $i$ at channel state $j$, thus
\begin{align}\label{eq:paygreater}
(f_j(x)-c)r_{i}(x)\geq (f_j(x_1)-c)r_{i}(x_1) (\text{since } x_1>x).
\end{align}
Hence,
\begin{align}
(f_j(x)-c)r_{k}(x)& =(f_j(x)-c)r_{i}(x)\quad (\text{from } (\ref{eq:samedist}))\nonumber\\
& \geq (f_j(x_1)-c)r_{i}(x_1) \quad (\text{from } (\ref{eq:paygreater}))\nonumber\\
& >(f_j(x_1)-c)r_{k}(x_1).\nonumber
\end{align}
which contradicts the fact that $x_1$ is a best response for player $k$ when the channel state is $j$.

\textit{case ii b}: $x_1=v$:

  Since $\psi_{k,j}(v)$ must be $1$ and $\psi_{k,j}(v-)=\psi_{k,j}(x)<1$. Hence, $\psi_{k,j}(\cdot)$ must have a jump at $v$ and thus $v$ is a best response to primary $k$ when the channel state is $j$. Thus, by Corollary~\ref{obs:equality}, $r_k(v)=r_k(v-)$. Thus, by the continuity of $f_j(\cdot)$, penalties close to $v$ is also a best response for primary $k$ at channel state $j$  i.e. 
  \begin{align}\label{eq:sameclosev}
  (f_j(v)-c)r_k(v)=(f_j(v-)-c)r_k(v-).
  \end{align}
  Since $\psi_{k,j}(v-)=\psi_{k,j}(x)$ by the definition of $x_1$, thus, similar to argument in case ii a, we obtain $r_k(y)<r_i(y)$  for $x<y<v$, i.e. there exists an $\epsilon>0$, $r_i(v-\epsilon)>r_k(v-\epsilon)$ but $(f_j(v)-c)r_k(v)=(f_j(v-\epsilon)-c)r_k(v-\epsilon)$ (by (\ref{eq:sameclosev})). Thus, there exists an $\epsilon>0$,  such that  
  \vspace{-0.3cm}
  \begin{align}\label{eq:klessthani}
  (f_j(v-\epsilon)-c)r_k(v)=(f_j(v-\epsilon)-c)r_k(v-\epsilon)\nonumber\\
  <(f_j(v-\epsilon)-c)r_i(v-\epsilon).
  \vspace{-0.2cm}
  \end{align}
  Note that the right hand side is the expression for the expected payoff of primary $i$ at channel state $j$ at penalty $v-\epsilon$. Since $x$ is a best response to primary $i$ at channel state $j$, thus,
  \vspace{-0.2cm}
  \begin{align}
  (f_j(x)-c)r_k(x)& <(f_j(x)-c)r_i(x)\quad (\text{from } (\ref{eq:samedist}))\nonumber\\
  & \geq (f_j(v-\epsilon)-c)r_i(v-\epsilon)\nonumber\\
  & > (f_j(v )-c)r_k(v)\quad (\text{from } (\ref{eq:klessthani}))\nonumber
  \vspace{-0.2cm}
  \end{align}

  which contradicts that $v$ is a best response for primary $k$ at channel state $j$. Hence, $x_1\neq v$. Hence, this case does not arise.
  
  Thus from case i, case ii.a, and case ii.b we obtain the desired result. \qed
 
Henceforth, we denote $\psi_{i,j}$ and $r_i(\cdot)$ as $\psi_j(\cdot)$ and $r(\cdot)$ respectively by dropping the index corresponding to primary $i$. Note from Definition~\ref{defn:phi} that 
\begin{align}\label{ex1}
\phi_j(x)=(f_j(x)-c)r(x).
\end{align}
Also note that $L_{i,j}=L_{j}$ and $U_{i,j}=U_j$ for all $i\in \{1,\ldots,l\}$. Since strategy profiles of primaries are identical, thus, we can consider strategy profile in terms of only one primary (say, primary $1$). 

\textit{Proof of Theorem~\ref{thm1}}: By Lemma~\ref{lm:continuity} $\psi_i(\cdot)$ does not have a jump at $x<v$. If a primary has a jump at $v$, then by symmetric property other primaries also have a jump at $v$, which is not possible by Lemma~\ref{lm:continuity} since $l\geq 2$.  Thus, $\psi_i(\cdot)$ does not have a jump for any $i\in \{1,\ldots,n\}$.

Now, we show that $\phi_j(\cdot)$ is continuous. Now, we provide a closed form expression for $\phi_j(x)$ using (\ref{ex1}). Since $\psi_{i,j}=\psi_{j}(x)$, thus, from Observation~\ref{obs:successprob}, (\ref{eq:tiequalw}) and Theorem~\ref{thm1} the expected payoff for primary $i$ at $x$ at channel state $j$ is given by
\begin{align}
\phi_j(x)=(f_j(x)-c)(1-w(\sum_{k=1}^{n}q_j\psi_k(x))).\nonumber
\end{align}
The continuity follows from the equation due to the continuity of $w(\cdot)$ (Definition~\ref{defn:w}) and $\psi_k, k=1,\ldots,n$. \qed 

We obtain
\begin{align}\label{ex2}
\phi_j(x)=(f_j(x)-c)(1-w(\sum_{k=1}^{n}q_j\psi_k(x))).
\end{align}

Now, we show a corollary which is a direct consequence of Theorem~\ref{thm1}. We use this result to prove Theorems~\ref{thm2}  and \ref{thm3}. 

\begin{cor}\label{sup}
Every element in the support set of $\psi_i(\cdot)$ is a best response\footnote{Note that in general every element of a support set need not be a best response. To illustrate the fact consider a 3 player non co-operative game and the following NE strategy profile: players 1 and 2 have identical strategy profile which is a uniform distribution from $[L,v]$ and player 3 selects $v$ with probability $1$. Since the support set is closed, thus $v$ is in the support set for players $1$ and $2$. But, players $1$ and $2$ will attain strictly higher payoff at just below $v$ compared to at $v$. Thus, $v$ is not a best penalty response for players $1$ and$2$.  We show that this is not the case here because of the continuity of strategy profile. Specifically, a primary attains the highest possible expected payoff at every penalty in the support set in our setting.}; thus, so are $L_i, U_i$.
\end{cor}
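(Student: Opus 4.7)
The plan is to combine two ingredients: (i) any NE strategy must put all its probability on best responses, and (ii) the continuity of $\phi_i(\cdot)$ from Theorem~\ref{thm1} promotes this measure-theoretic statement to the topological statement about support sets.

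Concretely, I would first write the expected profit of a primary at channel state $i$ using the symmetry established by Theorem~\ref{thm:noasymmetricNE}: it equals $\int \phi_i(x)\, d\psi_i(x)$. Since $\phi_i(x) \leq u_{i,\max}$ for every $x$ (this is just the definition of $u_{i,\max}$, whose existence was guaranteed by Theorem~\ref{thm1}), and since unilaterally placing all mass on any one best response already attains $u_{i,\max}$, the NE condition forces
\begin{equation*}
\int \phi_i(x)\, d\psi_i(x) \;=\; u_{i,\max}.
\end{equation*}
Together with the pointwise bound, this equality implies $\psi_i\bigl(\{x : \phi_i(x) < u_{i,\max}\}\bigr) = 0$.

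Next, I would invoke Theorem~\ref{thm1}: continuity of $\phi_i(\cdot)$ means that $N_i := \{x : \phi_i(x) < u_{i,\max}\}$ is open, as the preimage of an open set under a continuous map. An open set of $\psi_i$-measure zero is, by the very definition of the support set (complement of the largest open null set), disjoint from the support of $\psi_i$. Therefore every element of the support satisfies $\phi_i(x) = u_{i,\max}$, i.e.\ is a best response.

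Finally, to justify the ``thus, so are $L_i, U_i$'' part, I would verify that $L_i$ and $U_i$ lie in the support. For $L_i = \inf\{x : \psi_i(x) > 0\}$, every neighborhood $(L_i - \epsilon, L_i + \epsilon)$ has mass $\psi_i(L_i+\epsilon) - \psi_i((L_i-\epsilon)-) = \psi_i(L_i+\epsilon) > 0$ by the defining property of $L_i$. For $U_i = \inf\{x : \psi_i(x) = 1\}$, every neighborhood $(U_i - \epsilon, U_i + \epsilon)$ has mass $1 - \psi_i(U_i - \epsilon) > 0$, again by the defining property of $U_i$. Both endpoints therefore lie in the support and, by the first part, are best responses.

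The only nontrivial step is the promotion from ``$\psi_i$-a.s.\ $\phi_i = u_{i,\max}$'' to ``every support point is a best response,'' and the hinge is the continuity of $\phi_i(\cdot)$; without Theorem~\ref{thm1} one could not conclude that $N_i$ is open, and individual support points could in principle be suboptimal. Everything else is routine measure theory plus the definitions of $L_i$ and $U_i$.
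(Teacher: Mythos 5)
Your proof is correct and rests on the same two pillars as the paper's own argument: the continuity of $\phi_i(\cdot)$ from Theorem~\ref{thm1} and the minimality of the support set. The only difference is organizational: where the paper argues by contradiction with a two-case split (either a whole neighborhood of $z$ contains no best response, or best responses accumulate at $z$ and sequential continuity forces $z$ to be one), you merge both cases into the single observation that $\{x:\phi_i(x)<u_{i,max}\}$ is an open $\psi_i$-null set and hence disjoint from the support, with the integral identity supplying the "null" part that the paper only asserts in a remark.
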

\begin{proof}
Suppose that there exists a point $z$ in the support set of $\psi_i(\cdot)$, which is not a best response. Therefore, primary 1 plays at $z$ with probability $0$ when the channel state is $i$.\\Now,  one of the following two cases must arise.\\
\textit{Case I}: $\exists$ a neighborhood \cite{rudin} of radius $\delta> 0$ around $z$, such that no point in this neighborhood is a best response. Neighborhood of radius $\delta>0$ of $z$ is an open set (Theorem 2.19 of \cite{rudin}). Hence, we can eliminate that neighborhood and can attain a smaller closed set, such that its complement has probability zero under $\psi_i(\cdot)$, which contradicts the fact that $z$ is in the support set of $\psi_i(\cdot)$. \\
\textit{Case II}:  For every $\epsilon>0$, $\exists y\in (z-\epsilon,z+\epsilon)$, such that $y$ is a best response. Then, we must have a sequence $z_k, k=1,2,\ldots$  such that each $z_k$ is a best response, and $\underset{k\rightarrow \infty}{\text{lim}}z_k=z$ \cite{rudin}. But profit to primary 1 for channel state $i$  at each of $z_k$ is $(f_i(z_k)-c)(1-w(\sum_{j=1}^{n}q_j\psi_j(z_k)))$ by (\ref{ex2}). Thus, we can show that $z$ is also a best response from the continuity of $w(\cdot)$ ad $f_j(\cdot)$. We can conclude the result by noting that $U_i, L_i$ (Definition \ref{dlu}) are in the support set of $\psi_i(\cdot)$.
\end{proof}
\begin{rmk}
By corollary~\ref{sup}, at any channel state $i$, a primary attains the same expected payoff ($u_{i,max}$) at every point in the support set.
\end{rmk}
Now we are ready to prove theorems~\ref{thm2} and \ref{thm3}.

\textit{Proof of Theorem~\ref{thm2}}: Theorem~\ref{thm2} is the direct consequence of Lemma ~\ref{lm:disjoint} since $L_{i,j}=L_{j}$ and $U_{i,j}=U_{j}$ $\forall i$. \qed

\textit{Proof Of Theorem~\ref{thm3}}:
Suppose the statement is not true. But, it follows that there exists an interval $(x,y) \subseteq [L_n, v]$,
such that no primary offers penalty in the interval $(x,y)$ with positive probability. So, we must have $\tilde{a}$ such that
\begin{eqnarray}
\tilde{a}=\inf\{b\leq x:\psi_{j}(b)=\psi_{j}(x), \forall j\} .\nonumber
\end{eqnarray}

By definition of $\tilde{a}$, $\tilde{a}$ is a best response  for at least one state $i$. But, as no primary offers penalty in the range $(\tilde{a},y)$, so from \eqref{ex2}, $\phi_i(z) > \phi_i(\tilde{a})$ for
each $z \in (\tilde{a}, y)$. This is because $w(\sum_{j=1}^{n}q_j\psi_j(z))=w(\sum_{j=1}^{n}q_j\psi_j(\tilde{a}))$ and $ f_i(\tilde{a}) < f_i(z)$.  Thus, $\tilde{a}$ can not be  a best response for state i.\qed
\subsection{Proof of Results of Section~\ref{sec:computation}}\label{proofsec:computation}
We prove Lemma~\ref{lu},~\ref{lm:computation},~\ref{dcont}. Then, we state and prove  Observation~\ref{recurse} which we use to prove Theorem~\ref{thm4}.

\textit{Proof of Lemma~\ref{lu}}: We first outline a recursive computation strategy that leads to the expressions in (\ref{n51}) and (\ref{n52}).

Using Theorem~\ref{thm3}, we have $U_1=v$ and thus $v$ is a best response at channel state $1$ (Corollary~\ref{sup}). If a primary chooses penalty $v$ then it sells only when $X_m>v$, this allows us to compute $u_{1,max}$. By Theorem~\ref{thm3}, primaries with channel states $2,3,\ldots, n$ choose penalty below $L_1$ and primaries with channel state $1$ select penalty greater than $L_1$ with probability $1$. This allows us to calculate the payoff at $L_1$ which must be equal to $u_{1,max}$. The above equality allows us to compute the expression for $L_1$.

Since $L_1=U_2$ (Theorem~\ref{thm3}) and $U_2$ is a best response at channel state $2$, which enables us to obtain the expression for $u_{2,max}$. By Theorem~\ref{thm3} primaries with channel states $3,\ldots,n$ choose penalty below $L_2$ and primaries with channel state $1$ and $2$ select penalty greater than $L_2$ with probability $1$. This allows us to calculate the payoff at $L_2$ which must be equal to $u_{2,max}$. The above equality allows us to compute the expression for $L_2$. Using recursion, we can get the values of $u_{i,max}, L_i$, $ i=1,\ldots,n$. The detailed argument follows:

We first prove \eqref{n51} using induction, (\ref{n52}) follows from (\ref{n51}).

From theorem~\ref{thm3}, $\psi_i(\cdot)$'s support set is $[L_i,L_{i-1}]$ for $i=2,...,n$ and  $[L_1,v]$ for $i=1.$  Thus, $v$ is a best response for channel state  1 (by Corollary~\ref{sup}), hence
\vspace{-0.2cm}
\begin{align}
u_{1,max}& =(f_1(v)-c)(1-w(\sum_{i=1}^{n}q_i))= p_1-c.\nonumber
\vspace{-0.3cm}
\end{align}
Thus,  \eqref{n51} holds for $i=1$ with $L_0=v$.
 Let, \eqref{n51} be true for $i=t<n$. We have to show that \eqref{n51} is satisfied for $i=t+1$ assuming that it is true for $i=t$. Thus,by induction hypothesis,
 \vspace{-0.4cm}
\begin{align}\label{52a}
& u_{t,max}=p_t-c= (f_t(L_{t-1})-c)(1-w_{t}).
\vspace{-0.2cm}
\end{align}
Now, $L_t$ is a best response for state $t$, and thus,
\begin{align}
\phi_t(L_t) =  (f_t(L_t)-c)(1-w_{t+1})=p_t-c.\nonumber
\end{align}
Now, as $L_t $ is also a best response for state $t+1$ by Corollary~\ref{sup}, thus
\begin{eqnarray}
\phi_{t+1}(L_t) = (f_{t+1}(L_t)-c)(1-w_{t+1})=u_{t+1,max}.\nonumber
\end{eqnarray}
Thus, $u_{t+1,max}=p_{t+1}-c$ and it satisfies (\ref{n51}). Thus, \eqref{n51} follows from mathematical induction.

\eqref{n52} follows since
$(f_i(L_i)-c)(1-w_{i+1})=p_i-c$
and $g_i(\cdot)$ is the inverse of $f_i(\cdot).$\qed 

\textit{Proof of Lemma~\ref{lm:computation}}:
By Theorem~\ref{thm3}, $L_i, L_{i-1}$ are respectively the lower end-point and the upper end-point of the support set of $\psi_i(\cdot)$.
 We should have for $x<L_i$, $\psi_i(x)=0$ and for $x>L_{i-1}$,$\psi_i(x)=1$.  From Corollary ~\ref{sup},  every point $x \in [L_i,L_{i-1}]$ is a best response for state $i$, and hence,
\begin{align}
(f_i(x)-c)(1-w(\sum_{j=i+1}^{n}q_j+q_i.\psi_i(x)))=u_{i,max}=p_i-c. \nonumber
\end{align}
Thus, the expression for $\psi_i(\cdot)$  follows. We conclude the proof by noting that  the domain and range of $w(.)$ is $[0,1]$, and $\dfrac{p_i-c}{f_i(x)-c}<1$ for $x\in [L_i,L_{i-1}]$: so $w^{-1}(.)$ is defined at $1-\dfrac{p_i-c}{f_i(x)-c}$. \qed

\textit{Proof of Lemma~\ref{dcont}}:
Note that
\begin{align}
& \psi_i(L_i)=\dfrac{1}{q_i}(w^{-1}(1-\dfrac{p_i-c}{f_i(L_i)-c})-\sum_{j=i+1}^{n}q_j)\nonumber\\
           & =\dfrac{1}{q_i}(w^{-1}(w_{i+1})-\sum_{j=i+1}^{n}q_j)\quad \text{from}(\ref{n52})\nonumber\\
           & =0 \quad (\text{by} (\ref{d5})).\nonumber
\end{align}
From (\ref{c5}) and (\ref{n51}), we obtain
\begin{align}
& \psi_i(L_{i-1})=\dfrac{1}{q_i}(w^{-1}(1-\dfrac{p_i-c}{f_i(L_{i-1})-c})-\sum_{j=i+1}^{n}q_j)\nonumber\\
& =\dfrac{1}{q_i}(w^{-1}(w_i)-\sum_{j=i+1}^{n}q_j)\nonumber\\
& =\dfrac{1}{q_i}.q_i=1\quad (\text{as} w_i=w(\sum_{j=i}^{n}q_j)).\nonumber
\vspace{-0.3cm}
\end{align}
$w(.)$ is continuous, strictly increasing on compact set $[0, \sum_{j=1}^{n}q_j]$, so $w^{-1}$ is also continuous (theorem 4.17 in \cite{rudin}). Also, $\dfrac{p_i-c}{f_i(x)-c}$ is continuous for $x\geq L_i$ as $f_i(x)>c$, so $\psi_i(.)$ is continuous as it is a composition of two continuous functions. Again, $w^{-1}(.)$ is strictly increasing (as $w(\cdot)$ is strictly increasing), $1-\dfrac{p_i-c}{f_i(x)-c}$ is strictly increasing (as $f_i(\cdot)$ is strictly increasing), so $\psi_i(.)$ is strictly increasing on $[L_i,L_{i-1}]$ ( as it is a composition of two strictly increasing functions (Theorem 4.7 in \cite{rudin}))\qed.

Now, we state and prove a result (Observation~\ref{recurse}). Subsequently we prove Theorem~\ref{thm4}.

First, note that as $1-w_i>0,\forall i\in\{1,\ldots,n\}$, thus, $p_i-c>0$. Hence, from (\ref{n52}) it is evident that 
\begin{align}\label{imp}
f_k(L_k)>c.
\end{align}
\begin{obs}\label{recurse}
For $t>s, t,s\in\{1,\ldots,n\}$
\begin{align}\label{s32}
p_t-c=(p_s-c)\prod\limits_{i=s}^{t-1}\dfrac{f_{i+1}(L_i)-c}{f_i(L_i)-c}.
\end{align}
\end{obs}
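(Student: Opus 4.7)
The plan is to establish the claim by induction on $t-s$, with the key step being a one-step identity $p_{i+1}-c = (p_i-c)\cdot\frac{f_{i+1}(L_i)-c}{f_i(L_i)-c}$ that can then be telescoped.

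First I would extract two expressions for $p_i-c$ from Lemma~\ref{lu}. The direct formula (\ref{n51}) gives
\begin{align*}
p_{i+1}-c = (f_{i+1}(L_i)-c)(1-w_{i+1}).
\end{align*}
On the other hand, inverting (\ref{n52}) (i.e., applying $f_i(\cdot)$ to both sides of $L_i = g_i\left(\frac{p_i-c}{1-w_{i+1}}+c\right)$ and using $f_i = g_i^{-1}$) gives
\begin{align*}
p_i-c = (f_i(L_i)-c)(1-w_{i+1}).
\end{align*}
Note that $1-w_{i+1}>0$ since $w_{i+1}<1$ (indeed $w_{n+1}=0$ and more generally $w_k<1$ by Definition~\ref{defn:w} together with $\sum_j q_j<1$), and $f_i(L_i)>c$ by (\ref{imp}), so both sides are strictly positive and division is legitimate.

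Dividing these two equalities eliminates the common factor $(1-w_{i+1})$ and yields the base-step ratio
\begin{align*}
\frac{p_{i+1}-c}{p_i-c} = \frac{f_{i+1}(L_i)-c}{f_i(L_i)-c}.
\end{align*}
For the inductive step, assuming (\ref{s32}) for some $t>s$, multiply both sides by $\frac{p_{t+1}-c}{p_t-c}$ and substitute the base-step identity at $i=t$ to extend the product up to $i=t$; this gives (\ref{s32}) with $t$ replaced by $t+1$. Equivalently, one can simply write
\begin{align*}
\frac{p_t-c}{p_s-c} = \prod_{i=s}^{t-1}\frac{p_{i+1}-c}{p_i-c} = \prod_{i=s}^{t-1}\frac{f_{i+1}(L_i)-c}{f_i(L_i)-c},
\end{align*}
which is valid since each factor in the telescoping product is positive, and then multiply through by $p_s-c$ to obtain (\ref{s32}).

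There is no substantive obstacle here: the result is an algebraic consequence of the two closed-form expressions already established in Lemma~\ref{lu}. The only subtlety worth highlighting is ensuring that all denominators are nonzero, which follows from $1-w_{i+1}>0$ and $f_i(L_i)>c$ as observed in (\ref{imp}).
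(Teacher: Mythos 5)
Your proof is correct and follows essentially the same route as the paper's: both invert (\ref{n52}) to get $p_i-c=(f_i(L_i)-c)(1-w_{i+1})$, combine it with (\ref{n51}) for $p_{i+1}-c$, cancel the common factor $(1-w_{i+1})$ to obtain the one-step ratio, and then telescope. Your explicit attention to the nonvanishing of the denominators via (\ref{imp}) matches the paper's citation of that fact.
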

\begin{proof}
Since $f_i^{-1}(\cdot)=g_i$, thus from (\ref{n52}) we obtain for $i-1$
\begin{align}\label{eq:multi}
p_{i-1}-c=(f_{i-1}(L_{i-1})-c)(1-w_{i}).
\end{align}
Hence, from (\ref{n51}), (\ref{imp}), and (\ref{eq:multi})
\begin{align}
p_{i}-c=(p_{i-1}-c)\dfrac{f_{i}(L_{i-1})-c}{f_{i-1}(L_{i-1})-c} .
\end{align}
We obtain the result using recursion.\end{proof}

\textit{Proof of Theorem \ref{thm4}}:
Fix a state $j\in \{1,\ldots,n\}$. First, we show that if a primary follows its strategy profile then it would attain a payoff of $p_j-c$ at channel state $j$. Next, we will show that if a primary unilaterally deviates from its strategy profile, then it would obtain a payoff of at most of $p_j-c$ (Case i and Case ii) when the channel state is $j$.

If the state of the channel of primary 1 is $i\geq 1$ and it selects penalty $x$, then its expected profit is-
\begin{align}\label{e1}
\phi_i(x)& =(f_i(x)-c)r(x)\nonumber\\
& =(f_i(x)-c)(1-w(\sum_{k=1}^{n}q_k.\psi_k(x))).
\end{align}
First, suppose $x\in [L_j,L_{j-1}]$. From (\ref{e1}) and (\ref{c5}), we obtain
\begin{align}\label{s43}
\phi_j(x)&= (f_j(x)-c)(1-w(\sum_{i=1}^{n}q_i\psi_i(x))) \notag\\& =(f_j(x)-c)(1-w(\sum_{k=j+1}^{n}q_k+q_j\psi_j(x)))\nonumber\\
 & =(f_j(x)-c)(1-w(w^{-1}(1-\dfrac{p_j-c}{f_j(x)-c})))\quad(\text{from } (\ref{c5}))\notag\\ 
 & =p_j-c.
\end{align}
Since $\psi_i(L_n)=0$ $\forall i$, we have
\begin{align}\label{paylow}
\phi_j(L_n)& =(f_j(L_n)-c)(1-w(0))=f_j(L_n)-c.
\end{align}
From (\ref{paylow}) expected payoff to a primary at state $j$ at $L_n$ is $f_j(L_n)-c$. At any $y<L_n$ expected payoff to a primary at state $j$ will be strictly less than $f_j(L_n)-c$. Hence, it suffices to show that for $x\in [L_k,L_{k-1}], k\neq j, k\in\{1,..,n\}$, profit to primary 1 is at most $p_j-c$, when the channel state is $j$. 

Now, let $x\in [L_k,L_{k-1}]$. From (\ref{e1}) and (\ref{c5}), expected payoff at $x$
\vspace{-0.3cm}
\begin{align}\label{ss2}
\phi_j(x)& =(f_j(x)-c)(1-w(\sum_{i=k+1}^{n}q_i+q_k\psi_k(x)))\nonumber\\
& =(f_j(x)-c)(1-w(w^{-1}(1-\dfrac{p_k-c}{f_k(x)-c})))(\text{from } (\ref{c5}))\notag\\ 
& =\dfrac{(p_k-c)(f_j(x)-c)}{f_k(x)-c}.
\end{align}
We will show that $\phi_j(x)-(p_j-c)$ is non-positive. As, $k\neq j$, so only the following two cases are possible.\\
\textit{Case i}:  $k<j$:
From (\ref{con1}), (\ref{imp}) and for $i<j$, we have-
\begin{equation}\label{ss4}
\dfrac{f_i(L_{i-1})-c}{f_i(L_i)-c}>\dfrac{f_j(L_{i-1})-c}{f_j(L_i)-c} (\text{as }  L_i<L_{i-1}).
\end{equation}
From Observation \ref{recurse} we obtain-
\begin{align}
p_j-c& = \dfrac{(p_k-c)(f_j(L_{j-1})-c)}{f_k(L_k)-c}\prod\limits_{i=k+1}^{j-1}\dfrac{f_i(L_{i-1})-c}{f_i(L_i)-c}.\nonumber
\end{align}
Using (\ref{ss4}) the above expression becomes
\begin{align}\label{eq:ss2a}
p_j-c& \geq \dfrac{(p_k-c)(f_j(L_{j-1})-c)}{f_k(L_k)-c}\prod\limits_{i=k+1}^{j-1}\dfrac{f_j(L_{i-1})-c}{f_j(L_i)-c}\nonumber\\
     & =\dfrac{(p_k-c)(f_j(L_k)-c)}{f_k(L_k)-c}.
\end{align}
Hence, from (\ref{ss2}) and (\ref{eq:ss2a}), we obtain-
\begin{align}\label{ss6}
& \phi_j(x)-(p_j-c)\notag\\& \leq (p_k-c)(\dfrac{f_j(x)-c}{f_k(x)-c}-\dfrac{f_j(L_k)-c}{f_k(L_k)-c}).
\end{align}
Since $x\in [L_k,L_{k-1}]$, $j>k$ and $f_k(L_k)>c$ (by (\ref{imp})); hence, from (\ref{ss6}) and  Assumption 1, we have-
\begin{eqnarray}\label{s5a}
\phi_j(x)\leq p_j-c.
\end{eqnarray}
\textit{Case ii}: $j<k$:
If $f_j(x)\leq c$ then a primary gets a non-positive payoff at channel state $j$, which is strictly below $p_j-c$. Hence we consider the case when $f_j(x)>c$. Since $x\leq L_{k-1}$ thus $f_j(L_{k-1})>c$.
Now, if $i>j$ and $f_j(L_i)>c$, we have from (\ref{con1}) and (\ref{imp})-
\begin{eqnarray}\label{s46}
\dfrac{f_{i}(L_{i-1})-c}{f_j(L_{i-1})-c}<\dfrac{f_{i}(L_{i})-c}{f_j(L_{i})-c}\quad (\text{as } L_i<L_{i-1})
\end{eqnarray}
Since $f_j(L_{k-1})>c$, thus 
\begin{align}\label{eq:greaterthan0}
f_j(L_i)>c\quad(\text{for } j\leq i<k, \text{as } L_i\geq L_{k-1}).
\end{align}
Now, from Observation~\ref{recurse} we obtain-
\begin{align}\label{eq:ss2b}
p_k-c& =(p_j-c)\prod\limits_{i=j}^{k-1}\dfrac{f_{i+1}(L_i)-c}{f_i(L_i)-c}\nonumber\\
       & =(p_j-c).\dfrac{f_k(L_{k-1})-c}{f_j(L_j)-c}\prod\limits_{i=j+1}^{k-1}\dfrac{f_{i}(L_{i-1})-c}{f_{i}(L_{i})-c}\nonumber\\
     & \leq (p_j-c).\dfrac{f_k(L_{k-1})-c}{f_j(L_j)-c}\prod\limits_{i=j+1}^{k-1}\dfrac{f_j(L_{i-1})-c}{f_j(L_{i})-c}\nonumber\\
     & (\text{from } (\ref{s46}), \& (\ref{eq:greaterthan0}))\nonumber\\
     & =(p_j-c).\dfrac{f_k(L_{k-1})-c}{f_j(L_{k-1})-c}.
\end{align}	
Thus, from (\ref{ss2}) and (\ref{eq:ss2b}), we obtain-
\begin{align}\label{s5}
& \phi_j(x)-(p_j-c)\notag\\& \leq (p_j-c)(\dfrac{f_k(L_{k-1})-c}{f_j(L_{k-1})-c}.\dfrac{f_j(x)-c}{f_k(x)-c}-1)\nonumber\\
   & \leq 0 (\text{as } x\leq L_{k-1}, j<k \quad\text{and from Assumption 1}).
\end{align}
Hence, from (\ref{s5}), (\ref{s5a}), and (\ref{s43}), every $x\in [L_j,L_{j-1}]$ is a best response to primary 1 when channel state is $j$. Since $j$ is arbitrary, it is true for any $j\in\{1,\ldots,n\}$ and thus (\ref{c5}) constitutes a Nash Equilibrium strategy profile.\qed
\vspace{-0.3cm}
\subsection{Proof of results of Section~\ref{sec:slnumerical}}
We first establish Lemma \ref{eff}. Subsequently, we prove Lemma \ref{thresh}.


\textit{Proof of Lemma \ref{eff}}: We divide the proof in three parts:
\begin{itemize}
\item First, we prove that when $m\geq (l-1)(\sum_{j=1}^{n}q_j+\epsilon)$ for some $\epsilon>0$, then $p_i-c\rightarrow f_i(v)-c$ as $l\rightarrow \infty$ (Part I).
\item Next we show that if $(l-1)\sum_{j=k}^{n}(q_j+\epsilon)\geq m\geq (l-1)\sum_{j=k+1}^{n}(q_j-\epsilon)$ for some $\epsilon>0$, then $p_i-c\rightarrow f_i(c_k)-c$ if $i>k$ and $p_i-c\rightarrow 0$ if $i\leq k$ (Part II).
\item Finally, we show if $m\leq (l-1)(q_n+\epsilon)$ for some $\epsilon>0$, then $p_i-c\rightarrow 0$ as $l\rightarrow \infty$ for any $i\in \{1,\ldots,n\}$ (Part III).
\end{itemize}
\textit{Part I}: Suppose  $m\geq (l-1)(\sum_{j=1}^{n}q_j+\epsilon)$ for some $\epsilon>0$.

Since $L_{i-1}\leq v$, thus, from (\ref{n51})-
\begin{align}\label{eq:indpayoff}
p_i-c\leq (f_i(v)-c) \quad i=1,\ldots,n.
\end{align}
When primary 1 selects penalty $v$ at channel state $i\geq 1$, then its expected profit is $\phi_i(v)=(f_i(v)-c)(1-w_1)$. Now,
from Theorem \ref{singlelocation} under the NE strategy profile,
\begin{align}\label{eq:indpayofflower}
p_i-c\geq \phi_i(v)=(f_i(v)-c)(1-w_1).
\end{align}
Let $Z_i, i=1,..,l-1$ be Bernoulli trials with success probabilities $\sum_{j=1}^{n}q_i$ and $Z=\sum_{i=1}^{l-1}Z_i$; so $P(Z\geq m)$ is equal to $w_1$ by (\ref{d5}).  Since $m\geq (l-1)(\sum_{i=1}^{n}q_i+\epsilon)$ for some $\epsilon>0$ and $E(Z)=(l-1)\sum_{i=1}^{n}q_i$, by weak law of large numbers \cite{ross}, $w_1\rightarrow 0$ as $l\rightarrow \infty$. Hence, $p_i-c\rightarrow f_i(v)-c$ as $l\rightarrow \infty$ by (\ref{eq:indpayoff}) and (\ref{eq:indpayofflower}).Thus, the result follows.\qed.

\textit{Part II}:
We show the result by evaluating the expressions for $p_j-c, j=1,\ldots,n$ in the asymptotic limit. Towards this end, we first evaluate the expressions for $w_j$ and $L_j$ in the asymptotic limit. We obtain the expression for $p_j-c$ when we combine those two values.\\
Suppose $(l-1)\sum_{j=k}^{n}(q_j+\epsilon)\geq m\geq (l-1)\sum_{j=k+1}^{n}(q_j-\epsilon)$ for some $\epsilon>0$. Since $w_{k+1}$ is the probability of at least $m$ successes out of $l-1$ independent Bernoulli trials, each of which occurs with probability $\sum_{j=k+1}^{n}q_j$ (by (\ref{d5})). Hence from the weak law of large numbers \cite{ross}
\begin{align}\label{asy1}
& 1-w_{k+1}\rightarrow 1\quad \text{as } l\rightarrow \infty.
\end{align}
Since $w_j<w_i$, for any $j>i$ (from (\ref{d5})), we have from (\ref{asy1}) for $j\geq k+1$
\begin{equation}\label{asy2}
1-w_j\rightarrow 1\quad \text{as } l\rightarrow \infty.
\end{equation}

Again, as $m\leq(l-1)(\sum_{j=k}^{n}q_j-\epsilon)$, so,  from weak law of large numbers\cite{ross},  for every $\epsilon>0$, $\exists L$, such that $1-w_{k}<\epsilon$, whenever $l\geq L$. Hence,
\begin{align}\label{asyz1}
1-w_{k}& \underset{l\rightarrow \infty}{\rightarrow} 0\nonumber\\
1-w_{j}& \underset{l\rightarrow \infty}{\rightarrow} 0 \quad (\text{for }j\leq k, w_j\geq w_k).
\end{align}
Thus, it is evident from (\ref{n51}) and (\ref{asyz1}) that if $i\leq k$, then
\begin{align}\label{asypaylo}
p_i-c\underset{l\rightarrow \infty}{\rightarrow} 0 .
\end{align}
Thus, from (\ref{n52}), (\ref{asy2}), and (\ref{asypaylo})
\begin{align}\label{asyl}
L_{k}\underset{l\rightarrow \infty}{\rightarrow} g_k(c)= c_{k}.
\end{align}
We obtain for $j> k$ from (\ref{n51})  and (\ref{asy2})
\begin{align}\label{asyimp}
p_j& \underset{l\rightarrow \infty}{\rightarrow} f_j(L_{j-1}).\quad (\text{from }(\ref{asy2})).
\end{align}
Again, using (\ref{n52}) and (\ref{asy2}), we obtain for $j> k$
\begin{align}\label{asyimp2}
p_j & \underset{l\rightarrow \infty}{\rightarrow} f_j(L_{j})\quad (\text{from }(\ref{asy2})).
\end{align}
$f_j(\cdot)$ is strictly increasing, thus from (\ref{asyimp}) and (\ref{asyimp2}), $L_j\rightarrow L_{j-1}$ (for $j>k$). Hence, for $j>k$,
\begin{align}\label{asylj}
L_j & \underset{l\rightarrow \infty}\rightarrow L_{k}\nonumber\\
L_j & \underset{l\rightarrow \infty}{\rightarrow} c_{k}\quad (\text{from}(\ref{asyl})).
\end{align}                               
Thus, from (\ref{asylj}), and  (\ref{asyimp2}), we obtain for any $i>k$ 
\begin{align}\label{asyhi}
p_i-c\underset{l\rightarrow \infty}{\rightarrow} (f_i(c_{k})-c).
\end{align}
Thus, from (\ref{asypaylo}) $p_i-c\rightarrow 0$ as $l\rightarrow \infty$ if $i\leq k$. From (\ref{asyhi}) we obtain $p_i-c\rightarrow f_i(c_{k})-c$ as $l\rightarrow \infty$ if $i>k$. Hence, the result follows.\qed

\textit{Part III}: 
Suppose that $m\leq (l-1)(q_n-\epsilon)$, for some $\epsilon>0$. Let, $Z_i, i=1,...,l-1$ be the Bernoulli trials with success probabilities $q_n$ and $Z=\sum_{i=1}^{l-1}Z_i$, $E(Z)=(l-1)q_n$. Hence,
\vspace{-0.3cm}
\begin{eqnarray}\label{e12}
1-w_n& \leq&  P(Z\leq m)\nonumber\\
& & \leq P(Z\leq (l-1)(q_n-\epsilon))\nonumber\\
& &   \leq P(|Z-(l-1)q_n|\geq (l-1)\epsilon)\nonumber\\
& & \leq 2\exp(-\dfrac{2(l-1)^2\epsilon^2}{l-1})\nonumber\\
& & (\text{from Hoeffding's Inequality \cite{hoeffding}})\nonumber\\
& & =2\exp(-2(l-1)\epsilon^2).
\vspace{-0.2cm}
\end{eqnarray}
Note that 
$1-w_i<1-w_j$ (if $j>i$), $f_k(L_{k-1})>f_{k-1}(L_{k-1})$. Hence, it can be readily seen from (\ref{n51}) that 
\vspace{-0.2cm}
\begin{align}\label{eq:pizero}
& p_i-c\leq (f_i(L_{i-1})-c)(1-w_n).
\vspace{-0.3cm}
\end{align}
Thus, the result follows from (\ref{e12}) and (\ref{eq:pizero}).\qed


When $m\leq (l-1)(q_n-\epsilon)$ for some $\epsilon>0$, then  the upper bound for $R_{NE}$  (see (\ref{eq:rne})) from (\ref{eq:pizero})is 
\begin{align}\label{nz}
R_{NE}& \leq (1-w_n)(\sum_{j=1}^{n}q_j.(f_j(L_{j-1})-c)).
\end{align}

Thus, for $m\leq (l-1)(q_n-\epsilon)$, $\epsilon>0$, from (\ref{e12}) and (\ref{nz}), we obtain.
\begin{equation}\label{exnz}
R_{NE}\leq \gamma\cdot\exp(-2\epsilon^2.(l-1))
\end{equation}
where $\gamma=2(1-w_n)(\sum_{j=1}^{n}q_j.(f_j(L_{j-1})-c))$. We will use this bound in proving Lemma~\ref{thresh}.

From, the definition of $\eta$, it should be clear that
\begin{equation}\label{bound}
\eta\leq 1.
\end{equation}

Now, we show Lemma \ref{thresh}

\textit{Proof of Lemma \ref{thresh}}: We divide the proof in the following two parts
\begin{itemize}
\item First, we show that  if $m\geq (l-1)(\sum_{i=1}^{n}q_i+\epsilon)$ for some $\epsilon>0$, then $\eta\rightarrow 1$ as $l\rightarrow \infty$ (Part I).
\item Next, we show that if  $m\leq (l-1)(q_n-\epsilon)$, for some $\epsilon>0$, then $\eta\rightarrow 0$ as $l\rightarrow \infty$ (Part II).
\end{itemize}
\textit{Part I}:
First suppose that $m\geq (l-1)(\sum_{i=1}^{n}q_i+\epsilon)$ for some $\epsilon>0$. 

From, definition of $R_{OPT}$, it is obvious that 
\begin{eqnarray}\label{e4}
R_{OPT}\leq l\cdot(\sum_{i=1}^{n}(q_i.(f_i(v)-c))).
\end{eqnarray}
Hence the result follows from Corollary \ref{cor:rne}, (\ref{e4}) and (\ref{bound}).\qed

\textit{Part II}:
Now, suppose that $m\leq (l-1)(q_n-\epsilon)$, for some $\epsilon>0$. We prove that $\eta\rightarrow 0$ as $l\rightarrow 0$. 

We prove the result by showing that $R_{NE}$ decreases at fast rate to $0$ compared to $R_{OPT}$ when $l\rightarrow\infty$.\\
 Let, $Z$ be the number of primaries, whose channel is in state $n$. Hence,
\begin{align}\label{boundropt}
& R_{OPT}\geq E(\min(Z,m))(f_n(v)-c)\nonumber\\& \dfrac{R_{OPT}}{f_n(v)-c}\geq E(\min(Z,m)).
\end{align}
Note that $E(Z)=l\cdot q_n$, $Var(Z)=l\cdot q_n(1-q_n)$.\\ We introduce a new random variable $Y$ as follows-
\begin{equation*}
Y=\begin{cases} m, &  \text{if} Z\geq m\\
0, & \text{otherwise.} \end{cases}
\end{equation*}
So, 
\begin{align}\label{e10}
E(\min(Z,m))& \geq  E(Y)\nonumber\\
& =m.P(Z\geq m)\nonumber\\
& \geq m.(1-P(Z\leq (l-1)(q_n-\epsilon)))\nonumber\\
& \geq m.(1-P(|Z-l.q_n|\geq (l-1)\epsilon)\nonumber\\
& \geq m.(1-\dfrac{l.q_n.(1-q_n)}{(l-1)^2.\epsilon^2}) \nonumber\\& (\text{From Chebyshev's Inequality}).
\vspace{-0.2cm}
\end{align}
Hence, from (\ref{exnz}), (\ref{boundropt}) and (\ref{e10}), we obtain-
\begin{eqnarray}\label{e8}
\eta\leq \dfrac{l.\gamma.\exp(-2(l-1)\epsilon^2)}{m.(1-\dfrac{l.q_n.(1-q_n)}{(l-1)^2.\epsilon^2}).(f_n(v)-c)} .\nonumber
\end{eqnarray}
Thus,  $\eta$ tends to zero for $m\leq (l-1)(q_n-\epsilon)$,as $l$ tends to infinity (as $m\neq 0$).\qed
\subsection{Proof of Results of Section~\ref{sec:assumpnecessary}}
First, we prove Lemma~\ref{lm:strategymultiple}. Subsequently, we state and prove Observation~\ref{obs:identity} which we use to show Theorem~\ref{thm:multipleNE}. The proof of Theorem ~\ref{thm:asymmne} is similar and hence we omit it. Finally, we show Theorems~\ref{thm:notaNE} and \ref{counterexample2}. 

\textit{Proof of Lemma~\ref{lm:strategymultiple}}:
First , it is evident from (\ref{mt3}) and (\ref{mt4})
\begin{align}
0\leq \dfrac{\bar{p}_1-c}{f_1(x)-c}\quad (\text{if} x\geq \bar{L})\nonumber
\end{align}
Hence, $w^{-1}(\cdot)$ is defined at $x\geq \bar{L}$. Note that
\begin{align}
\bar{\psi}(\bar{L})& =\dfrac{1}{\sum_{j=1}^{n}q_j}w^{-1}(1-\dfrac{\bar{p}_1-c}{f_1(g_1(\bar{p}_1))-c})\quad (\text{from} (\ref{mt4}))\nonumber\\
& =0\nonumber
\end{align}
 Note that
\begin{align}
\bar{\psi}(v)& =\dfrac{1}{\sum_{j=1}^{n}q_j}w^{-1}(1-\dfrac{\bar{p}_1-c}{f_1(v)-c})\nonumber\\
& =\dfrac{1}{\sum_{j=1}^{n}q_j}w^{-1}(1-\dfrac{(f_1(v)-c)(1-w_1)}{f_1(v)-c})\nonumber\\
& =\dfrac{1}{\sum_{j=1}^{n}q_j}w^{-1}(w_1)\nonumber\\
& =1\nonumber
\end{align}
We already know that $w^{-1}(\cdot)$ is continuous and strictly increasing. Since $1-\dfrac{\bar{p}_1-c}{f_1(x)-c}$ is strictly increasing and continuous for $x\geq \bar{L}$. Hence, $\bar{\psi}(\cdot)$ is continuous and strictly increasing on $[\bar{L},v]$. \qed.
\begin{obs}\label{obs:identity}
\begin{align}
f_i(\bar{L})=\bar{p}_i\quad (i=1,\ldots,n)
\end{align}
\end{obs}
\begin{proof}
The result is trivially true for $i=1$ by definition (\ref{mt4}) as $g_1(\cdot)=f_1^{-1}(\cdot)$. We will show the statement for $i\geq 2$. Since $f_i(\bar{L})>f_1(\bar{L})>c$, we have from (\ref{mt1})
\begin{align}
& \dfrac{f_1(\bar{L})-c}{f_i(\bar{L})-c}=\dfrac{f_1(v)-c}{f_i(v)-c}\nonumber\\
& \dfrac{\bar{p}_1-c}{f_i(\bar{L})-c}=\dfrac{f_1(v)-c}{f_i(v)-c}\nonumber\\
& \dfrac{(f_1(v)-c)(1-w_1)}{f_i(\bar{L})-c}=\dfrac{f_1(v)-c}{f_i(v)-c}\quad (\text{from} (\ref{mt3}))\nonumber\\
& f_i(\bar{L})-c=\bar{p}_i-c\quad (\text{from} (\ref{mt3}))\nonumber
\end{align}
Hence, the result follows.
\end{proof}
\textit{Proof of Theorem~\ref{thm:multipleNE}}:
We show that for any $x\in [\bar{L},v]$, a primary attains a payoff of $\bar{p}_i-c$ at channel state $i$. Then, we will show that if a primary selects a penalty outside the interval a primary\rq{}s payoff is strictly less than $\bar{p}_i-c$ at channel state $i$.

Suppose, $x\in[\bar{L},v]$. Now, fix any channel state $i\in \{1,\ldots,n\}$. If primary 1 selects penalty $x$ at channels state $i$, then its expected payoff is
\begin{align}\label{mt10}
\phi_i(x)& =(f_i(x)-c)(1-w(\sum_{j=1}^{n}q_j\bar{\psi}_j(x)))\nonumber\\
& =(f_i(x)-c)(1-w(\bar{\psi}(x)\sum_{j=1}^{n}q_j))\nonumber\\
& =(f_i(x)-c)(1-w(w^{-1}(1-\dfrac{\bar{p}_1-c}{f_1(x)-c})))\nonumber\\
& =\dfrac{(\bar{p}_1-c)(f_i(x)-c)}{f_1(x)-c}\nonumber\\
& =\dfrac{(f_1(v)-c)(f_i(x)-c)(1-w_1)}{f_1(x)-c}\nonumber\\& (\text{Using} (\ref{mt3}), f_1(\bar{L})>c)\nonumber\\
& =(f_i(v)-c)(1-w_1) \quad(\text{using} (\ref{mt1}))\nonumber\\
& =\bar{p}_i-c\quad(\text{from} (\ref{mt3}))
\end{align}
Now, at any $x<\bar{L}$, expected payoff will be strictly less than $f_i(\bar{L})-c$. But, from Observation ~\ref{obs:identity}, $f_i(\bar{L})-c=\bar{p}_i-c$. 

Thus, from (\ref{mt10}), when channel state is $i\geq 1$, every point in the interval $[\bar{L},v]$ is a best response to primary 1. Hence, the result follows.\qed 

\textit{Proof of Theorem~\ref{thm:notaNE}}:
 By simple calculation, we obtain the following values
\begin{equation*}
p_1=0.9305, L_1=1.1432, L_2=0.9372
\end{equation*}
Now, consider the following unilateral deviation for primary 1: primary 1 will choose a penalty $x\in (0.9305,0.9372)$ with probability 1, when the channel state is $1$. Since, no primary selects penalty lower than $0.9372$ under the strategy profile (\ref{c5}), thus expected payoff that primary 1 will obtain is $f_1(x)-c=x-c$, which is strictly larger than $p_1-c$ (since $x>0.9305=p_1$), the expected payoff that primary 1 gets by Theorem~\ref{singlelocation} when it selects strategy according to (\ref{c5}). Thus, the strategy profile as defined in (\ref{c5}) is not an NE.\qed 

\textit{Proof of Theorem~\ref{counterexample2}}:
We show that when the channel state is $i=2$ a primary does not have any profitable unilateral deviation from the strategy profile. The proof for $i=1$ is similar and thus we omit it.\\
First, we show that under the strategy profile a primary attains a payoff of $\tilde{p}_2-c$ at channel state $i=2$. Next, we show that if a primary deviates at channel state $2$, then its expected payoff is upper bounded by $\tilde{p}_2-c$.

Note that  when channel state is $2$ and primary chooses penalty $x\in [\tilde{L}_2,v] $, expected payoff to primary 1 is-
\vspace{-0.5cm}
\begin{align}\label{a3e10}
& (f_2(x)-c)(1-w(\sum_{i=1}^{2}q_i*\tilde{\psi}_i(x)))\nonumber\\& =(f_2(x)-c)(1-w((w^{-1}(\dfrac{f_2(x)-\tilde{p}_2}{f_2(x)-c})+q_1-q_1)))\nonumber\\
& =\tilde{p}_2-c
\end{align}
Now, suppose $x\in[\tilde{L}_1,\tilde{L}_2]$. From (\ref{a3e10}), expected payoff to a primary when it selects penalty $x$ at channel state $2$, is-
\begin{align}\label{a3e11}
& (f_2(x)-c)(1-w(q_1*\tilde{\psi}_1(x)))\nonumber\\& =(f_2(x)-c)(1-w(w^{-1}(\dfrac{f_1(x)-\tilde{p}_1}{f_1(x)-c})))\nonumber\\
& =(f_2(x)-c)\dfrac{\tilde{p}_1-c}{f_1(x)-c}\nonumber\\
& =\dfrac{(f_2(x)-c)(f_1(\tilde{L}_2)-c)}{(f_1(x)-c)(f_2(\tilde{L}_2)-c)}*(\tilde{p}_2-c)\nonumber\\& (\text{from (\ref{a3e2}) and $c=0$})\nonumber\\
& < \tilde{p}_2-c\quad (\text{from (\ref{a3e3}) as} \tilde{L}_2\geq x\geq \tilde{L}_1>1)
\end{align}
Note that at  $\tilde{L}_1$, $\tilde{\psi}_i(x)=0, i=1,2$. Hence, the expected payoff to a primary when it selects penalty $\tilde{L}_1$ at channel state $2$ is given by $f_2(\tilde{L}_1)-c$. From (\ref{a3e11}) we obtain $\tilde{p}_2>f_2(\tilde{L}_1)$, hence any penalty $x<\tilde{L}_1$ will induce payoff of strictly lower than $\tilde{p}_2$ when channel state is $2$. Hence, the result follows.\qed

\subsection{Proof of Results of Section~\ref{sec:repeatedgame}}
Here, we prove Theorem~\ref{spne}. Towards this end, we state and prove Observation~\ref{obsmono}.

First, we evaluate the total expected payoff that a primary will get under the strategy profile ($SP_{R}$). Note that the strategy $SP_{R}$ is symmetric, thus, the expected payoff of primaries would be identical and thus, we only evaluate the expected payoff of primary $1$. \\
Now we introduce some notations which we use throughout this section: 
\begin{defn}\label{dm}
 Let $X_{m}$ be the $m$th smallest offered penalty offered by  primaries $i=2,\ldots,l$.
\end{defn}
Let, $A_i$ denote the event that at a time slot, primary 1's channel will be bought, when its channel state is $i$ and selects penalty $v-\epsilon_i$ and primary $2,\ldots,l$ selects penalty $v-\epsilon_j$, when its channel state is $j$, $j\in\{1,\ldots,n\}$. Let's recall the definition of $X_m$ (definition ~\ref{dm}). From the law of total probability,
\begin{align}\label{prob1}
\Pr(A_i)& =\Pr(A_i|X_m>v-\epsilon_i)\Pr(X_m>v-\epsilon_i)\nonumber\\& +\Pr(A_i|X_m=v-\epsilon_i)\Pr(X_m=v-\epsilon_i)\nonumber\\& +\Pr(A_i|X_m<v-\epsilon_i)\Pr(X_m<v-\epsilon_i)
\end{align}
Now, note that $\Pr(A_i|X_m)=1 $ if $X_m>v-\epsilon_i$ and $\Pr(A_i|X_m)=0$ if $X_m<v-\epsilon_i$. 

Note from (\ref{condn2}) that
\begin{align}\label{eqw}
\Pr(X_m>v-\epsilon_i)& =\sum_{j=0}^{m-1}\dbinom{l-1}{j}(\sum_{k=i}^{n}q_k)^{j}(1-\sum_{k=i}^{n}q_k)^{l-1-j}\nonumber\\
& =1-w_i
\end{align}
Thus, the first term of right hand side (r.h.s.) of(\ref{prob1}) is $1-w_i$. We will denote the second term of the r.h.s. of (\ref{prob1}) as $\beta_i$. Since the third term of the r.h.s. of (\ref{prob1}) is zero, hence,
\begin{align}\label{winprob}
\Pr(A_i)=1-w_i+\beta_i
\end{align}
Thus, if primary follows the strategy profile as described, then its total expected payoff at any stage of the game will be
\begin{align}\label{resne}
R_{SNE}=\sum_{j=1}^{n}q_j(f_j(v-\epsilon_j))(1-w_j+\beta_j)
\end{align}
Next observation will be used in proving Theorem~\ref{spne}.
\begin{obs}\label{obsmono}
If a primary selects a penalty which is strictly greater than $v-\epsilon_i$, then the probability of winning is $\leq 1-w_i$.
\end{obs}
\begin{proof}
Consider that a primary selects penalty $x>v-\epsilon_i$. Note that only if $X_m\geq x$, then the channel of the primary may be bought\footnote{when $X_m=x$ then there is a nonzero probability that the channel may not be bought}. Hence, 
\begin{align}\label{eqshow}
\Pr(\text{the channel of the primary is bought})\leq \Pr(X_m\geq x)
\end{align}
Since, $x>v-\epsilon_i$, thus
\begin{align}\label{boundprob}
& \Pr(X_m>v-\epsilon_i)=\Pr(X_m\geq x)+\Pr(x>X_m>v-\epsilon_i)\nonumber\\
& =>\Pr(X_m>v-\epsilon_i)\geq \Pr(X_m\geq x)
\end{align}
Hence, using (\ref{boundprob}) in (\ref{eqshow}), we obtain
\begin{align}\label{re1}
\Pr(\text{the channel of the primary is bought})\leq \Pr(X_m>v-\epsilon_i)
\end{align}
But from (\ref{eqw}), $1-w_i=\Pr(X_m>v-\epsilon_i)$, hence the result follows from (\ref{re1}).
\end{proof}

Now, we are ready to  prove Theorem~\ref{spne}. 

\textit{proof of Theorem~\ref{spne}}:
Fix any state $i$. We prove the theorem in two part. In Part 1, we show that when the game is at a stage where other primaries select penalty $v-\epsilon_j, j=1,\ldots,n$ at channel state $j$, then primary 1 does not have any unilateral deviation by selecting penalty different from $v-\epsilon_i$ for sufficiently high $\delta$.  In part 2, we show that if  the game is in a stage where all the other primaries play the unique NE strategy profile, then primary 1 also does not have any profitable unilateral deviation. This will ensure that $SP_R$ is a subgame perfect NE.\\
\textit{ Proof of Part 1}: First, we show that, deviating to lower penalty compared to $v-\epsilon_i$ is not profitable (case 1) and then we show that deviating to a higher penalty compared to $v-\epsilon_i$, is also not profitable (case 2) when other primaries select penalty $v-\epsilon_j, j=1,\ldots,n$ at channel state $j$.

\textit{Case 1}: First, suppose that primary 1 offers penalty, which is strictly less than $v-\epsilon_i$.

A primary can attain at most a payoff of $f_i(v-\epsilon_i)$ at this stage. After this deviation, all the primaries play the unique N.E. strategy. The payoff is given by $R_{NE}$. Now,  
\begin{align}\label{r9}
R_{NE}& =\sum_{j=1}^{n}q_j(p_j-c)\nonumber\\
& =\sum_{j=1}^{n}q_j(f_j(L_{j-1})-c)(1-w_j)\quad(\text{from }(\ref{n51}))\nonumber\\
& \leq \sum_{j=1}^{n}q_j(f_j(v-\epsilon_j)-c)(1-w_j)\quad(\text{from} (\ref{condn3}))
\end{align}
 If a primary deviates at stage $T$, then its expected payoff starting from stage $T$ would be at most
\begin{align}\label{r30}
& (1-\delta)[\delta^{T}(f_i(v-\epsilon_i)-c)+\sum_{t=T+1}^{\infty}\delta^t R_{NE}]\nonumber\\
& =\delta^{T}[(1-\delta)(f_i(v-\epsilon_i)-c)+\delta R_{NE}]
\end{align}
If a primary would not have deviated, then its expected payoff would have been
\begin{align}\label{r31}
& (1-\delta)[\delta^{T}(f_i(v-\epsilon_i)-c)(1-w_i+\beta_i)+\sum_{t=T+1}^{\infty}\delta^t R_{SNE}]\nonumber\\
&= \delta^{T}[(1-\delta)(f_i(v-\epsilon_i)-c)(1-w_i+\beta_i)+\delta R_{SNE}]
\end{align}
Hence, from  (\ref{r30}) and (\ref{r31}), the following condition must be satisfied for sub game perfect equilibrium
\begin{align}\label{r1}
& \delta^{T}[(1-\delta)(f_i(v-\epsilon_i)-c)+\delta R_{NE}]\leq\nonumber\\ & \delta^{T}[(1-\delta)(f_i(v-\epsilon_i)-c)(1-w_i+\beta_i)+\delta R_{SNE}] 
\end{align}
From (\ref{resne}) and (\ref{r9}), it is enough to satisfy the following inequality in order to satisfy inequality (\ref{r1})
\begin{align}\label{r10}
& (1-\delta)(f_i(v-\epsilon_i)-c)+\delta\sum_{j=1}^{n}q_j(f_j(v-\epsilon_j)-c)(1-w_j)\leq\nonumber\\& (1-\delta)(f_i(v-\epsilon_i)-c)(1-w_i+\beta_i)\nonumber\\& +\delta\sum_{j=1}^{n}q_j(f_j(v-\epsilon_j)-c)(1-w_j+\beta_j)
\end{align}
By simple algebraic manipulation in (\ref{r10}), we obtain
\begin{align}\label{r10a}
\delta \geq \dfrac{(f_i(v-\epsilon_i)-c)(w_i-\beta_i)}{(f_i(v-\epsilon_i)-c)(w_i-\beta_i)+\sum_{j=1}^{n}q_j(f_j(v-\epsilon_j)-c)\beta_j}
\end{align}
The proof is complete by observing that the right hand side of (\ref{r10a}) is strictly less than 1. Hence, if $\delta$ is greater than the following expression
\begin{equation*}
\max_{i\in{1,\ldots,n}}\dfrac{(f_i(v-\epsilon_i)-c)(w_i-\beta_i)}{(f_i(v-\epsilon_i)-c)(w_i-\beta_i)+\sum_{j=1}^{n}q_j(f_j(v-\epsilon_j)-c)\beta_j}
\end{equation*}
then, a primary will not have any profitable one shot deviation.

\textit{Case 2}: Now, suppose that primary 1 offers penalty which is strictly greater than $v-\epsilon_i$. 

Since primary 1 offers penalty strictly greater than  $v-\epsilon_i$ and $v-\epsilon_i\leq v$, thus, from observation ~\ref{obsmono}, primary 1 can at most attain a payoff of $(f_i(v)-c)(1-w_i)$ by offering penalty higher than $v-\epsilon_i$. After the deviation, all the primaries play the one-shot NE strategy profile.
If a primary deviates at stage $T$, then its expected payoff starting from stage $T$ would be
\begin{align}\label{r30a}
& (1-\delta)[\delta^{T}(f_i(v)-c)(1-w_i)+\sum_{t=T+1}^{\infty}\delta^t R_{NE}]\nonumber\\
& =\delta^{T}[(1-\delta)(f_i(v-\epsilon_i)-c)+\delta R_{NE}]
\end{align}
If a primary would not have deviated, then its expected payoff would be
\begin{align}\label{r31a}
& (1-\delta)[\delta^{T}(f_i(v-\epsilon_i)-c)(1-w_i+\beta_i)+\sum_{t=T+1}^{\infty}\delta^t R_{SNE}]\nonumber\\
&= \delta^{T}[(1-\delta)(f_i(v-\epsilon_i)-c)(1-w_i+\beta_i)+\delta R_{SNE}]
\end{align}
Hence, from (\ref{r30a}) and (\ref{r31a}), the following condition must be satisfied for sub game perfect equilibrium
\begin{align}\label{r1a}
& \delta^{T}[(1-\delta)(f_i(v)-c)(1-w_i)+\delta R_{NE}]\leq\nonumber\\ & \delta^{T}[(1-\delta)(f_i(v-\epsilon_i)-c)(1-w_i+\beta_i)+\delta R_{SNE}] 
\end{align}
Hence, from (\ref{resne}) satisfying the following condition will be enough for the strategy profile to be a SPNE.
\begin{align}\label{r11}
& (1-\delta)(f_i(v)-c)(1-w_i)+\delta R_{NE}\nonumber\\& \leq (1-\delta)(f_i(v-\epsilon_i)-c)(1-w_i+\beta_i)\nonumber\\& +\delta\sum_{j=1}^{n}q_j(f_j(v-\epsilon_j)-c)(1-w_j+\beta_j)
\end{align}
But, from (\ref{condn3})
\begin{equation*}
(f_i(v)-c)(1-w_i)\leq (f_i(v-\epsilon_i)-c)
\end{equation*}
Hence, it is enough to satisfy the following inequality in order to satisfy inequality (\ref{r11})
\begin{align}
& (1-\delta)(f_i(v-\epsilon_i)-c)+\delta\sum_{j=1}^{n}q_j(f_j(v-\epsilon_j)-c)(1-w_j)\leq \nonumber\\& (1-\delta)(f_i(v-\epsilon_i)-c)(1-w_i+\beta_i)\nonumber\\& +\delta\sum_{j=1}^{n}q_j(f_j(v-\epsilon_j)-c)(1-w_j+\beta_j)
\end{align}
which is exactly similar to (\ref{r10}). 
Hence, the rest of the proof will be similar to case 1.

\textit{proof of part 2}: If other primaries play the unique one-shot NE strategy, selecting the unique one-shot NE strategy is a best response for a primary. Hence, no primary has any profitable unilateral deviation. 

\end{document}